%% file: main.tex
\newif\ifsubmit    
\newif\ifllncs      
\newif\ifexabs      
\newif\ifblind 
\newif\ifacm

\ifllncs
\documentclass[runningheads,a4paper]{llncs}

\else \documentclass[letterpaper,11pt,pdfa]{article}
  \usepackage[in]{fullpage}
\fi
\usepackage{style}

\usepackage{physics}
\renewcommand{\epsilon}{\varepsilon}

\ifsubmit

\else

\fi

\title{A Modular Approach to Succinct Arguments for QMA}
\date{}
\author{James Bartusek\inst{1} \and Jiahui Liu\inst{2} \and Giulio Malavolta\inst{3}}
\ifllncs
\institute{Columbia University, New York, USA \and Fujitsu Research of America, Santa Clara, USA \and Bocconi University, Milan, Italy}
\else
\author{James Bartusek\thanks{Columbia University, New York, USA} \and Jiahui Liu\thanks{Fujitsu Research of America, Santa Clara, USA} \and Giulio Malavolta\thanks{Bocconi University, Milan, Italy}}
\fi

\begin{document}
\maketitle

\begin{abstract}
\input{abstract}
\end{abstract}

\ifllncs\else\newpage\fi
\ifllncs
\else
\tableofcontents
\newpage
\fi

\ifllncs
\input{intro}

\input{succinctifier}
\input{full_protocol}
\input{ack}
\else
\input{intro}
\input{prelim}

\input{remote_state_preparation}

\input{succinctifier}
\input{non_succinct_qma}

\input{full_protocol}
\input{ack}
\fi

\ifllncs\else\newpage\fi

\ifllncs\bibliographystyle{plain}\else\bibliographystyle{alpha}\fi
\bibliography{references, abbrev}

\ifllncs
\else\fi

\end{document}

%% file: abstract.tex
Succinct argument systems are of central importance to modern crytpography, enabling the efficient verification of computational claims. In the classical setting, Kilian (STOC 92) established that any probabilistically checkable proof for NP can be transformed into a succinct argument system for NP using only collision-resistant hash functions. In the quantum setting, recent works have established the feasibility of (classically-verifiable) succinct arguments for QMA, capturing statements that require \emph{quantum} proofs. However, known constructions all rely on the highly structured assumption of learning with errors (LWE), which stands in stark contrast with the unstructured assumptions that suffice for NP.

In this work, we develop a new framework that broadens the cryptographic foundations of succinct arguments for QMA. We assume the existence of (i) an oblivious state preparation (OSP) protocol, which in turn can be constructed from \emph{plain} trapdoor claw-free functions, and (ii) collapsing hash functions, the quantum analogue of collision-resistance. In particular, we obtain the first succinct, classically-verifiable argument system for QMA which does not rely on the hardness of LWE.

Our construction proceeds in two steps. First, we design a \emph{round-efficient} classically-verifiable argument system for QMA based only on the assumption of OSP. Second, we introduce a \emph{generalized communication compression compiler}, which, assuming collapsing hash functions, transforms any $T$-round interactive protocol into one in which the communication size is bounded by $T \cdot \poly(\secp)$ for some fixed $\poly$ independent of the original size of each message. Our compiler extends a quantum rigidity-based communication compression technique of Zhang (QCrypt 25), and may be of independent interest.

%% file: intro.tex
\section{Introduction}

The ability to convince a skeptic of the validity of a mathematical statement in time much less than it takes to \emph{prove} the statement is a widely celebrated result in cryptography \cite{Kilian}, inspiring decades of foundational and applied research on the topic of \emph{succinct argument systems}. Remarkably, Kilian's initial feasibility result utilized only lightweight ``unstructured'' cryptography.  In particular, Kilian showed how to compile any probabilistically checkable proof system (PCP) \cite{10.1145/278298.278306,10.1145/273865.273901} for NP into a succinct argument system for NP using collision-resistant hash functions. 

More recently, the growing interest and investment in quantum information processing has given rise to natural related questions: Can we efficiently verify statements that require \emph{quantum} proofs? And under what cryptographic assumptions? 

Intriguingly, a quantum analogue of Kilian's compiler was worked out by \cite{10.1145/3564246.3585198}, establishing that PCPs for QMA can be compiled into succinct argument systems for QMA, utilizing only unstructured cryptography (in fact, utilizing assumptions that are plausibly even weaker than one-way functions). Unfortunately, it is unknown whether PCPs for QMA exist! Indeed, fundamental properties of quantum information such as unclonability suggest that quantum PCPs may not exist at all (see e.g.\ \cite{aharonov2013quantumpcpconjecture}).

Despite this gap, the community has still managed to demonstrate the feasibility of succinct arguments for QMA \cite{BKL+,metger2024succinct,GKNV25,cryptoeprint:2023/1492} without requiring the existence of quantum PCPs. In fact, these protocols are even \emph{classically-verifiable}, allowing classical clients to verify the correctness of large delegated quantum computations. Unfortunately, the cryptographic structure required to instantiate these protocols is highly specific. As far as we currently understand, each relies specifically on the quantum hardness of learning with errors (LWE):

\begin{itemize}
    \item \cite{BKL+}, \cite{GKNV25}, and \cite{cryptoeprint:2023/1492} build on top of Mahadev's measurement protocol \cite{Mahadev2018}, which requires a trapdoor claw-free function (TCF) with the \emph{adaptive hard-core bit} property. The only known instantiation of this object relies on LWE. 
    \item \cite{metger2024succinct} take a different approach, based on the compiled non-local games paradigm \cite{kalai2023quantum}, but make crucial use of compact quantum homomorphic encryption \cite{MahadevQFHE}, which is also only known assuming the hardness of LWE.\ifllncs\else\footnote{We note that the work of \cite{10.1007/978-3-031-68382-4_8} constructs quantum FHE from post-quantum indistinguishability obfuscation (iO) plus assumptions on cryptographic group actions. However, we don't know to base post-quantum iO on standard cryptographic assumptions, and anyway the leading candidates we do have are lattice-based. }\fi
\end{itemize}

This state of affairs leaves us in an undesirable situation that contrasts sharply with the classical setting: While classical statements can be succintly argued given \emph{any} collision-resistant hash function, we only know succinct arguments for quantumly-provable statements from the conjectured quantum hardness of LWE. 

\subsection{Results}

In this work, we introduce a novel framework for building (classically-verifiable) succinct arguments for QMA, and leverage our approach to diversify the sources of cryptographic hardness under which succinct arguments for QMA exist. In more detail, we utilize the following objects.

\begin{itemize}
    \item \textbf{Oblivious state preparation.} OSP, recently introduced by \cite{bartusek2025power}, is an interactive protocol between a classical client and a quantum server that prepares a BB84 state on the server's side without revealing the choice of basis to the server. It can be constructed from any ``plain'' trapdoor claw-free function (without assuming extra properties such as dual-mode or the adaptive hard-core bit), and is known from LWE \cite{BCMVV}, cryptographic group actions \cite{AMR22}, and the lattice isomorphism problem \cite{cryptoeprint:2025/993}. 
    \item \textbf{Collapsing hash functions.} Collapsing is the post-quantum analogue of collision-resistance, and is a standard ``unstructured'' hardness assumption that holds in the quantum random oracle model \cite{unruh2016computationally}, or from a variety of computational assumptions \cite{zhandry2022new}. Collapsing hash functions are known to imply post-quantum succinct arguments for NP (by adapting the proof of Kilian's protocol to the post-quantum setting) \cite{CMSZ}, though it remains a fascinating open problem whether collapsing hash functions alone (or other sources of unstructured hardness) suffice to build succinct arguments for QMA.
\end{itemize}

Our main result can be summarized as follows.

\begin{theorem}[Informal]\label{thm:informal} There exists a fixed polynomial $\poly$ such that, assuming OSP and collapsing hash functions, there exist a classically-verifiable argument system for QMA with communication size $\poly(\secp)$ and verifier complexity $|x| \cdot \poly(\secp)$, where $|x|$ is the instance size.\ifllncs\else\footnote{We note that some prior work (see e.g.\ definitions in \cite{BKL+,metger2024succinct}) actually obtain a client complexity with \emph{additive} overhead $\tilde{O}(|x|) + \poly(\secp)$ in the instance size, whereas our overhead is multiplicative. However, our protocol still satisfies the fundamental succinctness property that $\poly$ is fixed and independent of the witness size.}\fi
\end{theorem}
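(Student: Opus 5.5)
The plan follows the two-step structure announced in the abstract: first build a round-efficient, classically-verifiable argument for QMA from OSP alone, then apply a generalized communication compression compiler based on collapsing hash functions. Both components must be designed so that the round complexity $T$ of the first protocol is a fixed polynomial in $\secp$, independent of the witness size; otherwise the compiler's bound $T\cdot\poly(\secp)$ is meaningless.

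\textbf{Step 1: a round-efficient argument from OSP.} Start from a standard QMA verification scheme in which the honest prover holds the witness together with Hamiltonian-related state. Make it classically verifiable by having the verifier use OSP to prepare BB84 states at the prover, playing the role of the quantum messages or measurement bases in a Fitzsimons--Kashefi or Broadbent-style verification. This avoids the adaptive hard-core bit that Mahadev's measurement protocol requires.

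The required BB84 states can be prepared in parallel, so the number of rounds is a fixed constant or $\poly(\secp)$. The total communication, however, may scale with the witness length and $|x|$. Soundness would reduce to the basis-hiding property of OSP, possibly through a rigidity or self-testing argument showing that a cheating prover effectively acts on honestly prepared BB84 states. Completeness gap amplification is done in parallel, so the round count stays fixed.

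\textbf{Step 2: communication compression.} Extend Zhang's rigidity-based technique. Each long classical message is replaced by a collapsing-hash commitment followed by a succinct argument of its consistency. The quantum part uses the rigidity of the compressed interaction, so that the long quantum-derived classical messages need never be sent in full. The verifier's checks on the full transcript are then delegated through a Kilian-style argument built on collapsing hashes, which is sound post-quantumly by \cite{CMSZ}.

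Per round, this leaves $\poly(\secp)$ communication, for a total of $T\cdot\poly(\secp)$. The verifier needs only to read $x$ and run polynomial-in-$\secp$ checks on each round, giving complexity $|x|\cdot\poly(\secp)$.

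\textbf{Main obstacle.} The hardest step is proving soundness of the compiler against quantum provers. Rewinding-based extraction of the committed long messages from the collapsing commitments has to be composed with the OSP soundness reduction without disturbing the prover's quantum state. I would first handle this with a hybrid argument in which each committed message is measured in turn, relying on collapsing to argue that measuring the committed value is undetectable. The extracted transcript is then fed into the Step 1 soundness proof, with care taken to bound the loss across all $T$ rounds.
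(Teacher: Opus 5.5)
Your two-step skeleton matches the paper's, but Step~2 is missing its central idea and, as written, cannot give succinctness. Committing to messages with a collapsing hash and proving consistency with succinct state-preserving AoKs only compresses the \emph{prover's} messages. In a round-efficient protocol of this kind, the \emph{verifier's} messages are just as long. In the paper's protocol, the $n$-bit questions $a,b$, the Pauli strings $w$, and all OSP/blind-delegation messages scale with the witness size. They are also functions of private verifier randomness (Alice's question, the OSP bases) that must stay hidden. They cannot be replaced by a short seed. Classically, there is no succinct simulation-secure way for the prover to learn $f_i(s,m_1,\dots,m_i)$ without learning $s$ \cite{HW}. So with your compiler, communication and verifier time stay $\poly(|x|)$ instead of $\poly(\secp)$ and $|x|\cdot\poly(\secp)$. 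The paper's fix is committed secure function sampling (\cref{def:ComSFS}), which works as follows. The verifier derandomizes via a seed $s$ and sends a claw state. The prover coherently computes on the claw a garbled circuit for $f_i(\cdot,m_1,\dots,m_i)$ from its succinctly committed past messages. Computational- and Hadamard-basis tests backed by succinct AoKs give a rigidity argument that yields a simulator (\cref{lmm:beta-security}). The verifier then sends the input labels for $s$. Finally, OSP $\Rightarrow$ VSP for BB84 $\Rightarrow$ VSP for claw states removes the quantum message. Your appeal to ``Zhang's rigidity technique'' is never tied to a mechanism and is aimed at the wrong messages. You also never say how the verifier stays classical once claw states are needed.

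Step~1 also fails as stated. OSP only hides the basis computationally; it does not certify that the prover holds BB84 states at all. So trap-based soundness of Fitzsimons--Kashefi or Broadbent-style verification does not ``reduce to basis-hiding.'' Making the states trusted requires VSP, and the available route from OSP to VSP (Zhang's amplification plus sequential composition) uses rounds growing with the number of qubits, i.e.\ with the witness size. That is exactly what you must avoid. Likewise, parallel repetition does not in general amplify soundness of multi-round private-coin computationally sound arguments; the paper instead repeats the \emph{compressed} protocol sequentially $\secp$ times. For Step~1, the paper takes the constant-gap nonlocal game of \cite{metger2024succinct} (Pauli braiding, mixed-vs-pure, Hamiltonian tests). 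Its rigidity analysis uses only computational indistinguishability of Alice's post-measurement states. The paper compiles this game with the generalized KLVY compiler, using blind classical delegation from OSP. That delegation takes $\poly(D,\secp)$ rounds for an Alice circuit of depth $D=\mathrm{polylog}(|x|)$, so the round count is a fixed $\poly(\secp)$.
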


We note that, like previous work on classically-verifiable succinct arguments for QMA \cite{BKL+,metger2024succinct,GKNV25,cryptoeprint:2023/1492}, our result is interesting even when applied to BQP languages, establishing that (bounded-error) polynomial-time quantum computations can be classically verified in time much less than the time it takes to execute the quantum computation. 

Our main result can also be viewed in light of recent work \cite{bartusek2025power,BKM+25,ZhaRSP} establishing that either OSP or ``plain'' trapdoor claw-free functions alone are enough to obtain (non-succinct) classical verification of quantum computation. We show that making use of collapsing hash functions along with OSP suffices to realize the additional property of \emph{succinctness}.

Finally, it is straightforward to extend our methods to obtain a succinct protocol for \emph{quantum sampling problems} \cite{EC:CLLW}, underscoring the generality of our approach. In particular, we establish the following.

\begin{theorem}[Informal]
    There exists a fixed polynomial $\poly$ such that, assuming LWE, for any inverse-polynomial $\epsilon$, there exists a classically-verifiable argument for quantum sampling with $\epsilon$-soundness, communication size $\poly(\secp,1/\epsilon)$, and verifier complexity $|Q| \cdot \poly(\secp,1/\epsilon)$, where $|Q|$ is the (potentially succinct) description of the quantum sampler.
\end{theorem}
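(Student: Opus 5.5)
The plan is to reuse the two-step template behind Theorem~\ref{thm:informal}, replacing the QMA verification protocol with a classically-verifiable protocol for quantum sampling. As the base protocol I will use the LWE-based sampling verification of \cite{EC:CLLW}, or more generally any round-efficient classically-verifiable sampling protocol obtained from Mahadev-style measurement protocols together with a Hamiltonian-based encoding of the sampler $Q$. This base protocol has $\epsilon$-soundness, meaning the verifier's accepted output distribution is $\epsilon$-close to the true distribution of $Q$ except with small probability. It has a bounded number of rounds $T = \poly(\secp,1/\epsilon)$ (ideally $T$ constant or $\poly(\secp)$ after parallel repetition). Its total communication, however, is $|Q|\cdot\poly(\secp,1/\epsilon)$, so it is not succinct.

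\textbf{Step 1.} Verify that the base protocol fits the interface required by the generalized communication compression compiler from the abstract. The verifier's messages must be public-coin, or pseudorandomly derivable from a short seed, and the prover's messages must be classical strings whose consistency the verifier checks locally. If the \cite{EC:CLLW} verifier sends structured keys, for example TCF public keys of size $|Q|\cdot\poly(\secp)$, then the verifier message must also be compressed. I would handle this by having the verifier send only a short seed from which the keys are expanded. Alternatively, I would invoke the OSP-based remote state preparation, which only needs short per-qubit keys that the compiler already treats.

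\textbf{Step 2.} Apply the compiler, assuming collapsing hash functions, which follow from LWE. This yields communication $T\cdot\poly(\secp)$ for a fixed polynomial, i.e.\ $\poly(\secp,1/\epsilon)$. The verifier's work is dominated by recomputing its local checks on short openings, together with reading the description of $Q$, which gives $|Q|\cdot\poly(\secp,1/\epsilon)$.

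\textbf{Step 3.} Argue that the compiler preserves $\epsilon$-soundness in the sampling sense and not just acceptance probability. This is the main obstacle, since the compiler's guarantee is phrased for decision soundness. I would first re-examine the compiler's proof to confirm that it gives a simulation or extraction statement: any compiled cheating prover yields a cheating prover for the uncompressed protocol whose joint (accept, output) distribution is negligibly close. Such a statement would transfer the sampling soundness directly. If the compiler only bounds acceptance, I would fall back on the standard reduction of \cite{EC:CLLW}. That reduction turns sampling soundness into soundness for a family of decision tests on the output, for example via the cut-and-choose structure where a random subset of the repetitions is tested. I would then apply decision-soundness of the compiled protocol to each test, losing an additive $\poly(1/\epsilon)$ factor that is absorbed into the parameters.
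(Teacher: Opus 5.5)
Your top-level plan, running the communication compression compiler of Theorem~\ref{thm:compcom} on the constant-round LWE-based sampling protocol of \cite{EC:CLLW}, is the paper's. However, Step~1 misreads what the compiler needs, and the fix you propose there would break soundness.

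The compiler has no public-coin requirement. It writes the verifier as a \emph{private} seed $s$ plus deterministic next-message functions $f_i(s,m_1,\dots,m_i)$. Each verifier message reaches the prover through a chosen-input committed SFS, whose communication is $\poly(\secp,1/\varepsilon)$ regardless of message length and whose simulation security keeps $s$ hidden. The seed is revealed only at the very end, after all prover messages are committed. This is exactly what compresses the $|Q|\cdot\poly(\secp)$-size TCF keys of \cite{EC:CLLW}, with no change to the base protocol.

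Sending the prover a short seed from which the keys are expanded instead hands it the key-generation randomness. That means it learns the trapdoors and the injective versus 2-to-1 mode choices that encode the hidden measurement bases, which destroys the soundness of the underlying Mahadev-style measurement protocol. Your OSP-based alternative swaps out the base protocol without any argument that it still verifies sampling. Its total key material also still grows with $|Q|$, so by itself it compresses nothing.

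You also omit that the compiled verifier sends claw states. For classical verifiability you need the ``moreover'' part of Theorem~\ref{thm:compcom}: OSP, which follows from LWE, yields VSP for claw states via Theorem~\ref{thm:claw-state-RSP}.

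Step~3's ``main obstacle'' is already resolved by the statement of Theorem~\ref{thm:compcom}. Its soundness is simulation-based on the verifier's entire output $z\in Z\cup\{\bot\}$, not on acceptance probability. So every compiled adversary yields an adversary against \cite{EC:CLLW} whose output is $O(\varepsilon)$-indistinguishable. Combine this with the soundness of \cite{EC:CLLW} (output close to $Z[Q,p]$ for some $p$) and set $\varepsilon=\Theta(\epsilon)$; the triangle inequality gives $O(\epsilon)$-soundness. Efficiency is read off directly from the compiler's bounds $r\cdot p(\secp,1/\varepsilon)$ and $r\cdot|\chi|\cdot p(\secp,1/\varepsilon)$ with $r=O(1)$.

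The loss is $O(\varepsilon)$, not negligible as you hoped, which is why only inverse-polynomial soundness is claimed. The cut-and-choose fallback to decision tests is unnecessary.
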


We note that this statement requires assuming the hardness of LWE but, to the best of our knowledge, succinct arguments for sampling problems were not known from any assumption prior to this work. {We also mention that, as a byproduct of our techniques, we obtain a general compiler that allows us to compress the communication and the verifier-complexity of \emph{any} quantum interactive protocol to some fixed polynomial in the security parameter, regardless of the complexity of the original interactive protocol.}

\section{Techniques}

Our approach is quite modular, and can be broken into two central steps.

\begin{enumerate}
    \item Step I: First, we design a $\poly(\secp)$-round\footnote{Where the polynomial is independent of the QMA witness size.} classically-verifiable protocol for QMA. While there already exist even \emph{constant}-round classically-verifiable protocols in the literature, we obtain for the first time a $\poly(\secp)$-round protocol based only on the assumption of OSP.
    
    \item Step II: Second, we develop a ``generalized communication compression compiler'', and apply it to our $\poly(\secp)$-round protocol in order to obtain a protocol with $\poly(\secp)$ \emph{total bits of communication}. Our compiler, which we believe is of independent interest, is inspired by and extends a recent work of Zhang \cite{cryptoeprint:2023/1492}.
\end{enumerate}
In more detail, we establish the following result. Fix any $\varepsilon = 1/\poly(\secp)$. Then assuming collapsing hash functions, \emph{any} interactive protocol that takes place between a quantum server and a classical client with single-bit output (accept or reject) can be compiled into a protocol with communication that grows only with the original round complexity and a fixed $\poly(\secp, 1/\varepsilon)$. The probability that a malicious server can convince the verifier to accept will only increase by at most $\varepsilon$. 
    
This ability to generically shrink the communication relies crucially on quantum ``rigidity'' techniques, and introduces the need for the client to prepare and send a (bounded polynomial) number of claw states $\ket{x_0} + \ket{x_1}$. Additionally assuming OSP (again), we are able to replace this quantum communication with classical communication, restoring the classical verifiability of our final succinct argument for QMA.

In what follows, we provide more detail about each of these main steps.

\subsection{Round-Efficient Classical Verification for QMA}

\paragraph{Background.} Our first, intermediate, goal en route to succinct arguments for QMA is the following: We aim to construct a classically-verifiable $\poly(\secp)$-round protocol, but where the amount of communication in each round may be non-succinct. That is, we focus first on optimizing for \emph{round} complexity.

By now, there exist a few approaches to classical verification of quantum computation, so we begin by reviewing some of these approaches, and explain why known results fall short of our goals.

\begin{itemize}
    \item The original approach of \cite{Mahadev2018} is centered around the idea of a ``measurement protocol'', and in fact it is known how to leverage this approach to obtain a \emph{constant}-round protocol \cite{ACGH}. However, the underlying measurement protocol requires the use of a TCF with the adaptive hard-core bit property, which we aim to avoid in order to diversify the set of assumptions from which our approach can be instantiated.
    \item More recently, the framework of compiled non-local games \cite{kalai2023quantum} has been successfully utilized to obtain classical argument systems for QMA \cite{natarajan2023bounding}, even from more-general assumptions such as OSP \cite{bartusek2025power} or plain TCFs \cite{BKM+25}. Unfortunately, both of these instantiations suffer from round complexity that depends on the size of the QMA witness. This is due to the fact that they compile a non-local game with very small completeness-soundness gap, and thus inherit a prohibitively large round complexity blow-up from sequential repetition.
\end{itemize}

\paragraph{Our protocol.} We follow the non-local games approach, but begin with a game that already enjoys \emph{constant} completeness-soundness gap, so that sequential repetition will not be too costly.

In more detail, we begin with a simplified version of the two-round two-prover non-local game from \cite{metger2024succinct}, where both questions and answers may grow with the size of the witness.\ifllncs\else\footnote{In particular, we do not require their ``question-reduction'' machinery of sampling questions from $\epsilon$-biased sets and subsampling the Hamiltonian terms.}\fi However, we require that this game satisfies two particular properties.
    \begin{enumerate}[i)]
        \item Alice's operation has low depth. That is, the quantum circuit that operates on the quantum witness, her halves of EPR pairs, and her question, and outputs her answer, has depth that is bounded by some fixed polynomial $\poly(\secp)$ of the security parameter. We observe that this holds for (our simplified version of) \cite{metger2024succinct}'s protocol.
        \item The game is sound against \emph{computationally} non-local strategies \cite{bartusek2025power}. That is, soundness holds against two-party strategies that may share the same Hilbert space but where the left-over state after Alice's operation is computationally independent of her question. We confirm that this holds for \cite{metger2024succinct}'s protocol, as their proof in the ``compiled'' setting only makes use of the computational indistinguishability of Alice's questions.
    \end{enumerate}

We then compile the two-prover game into a single-prover protocol using a ``generalized'' KLVY compiler \cite{kalai2023quantum,bartusek2025power,BKM+25}, which performs Alice's operation under a \emph{blind} classical delegation of quantum computation protocol, followed by Bob's operation in the clear. Blind classical delegation of quantum computation can be constructed from OSP \cite{bartusek2025power}, where the number of rounds grows only with the \emph{depth} of the delegated operation. Thus, our compiled protocol still has a strict $\poly(\secp)$ number of rounds (though the communication in each round may grow with the size of the witness).\ifllncs\else We refer to the reader to \cref{def:compiled-game} for further details.\fi

\subsection{A Generic Approach to Communication Compression}

At the heart of our approach to succinctness is an inherently quantum technique for constructing simulation-based secure computation with succinct communication. The basic idea was pioneered by Zhang \cite{cryptoeprint:2023/1492}, and we extend his work to obtain our general compiler. 

\paragraph{Background.} \cite{cryptoeprint:2023/1492} addresses the following setting. Let $f: \{0,1\}^n \to \{0,1\}^m$ be a function with $m \gg n$, and suppose that a client wants to send $f(x)$ to a server for a random $x \gets \{0,1\}^n$, without revealing the value of $x$. In the fully classical setting, it is provably \emph{impossible} to obtain a succinct (i.e.\ much less than $m$) communication protocol for this functionality with simulation-security \cite{HW}. Interestingly, \cite{cryptoeprint:2023/1492} shows that quantum ``rigidity'' techniques can be applied to obtain a protocol with succinct \emph{quantum} communication.

The basic, simplified, idea is as follows. The client first samples random $x_0,x_1,r_0,r_1 \gets \{0,1\}^n$ and sends the claw state \[\frac{1}{\sqrt{2}}(\ket{0,x_0,r_0} + \ket{1,x_1,r_1})\] to the server. An honest server will first compute the function $f$ on the input register in superposition, yielding \[\frac{1}{\sqrt{2}}(\ket{0,x_0,r_0,f(x_0)} + \ket{1,x_1,r_1,f(x_1)}),\] and then ``delete'' the input by measuring the ``$x$'' register in the Hadamard basis to obtain a string $c \in \{0,1\}^n$. Then, the client selects one of three options.

\begin{itemize}
    \item \textbf{Standard basis test:} The server measures the remaining registers in the standard basis to obtain the (large) string $(b,r_b,f(x_b))$. It then engages in a succinct ``commit-reveal-prove'' protocol (based on a collapsing hash function) with the client to argue that this information was honestly generated. 
    \item \textbf{Hadamard basis test:} The server measures the remaining registers in the Hadamard basis to obtain the (large) string $d$. It then engages in a similar succinct ``commit-reveal-prove'' with the client to prove honest behavior.
    \item \textbf{Output generation:} The server measures the first two remaining registers in the standard basis to obtain the (short) string $(b,r_b)$, which it sends back to the client. The server's state collapses to $f(x_b)$, while the client knows $x_b$.
\end{itemize}

The standard and Hadamard basis tests allow \cite{cryptoeprint:2023/1492} to make a rigidity argument on any (successful) server's system, enabling a quantum simulation strategy that has no classical counterpart. Using garbled circuits, \cite{cryptoeprint:2023/1492} then shows how to extend the above ``random-input'' functionality to a ``chosen-input'' functionality, where the client is able to choose any value of $x$, and delegate $f(x)$ to the server.

\paragraph{Committed secure function evaluation.} In this work, we adapt the basic idea outlined above to the following setting. Consider any protocol that takes place between a (potentially quantum) server and a classical client, where the goal of the server is to try to convince the client to accept the interaction. For example, any classical interactive argument for QMA fits this description. We aim to replace each round of communication in such a protocol with a fixed $\poly(\secp)$-communication sub-protocol, regardless of how many bits were communicated in the round. 

Prior work \cite{BKL+} has shown how to generically compress the server's answers using succinct commitments to these messages followed by (state-preserving) arguments of knowledge \cite{LMS}. In order to also compress the client's messages, we propose the following high-level strategy.

\begin{itemize}
    \item The client samples random coins $s \gets \{0,1\}^\secp$ that it will use throughout the protocol. Note that this randomness can be expanded using a PRG, and thus the remaining client's strategy can be taken to be completely deterministic. In particular, the client's \emph{next message} function in each round $i$ can be expressed as $f_i(s,m_1,\dots,m_i)$, where $m_1,\dots,m_i$ are the prover's previous messages.
    \item Each of the prover messages are replaced with \emph{succinct commitments} $\Com(m)$.
    \item Each of the verifier messages $f_i(s,m_1,\dots,m_{i-1})$ are computed via a succinct ``committed secure function evaluation'' protocol with public inputs $\Com(m_1),\dots,\Com(m_{i-1})$, prover inputs $m_1,\dots,m_i$ and (private) verifier input $s$. This protocol is an extension of \cite{cryptoeprint:2023/1492}'s secure function evaluation protocol described above, where we additionally allow for \emph{large} server inputs that have previously been \emph{succinctly committed}. 
\end{itemize}

As one of our main technical contributions, we develop the required succinct committed secure function evaluation protocol, by extending the protocol of \cite{cryptoeprint:2023/1492} to handle (committed) prover inputs, and provide a new self-contained proof of security. We refer to \cref{sec:compression_compiler} for further details.

\paragraph{Restoring classical communication.} Finally, we replace the quantum claw-state communication with classical communication, again relying on the assumption of OSP. We do this via two steps adapted from the prior work of \cite{ZhaRSP}\ifllncs\else, and worked out in \cref{sec:rspv}\fi. 

\begin{enumerate}
    \item First, \cite{ZhaRSP} argues that (plain) trapdoor claw-free functions imply a notion of \emph{verifiable} state preparation (VSP) for BB84 states, satisfying a simulation-based notion of security. This strategy can be adapting to showing the OSP implies VSP for BB84 states.
    \item Second, \cite{ZhaRSP} shows that VSP for BB84 states implies VSP for claw states, which is what we require to instantiate the above with only classical communication.
\end{enumerate}


%% file: succinctifier.tex
\section{Generalized Communication Compression Compiler}
\label{sec:compression_compiler}

In the following we present our compiler to generically reduce the communication complexity of interactive protocols between QPT provers and PPT verifiers. For an interactive protocol, we denote its communication complexity as the number of bits exchanged during the interaction. In this section, we will prove the following main theorem.

\begin{theorem}[Communication Compression]\label{thm:compcom}
    Let $z \gets \langle P(1^\secp, \chi),V(1^\secp ,\chi)\rangle$ be an interactive protocol between a QPT prover $P$ and PPT verifier $V$ with public input $\chi \in X$ (which, in particular, completely determines the set of honest computations to be performed throughout the protocol), verifier output $z \in Z \cup \{\bot\}$, where output $\bot$ indicates an abort, and $r= r(\lambda)$ rounds. Assuming the existence of succinct collision-resistant commitments\ifllncs\else~(\cref{sec:hash-functions}) \fi and state-preserving succinct AoKs\ifllncs\else~(\cref{def:AoK})\fi, there exists an interactive protocol $z \gets \langle \Tilde{P}(1^\secp, \chi, \varepsilon),\Tilde{V}(1^\secp ,\chi, \varepsilon)\rangle$ where $z \in Z \cup \{\bot\}$ such that the following holds.
    \begin{itemize}
        \item \textbf{Completeness.} For any $\chi \in X$ and any $\varepsilon = 1/\poly(\secp)$, \[\mathsf{TV}\left(\left\{z \gets \langle P(1^\secp,\chi),V(1^\secp,\chi)\rangle\right\},\left\{z \gets \langle \tilde P(1^\secp,\chi, \varepsilon),\tilde V(1^\secp,\chi ,\varepsilon)\rangle\right\}\right) = \negl(\secp).\]
        \item \textbf{Soundness.}  For any $\chi \in X$, $\varepsilon = 1/\poly(\secp)$, and QPT adversary $\{\tilde\Adv_\secp\}_{\secp \in \bbN}$, there exists a QPT adversary $\{\Adv_\secp\}_{\secp \in \bbN}$ such that for any QPT distinguisher $\{\D_\secp\}_{\secp \in \bbN}$,
        \[\abs{\Pr\left[1 \gets \D_\secp \left(\langle \cA_\secp,V(1^\secp,\chi)\rangle\right)\right]  - \Pr\left[1 \gets \D_\secp\left(\langle \tilde\Adv_\secp,\tilde V(1^\secp,\chi, \varepsilon)\rangle\right)\right]}\leq O(\varepsilon).\]
        \item \textbf{Efficiency.} There exists a fixed polynomial $p(\secp) = \poly(\secp)$ such that the communication complexity of the protocol is bounded by 
        $r \cdot p(\secp, 1/\varepsilon)$ and the runtime of the verifier is bounded by $r \cdot |\chi| \cdot p(\secp, 1/\varepsilon)$. 
    \end{itemize}

    
    Moreover, additionally assuming OSP\ifllncs\else~(\cref{def:OSP})\fi, there exists a compressed protocol where the verifier $\tilde{V}$ is completely classical.  
\end{theorem}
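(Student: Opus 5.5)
The construction follows the outline in the Techniques section. We first make the verifier deterministic given a short seed: $V$ samples $s \gets \{0,1\}^\secp$ and expands it with a PRG. Its $i$-th message is then $f_i(s, m_1,\dots,m_{i-1})$, where the $m_j$ are the prover's previous messages. By PRG security this changes the distinguisher's view (for both honest and malicious provers) by only $\negl(\secp)$, since the transcript distinguisher is QPT.

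In the compressed protocol $\tilde V$, each prover message $m_j$ is replaced by a succinct collision-resistant (collapsing) commitment $c_j = \Com(m_j)$. Each verifier message is produced by a committed secure function evaluation (CSFE) subprotocol. Its public inputs are $c_1,\dots,c_{i-1}$, the prover's private inputs are openings of $m_1,\dots,m_{i-1}$, and the verifier's private input is $s$; its output, delivered to the prover, is $f_i(s,m_1,\dots,m_{i-1})$. At the end, $\tilde V$ must compute its output $z$ from $s$ and the committed $m_j$. It does this through one last evaluation in which the prover gives a state-preserving succinct AoK that some committed opening yields the claimed decision. Efficiency then holds because each CSFE instance and each AoK costs a fixed $p(\secp,1/\varepsilon)$ in communication. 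Only the verifier needs to read $\chi$, which gives the $r\cdot|\chi|\cdot p$ runtime.

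The CSFE is built by extending Zhang's protocol. The verifier encodes $s$ through a garbled circuit for $g_i(s,\cdot)=f_i(s,\cdot)$ combined with an opening check against $c_1,\dots,c_{i-1}$. Wire labels for $s$ are sent classically, since they are short. To keep the garbled circuit itself out of the communication, the verifier sends claw states $\frac{1}{\sqrt 2}(\ket{0,x_0,r_0}+\ket{1,x_1,r_1})$ for a short random seed, and the large garbled object is generated by the prover in superposition from that seed. Standard-basis tests and Hadamard-basis tests, each checked by commit-and-prove with collapsing hashes plus succinct AoKs, are chosen with some probability, and output generation is chosen otherwise. Repeating the tests $\poly(1/\varepsilon)$ times ensures that a prover passing with probability noticeably above $\varepsilon$ behaves, up to an isometry, like the honest prover on the claw registers. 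This is the rigidity step.

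Soundness is argued by building $\Adv$ from $\tilde\Adv$ through a hybrid sequence over the $r$ rounds. For the prover messages, $\Adv$ runs $\tilde\Adv$ internally and uses the state-preserving extractor of the AoK to recover $m_j$ consistent with $c_j$; it then forwards $m_j$ to the real $V$. Collapsing binding ensures the extracted value is the unique value later used inside CSFE. For the verifier messages, $\Adv$ receives $f_i(s,\cdot)$ from $V$ and must simulate $\tilde\Adv$'s view of the CSFE without knowing $s$. The simulator comes from rigidity: conditioned on passing the tests, the post-measurement state is close to the honest state, and the output-generation branch can be simulated given only $f_i(s,m)$, using garbled-circuit simulation security. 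Each round incurs error $O(\varepsilon/r)$ by choosing the test parameters as $\poly(r/\varepsilon)$; since $r=\poly(\secp)$, this is absorbed into $p$ and the total error is $O(\varepsilon)$. Completeness follows by inspection, with negligible error from the PRG and the AoK.

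The main obstacle is the rigidity-based simulation for CSFE when the prover has large committed inputs. Zhang's analysis must be redone while $\tilde\Adv$ holds state entangled with earlier rounds, and the extractor of one round must be composed with the simulator of the next without disturbing that state. State-preserving extraction is exactly what allows this. I would first prove a single-round lemma (real CSFE $\approx_{\varepsilon}$ ideal, for arbitrary side registers), then chain it over the rounds. For the final clause, each claw-state transmission is replaced by verifiable state preparation of claw states obtained from OSP via Section~\ref{sec:rspv}. Its simulation-based security lets us replace this step with ideal claw states at $\negl$ cost, and the resulting verifier is classical.
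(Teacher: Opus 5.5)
Your proposal follows essentially the same route as the paper. It uses a seed-based deterministic verifier and commitments plus state-preserving AoKs for the prover's messages. Verifier messages come from a random-input committed SFS built on claw states, with computational and Hadamard commit-and-prove tests, lifted to chosen-input by having the prover garble $f_i(\cdot,m)$ from the claw seed while the verifier sends labels for $s$. Soundness goes through round-by-round hybrids that replace each AoK by its extractor and each SFS by its simulator, followed by a final AoK on the output (which, as in the paper, requires revealing $s$). VSP for claw states from OSP then makes the verifier classical.

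The only real bookkeeping difference is your per-round error of $\varepsilon/r$. This makes the per-round cost grow with $r$, so it does not literally give a fixed $p$ in the $r\cdot p(\lambda,1/\varepsilon)$ bound. The paper instead fixes the per-round parameter at $O(\varepsilon^2)$, which keeps $p$ independent of $r$, but its union bound then tacitly needs $r\varepsilon=O(1)$.
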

We note that the notion of soundness as defined above immediately implies that any bound that one can establish for the non-compiled protocol immediately carries over to the compiled protocol, up to arbitrarily small inverse-polynomial factors.

\subsection{Definition: Committed Secure Function Sampling}

A Secure Function Sampling (SFS) protocol \cite{cryptoeprint:2023/1492} allows a prover and a verifier to interactively sample a random input/output pair for a given function. Here we define an extension of this primitive, where the function takes an additional input that corresponds to the pre-image of a commitment. We assume that the commitment is \emph{deterministic} and satisfies the notion of collision-resistance\ifllncs\else~(see \cref{sec:hash-functions})\fi.

\begin{definition}[Committed SFS]\label{def:ComSFS}
Committed Secure Function Sampling (SFS) for a function $f:\{0,1\}^{n} \times\{0,1\}^{\ell} \to\{0,1\}^m$ and a commitment $\mathsf{Com}$ is a protocol that takes place between a PPT verifier $V$ and a QPT prover $P$:
 \[(z,x) \gets \mathsf{cSFS}\langle P(1^\secp,\epsilon, y),V(1^\secp,\epsilon, c)\rangle,\] where $\epsilon$ is an error parameter, $z$ is the prover's output, and $x\in \{0,1\}^n \cup \{\bot\}$ is the verifier's output. It should satisfy the following properties.
\begin{itemize}
\item \textbf{Correctness.} For any $\epsilon = 1/\poly(\secp)$ and all pairs $(y, c = \mathsf{Com}(y))$, \[\TD\left(\mathsf{cSFS}\langle P(1^\secp,\epsilon, y),V(1^\secp,\epsilon,c)\rangle,\frac{1}{2^n}\sum_{x \in \{0,1\}^n}\ketbra{x} \otimes \ketbra{f(x,y)}\right) = \negl(\secp).\]
\item \textbf{Security.} For any $\epsilon = 1/\poly(\secp)$, any QPT environment $\cE = \{\cE_\secp\}_{\secp \in \bbN}$, and any QPT adversarial prover $\cA = \{\cA_\secp\}_{\secp \in \bbN}$, there exists a QPT simulator $\Sim = \{\Sim_\secp\}_{\secp \in \bbN}$ such that for any QPT distinguisher $\cD = \{\cD_\secp\}_{\secp \in \bbN}$, the following holds. Define the following games.\\

        \noindent\underline{$\ExpREAL[\cE,\cA,\epsilon]$}
        \begin{itemize}
            \item Run $y,\rho_{A,E} \gets \cE(\Com)$.
            \item Run $(A,x) \gets \mathsf{cSFS}\langle\cA(A), V(1^\secp,\epsilon, \mathsf{Com}(y))\rangle$.
            \item Output $(x,A,E)$.
        \end{itemize}

        \noindent\underline{$\ExpIDEAL[\cE,\Sim,\epsilon]$}
        \begin{itemize}
            \item Run $y,\rho_{A,E} \gets \cE(\Com)$.
            \item Run $(b,A) \gets \Sim(1^\secp,\epsilon,A)$.
            \item If $b = 1$, sample $x \gets X$ and set $z =f (x,y)$, otherwise if $b = 0$, set $x = z = \bot$.
            \item Run $A \gets \Sim(1^\secp,\epsilon,A,z)$.
            \item Output $(x,A,E)$.
        \end{itemize}
Then,
\[\bigg| \Pr[\cD(\ExpREAL[\cA,\epsilon,\rho, y]) = 1] - \Pr[\cD(\ExpIDEAL[\Sim,\epsilon,\rho,y ]) = 1]\bigg| \leq \epsilon.\]
\item \textbf{Efficiency.} There exists a fixed polynomial $p(\secp) = \poly(\secp)$ such that the communication complexity of the protocol is bounded by $p(\secp,1/\epsilon,n)$ and the runtime of the verifier is bounded by $|f| \cdot p(\secp, 1/\varepsilon,n)$, where $|f|$ is the \emph{description size} of $f$ (as opposed to the time it takes to execute). 
\end{itemize}
\end{definition}
\paragraph{From Random-Input to Chosen-Input.} A variant of the above primitive allows the verifier to choose an arbitrary input $x$, while attaining the same security guarantees. To differentiate between these two primitives we refer to them as \emph{random-input} committed SFS and \emph{chosen-input} committed SFS. It is shown in \cite{cryptoeprint:2023/1492} that random-input SFS and chosen-input SFS are equivalent assuming the existence of one-way functions. The same transformation works in the committed settings, which we sketch here. 

Let $g(\cdot, y)$ be the function that takes as input some random coins $r \in\{0,1\}^\lambda$ and returns a garbling \cite{yao1986generate} of the circuit computing $f_y(\cdot ) = f(\cdot , y)$. Running a random-input committed SFS results in the verifier receiving a random $r$ and the prover receiving the garbled circuit $g(r, y) = \Tilde{f}_y$. The verifier can then expand $r$ into pairs of labels and, for each pair, send to the prover the $x_i$-th label. Security follows by first invoking \cref{def:ComSFS} to program the output of the prover and then substituting it with a simulated garbled circuit, which is indistinguishable by the security of garbled circuits. Since this is a standard transformation and it is already proven (up to trivial syntactical modifications) in \cite{cryptoeprint:2023/1492}, we omit a formal proof and we directly state the following lemma.
\begin{lemma}[Random-input implies chosen-input committed SFS]\label{lmm:random-chosen}
    Assuming one-way functions, the existence of random-input committed SFS implies the existence of chosen-input committed SFS.
\end{lemma}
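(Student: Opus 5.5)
The plan follows the garbling-based transformation sketched just above. Let $(\mathsf{Garble},\mathsf{Eval})$ be a garbling scheme built from one-way functions, with a simulator $\mathsf{SimGarble}$ that outputs $(\tilde f_{\mathrm{sim}}, \{\mathsf{lab}_i\}_{i})$ given only $f(x,y)$ and the topology of the circuit. We make the garbling coins derandomized by a seed $r \in \{0,1\}^\secp$: the seed deterministically expands, via a PRG, into both the label pairs $\{\mathsf{lab}_{i,0},\mathsf{lab}_{i,1}\}_{i\le n}$ and the garbling randomness.

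Define $g(r,y) := \mathsf{Garble}(C_{f,y}; r)$. This is the garbled circuit of the map $x \mapsto f(x,y)$, where $C_{f,y}$ has the hardwired input $y$ as a wire input known only to the prover. To keep the verifier's description of $g$ small, we treat the $y$-wires as \emph{public} garbler inputs. Concretely, $g$ garbles the universal circuit $f(\cdot,\cdot)$ and also outputs the labels for $y$ in the clear. This costs privacy of $y$, which is not needed since the prover already knows $y$. The description size $|g|$ is then $\poly(|f|,\secp)$.

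\textbf{Chosen-input protocol.} On verifier input $x$ and prover input $y$, the parties first run random-input committed SFS for $g$ with commitment $c$. The verifier obtains $r$ or $\bot$, and the prover obtains $z=\tilde f$. The verifier then expands $r$ and sends $\{\mathsf{lab}_{i,x_i}\}_i$, which costs $n\cdot\poly(\secp)$ bits. The prover evaluates $\mathsf{Eval}(\tilde f,\{\mathsf{lab}_{i,x_i}\},\mathsf{lab}_y)$.

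Correctness follows from correctness of the random-input SFS (up to negligible trace distance) together with garbling correctness. Efficiency follows because the SFS communication is $p(\secp,1/\epsilon,\secp)$, independent of $m$ and $\ell$, because its seed length is $\secp$. The verifier's runtime is $|g|\cdot p$ plus the cost of PRG expansion.

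\textbf{Security via a hybrid argument.} Given an adversary $\cA$ and an environment $\cE$, view the chosen-input protocol's adversary as an adversary $\cA'$ against the random-input protocol. Here $\cA'$ runs $\cA$ through the SFS phase and keeps the internal state $A$. Apply the security of \cref{def:ComSFS} to $\cA'$ to get $\Sim'$. In the first hybrid, we replace the SFS phase with $\Sim'$. With probability determined by $b$, the ideal functionality samples a uniform $r$ and hands $\tilde f = g(r,y)$ to $\Sim'$, and the verifier then sends the labels derived from $r$. This hybrid is $\epsilon$-close to the real game.

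The key subtlety is that the distinguisher in \cref{def:ComSFS} only sees $(r,A,E)$. The continuation, namely sending labels for the chosen $x$ and running the rest of $\cA$, must therefore be absorbed into the distinguisher. This works because $x$ can be hardwired into a non-uniform distinguisher, or supplied by the environment through $E$.

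In the second hybrid, $r$ is used only inside the ideal functionality to produce $(\tilde f,\{\mathsf{lab}_{i,x_i}\})$. We replace that pair by $\mathsf{SimGarble}(f(x,y))$, which is computationally indistinguishable by PRG security and garbling security. Here $r$ is no longer part of the output, since the chosen-input verifier outputs $x$ rather than $r$. The resulting experiment is exactly the chosen-input ideal game with simulator $\Sim$. This $\Sim$ runs $\Sim'$ to get $b$. On receiving $z=f(x,y)$, it computes $\mathsf{SimGarble}(z)$, feeds the simulated garbled circuit to $\Sim'$ as its output, and feeds the simulated labels to $\cA$. The total distinguishing gap is $\epsilon+\negl(\secp)$, and this can be brought under $\epsilon$ by running the random-input protocol with parameter $\epsilon/2$.

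\textbf{Main obstacle.} I expect the hardest part to be getting the ordering in the ideal game right. In the ideal random-input game, $\Sim'$ commits to $b$ before it sees the output, and the chosen-input simulator must likewise decide $b$ before it learns $f(x,y)$. The garbling simulation is consistent with this, because the simulated garbled circuit depends only on $f(x,y)$, and that value is available exactly when $\Sim'$ needs its output. Hence no rewinding is required. Beyond this, one must check that the abort case $b=0$ matches in both games ($x=\bot$, and no labels are sent), and that the reduction to garbling security is QPT given the quantum state $A,E$, which holds since garbling security is post-quantum when built from post-quantum one-way functions.
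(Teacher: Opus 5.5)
Your proposal is correct and follows the same route as the paper's sketch. You run random-input committed SFS on the garbling function $g(r,y)$, have the verifier expand $r$ into labels and send those for $x$, and prove security by first invoking the committed-SFS simulator and then swapping in a simulated garbled circuit and labels. Your version fills in the details the paper defers to Zhang: absorbing the label-sending continuation into the SFS distinguisher, the ordering of $b$ before $z$, and running the subprotocol with parameter $\epsilon/2$. Garbling the two-input $f$ with the $y$-labels revealed, rather than $f_y$ with $y$ hardwired, is an inessential variation.
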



\subsection{Random-Input Committed SFS}\label{sec:SFSprotocol}

We present a self-contained description of our random-input committed SFS protocol, which is itself a variant of the protocol proposed in \cite{cryptoeprint:2023/1492}. We assume the existence of the following ingredients:
\begin{itemize}
    \item A state-preserving succinct AoK for NP\ifllncs\else~(\cref{sec:AoK})\fi.
    \item A succinct commitment $\mathsf{Com}$ satisfying collision resistance\ifllncs\else~(\cref{sec:hash-functions})\fi, whose output size is a fixed $\poly(\secp)$.
\end{itemize}
We describe the protocol in the settings where the verifier is entirely classical, except that in the first round it sends a \emph{claw state} to the prover. A claw state is a state of the form
\[
\frac{\ket{0,z_0} + \ket{1,z_1}}{\sqrt{2}}
\]
for randomly sampled $z_0, z_1 \gets \{0,1\}^{\poly(\secp)}$. \ifllncs\else Looking ahead, in \cref{sec:rspv} we shall see how to make the verifier entirely classical, but to keep things simple, here we present the protocol with explicit claw states sent by the verifier.\fi It is a straightforward consequence of the \emph{distinguishing implies swapping} principle \cite{AAS} (see also \cite{HMT23,MW24}) that the following states are negligibly close in trace distance
\begin{equation}\label{eq:distswap}
 \frac{\ket{0,z_0}+\ket{1,z_1}}{\sqrt{2}}\frac{\bra{0,z_0}+\bra{1,z_1}}{\sqrt{2}} \approx_{\negl(\secp)} \frac{\ketbra{0,z_0} + \ketbra{1,z_1}}{2}
\end{equation}
that is, a claw state is close to the classical mixture over each claw. Naturally, \cref{eq:distswap} implies that it is hard for any (possibly unbounded) algorithm to return both $(z_0, z_1)$ given a claw state, with non-negligible probability.

The protocol is parametrized by a function $f:\{0,1\}^n\times\{0,1\}^\ell\to \{0,1\}^m$ and a commitment scheme $\mathsf{Com}$. In addition to the public description of $f$ and a parameter $\varepsilon$, the prover is given a string $y \in\{0,1\}^\ell$ and the verifier is given $\mathsf{Com}(y) = c$.

We describe the following protocol between the prover and the verifier.

\begin{center}
\begin{longfbox}[breakable=true, padding=1em, padding-right=1.8em, padding-top=1.2em, margin-top=1em, margin-bottom=1em]
%
\noindent \underline{Random-input Committed SFS}

\textbf{Parameters:}\text{ $p$ (the probability of a test round), to be chosen later}\\

\textbf{Input:}\text{ $f:\{0,1\}^n\times\{0,1\}^\ell\to \{0,1\}^m$, $y$}
\begin{enumerate}
    \item The verifier sends to the prover a random claw state
    \[
    \frac{1}{\sqrt{2}}\left(\ket{0,z_0} + \ket{1,z_1}\right) =
        \frac{1}{\sqrt{2}}\left(\ket{0,x_0, r_0, s_0} + \ket{1,x_1, r_1, s_1}\right)
    \]
    where $z_b = (x_b, r_b, s_b) \in \{0,1\}^n \times\{0,1\}^\lambda \times\{0,1\}^\lambda$.
    \item The prover coherently computes
    \[
         \frac{1}{\sqrt{2}}\left(\ket{0,x_0, r_0, s_0, f(x_0, y)} + \ket{1,x_1, r_1, s_1, f(x_1, y)}\right)
    \]
    then it measures the second and third registers in the Hadamard basis to obtain $c \in\{0,1\}^n$ and $d \in \{0,1\}^\lambda$. The residual post-measurement state is 
    \[
    \frac{1}{\sqrt{2}}\left((-1)^{\theta_0}\ket{0, s_0, f(x_0, y)} + (-1)^{\theta_1}\ket{1, s_1, f(x_1, y)}\right)
    \]
    where
    \begin{align*}        
    \theta_0 = \langle x_0, c \rangle \oplus \langle r_0, d \rangle  \quad\text{and}\quad
    \theta_1 = \langle x_1, c \rangle \oplus \langle r_1, d \rangle.
    \end{align*}
    The prover sends $c,d$ back to the verifer, who checks that $d \neq 0^\lambda$ and recomputes the angles $\theta_0, \theta_1$.
    \item The verifier sample a biased coin and with probability $p$ enters a test round and afterwards starts over from the first step. Otherwise, with probability $1-p$, it enters a measurement round.
    \item (Test Round) The verifier selects between the following two tests with probability $1/2$.
    \begin{itemize}
    \item (Computational Test) The prover measures its state in the computational basis to obtain an outcome $(u,v,w)\in \{0,1\} \times \{0,1\}^\lambda \times \{0,1\}^m$. The prover sends to the verifier $\mathsf{Com}(u,v,w) = \Tilde{c}$ and engages in an AoK with the verifier for knowledge of the committed string. Afterwards, the verifier reveals $(s_0, s_1, x_0, x_1)$ and the prover engages in an AoK proving that:
    \begin{itemize}
        \item The string $(u,v,w)$ is the pre-image of the commitment $\tilde{c}$, i.e., $\mathsf{Com}(u,v,w) = \Tilde{c}$.
        \item The string $y$ is the pre-image of $c$, i.e., $c = \mathsf{Com}(y)$.
        \item It holds that $(u,v,w) = (0, s_0, f(x_0, y))$ or \ifllncs
        \\ \else\fi $(u,v,w) = (1, s_1, f(x_1, y))$.
    \end{itemize}
    
    \item (Hadamard Test) The prover measures its state in the Hadamard basis to obtain an outcome $(u,v,w)\in \{0,1\} \times \{0,1\}^\lambda \times \{0,1\}^m$. The prover sends to the verifier $\mathsf{Com}(u,v,w) = \Tilde{c}$ and engages in an AoK with the verifier for knowledge of the committed string. Afterwards, the verifier reveals $(\theta_0, \theta_1, s_0, s_1, x_0, x_1)$ and the prover engages in an AoK proving that:
    \begin{itemize}
        \item The string $(u,v,w)$ is the pre-image of the commitment $\tilde{c}$, i.e., $\mathsf{Com}(u,v,w) = \Tilde{c}$.
        \item The string $y$ is the pre-image of $c$, i.e.,  $c = \mathsf{Com}(y)$.
        \item It holds that $\theta_0 \oplus \theta_1 = u \oplus \langle v, s_0 \oplus s_1 \rangle \oplus \langle w, f(x_0, y) \oplus f(x_1, y) \rangle$.
    \end{itemize}
    \end{itemize}
    \item (Measurement Round) The prover measures its state in the computational basis to obtain an outcome $(u,v,w)\in \{0,1\} \times \{0,1\}^\lambda \times \{0,1\}^m$. The prover outputs $w$ and sends $(u,v)$ to the verifier. If $(u,v) = (0,s_0)$ the verifier returns $x_0$ and if $(u,v) = (1,s_1)$ the verifier returns $x_1$. Otherwise the verifier aborts.
\end{enumerate}
\end{longfbox}
\end{center}

\subsection{Analysis}

The correctness of the protocol is straightforward, so in this section we analyze its security. We say that a protocol is $\beta$-secure if the distinguishing advantage in the security game is bounded by $\beta$. Then it suffices to prove the following lemma.

\begin{lemma}[$\beta$-Security]\label{lmm:beta-security}
    Assume the existence of succinct collision-resistant commitments and state-preserving succinct AoKs.  
    Fix any $\delta = 1/\poly(\secp)$ and consider a QPT prover $\mathcal{A}$ that passes all tests with probability $1 - O(\delta)$. Then the protocol in \cref{sec:SFSprotocol} is $\beta$-secure against $\mathcal{A}$, with $\beta = O(\sqrt{\delta})$.
\end{lemma}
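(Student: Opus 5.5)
We will build the simulator by extracting from $\mathcal{A}$ a rigid description of its state right after step 2, and then show that in the measurement round the prover's residual register on the branch it reveals must (up to an isometry on its private space) be $f(x_b,y)$ for the $x_b$ the verifier outputs.

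\textbf{Step 1 (extraction).} Using the state-preserving succinct AoK, we replace every commit-then-prove step in the test rounds by an extractor. The extractor outputs the committed string $(u,v,w)$ and the committed $y'$ such that $\mathsf{Com}(y')=c$, and it barely disturbs the prover's state. Collision resistance of $\mathsf{Com}$ forces $y'=y$ except with negligible probability, because the environment already knows $y$. We can therefore treat both tests as the ideal checks. The computational test then checks that the standard-basis outcome equals $(b,s_b,f(x_b,y))$ for some $b$. The Hadamard test checks the parity equation $\theta_0\oplus\theta_1=u\oplus\langle v,s_0\oplus s_1\rangle\oplus\langle w,f(x_0,y)\oplus f(x_1,y)\rangle$. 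Since $\mathcal{A}$ passes each test with probability $1-O(\delta)$ and each test is chosen with constant probability, both ideal checks pass with probability $1-O(\delta)$ on the post-step-2 state $\sigma$. The test/measurement choice is made after step 2, so the same state is used in every case. The two tests are defined as measurements of the prover's registers with an arbitrary unitary $U_{\mathrm{comp}}$ or $U_{\mathrm{had}}$ applied first, and they are projectors of the form $\Pi_{\mathrm{comp}},\Pi_{\mathrm{had}}$.

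\textbf{Step 2 (rigidity).} This is the main obstacle. The goal is to show that passing both tests with probability $1-O(\delta)$ implies that $\sigma$ is $O(\sqrt\delta)$-close to $V\bigl(\tfrac{1}{\sqrt2}\sum_b(-1)^{\theta_b}\ket{b,s_b,f(x_b,y)}\otimes\ket{\phi_b}\bigr)$ for some isometry $V$ and junk states $\ket{\phi_b}$. We would follow the standard self-testing route used in \cite{cryptoeprint:2023/1492}. We define the prover's computational-basis observable $Z$ for the bit $u$, which the computational test forces to be consistent with $s_u$. We define the Hadamard-derived observable $X$ whose $\pm1$ eigenvalue is the predicted parity. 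The Hadamard test gives $\langle X\rangle\ge 1-O(\delta)$, i.e.\ the phase $(-1)^{\theta_0\oplus\theta_1}$ is predicted. We then show that $X$ and $Z$ approximately anticommute on $\sigma$.

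The key point for anticommutation is that the prover cannot know $s_0,s_1$ in the clear. By \cref{eq:distswap}, the claw state is negligibly close to a classical mixture of $\ket{b,z_b}$, so any prover that could compute the parity without coherent phase information would yield both $z_0$ and $z_1$. We also use that $d\neq 0$ masks $\theta_0\oplus\theta_1$ through $\langle r_0\oplus r_1,d\rangle$. This is only possible if the prover holds a genuine superposition of the two branches. A Jordan's-lemma / gentle-measurement argument converts these $1-O(\delta)$ bounds into the $O(\sqrt\delta)$ trace-distance approximation.

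\textbf{Step 3 (simulation).} The simulator $\Sim$ runs $\mathcal{A}$ on fake claw states whose contents it samples itself, with dummy $x_0,x_1$ such as $0^n$. In the measurement round it sets $b=1$ iff the revealed $(u,v)$ matches $s_0$ or $s_1$. From Step 2, the reveal matches with the right probability and the post-measurement prover register is $V(\ket{f(x_u,y)}\otimes\ket{\phi_u})$. The simulator therefore needs only $z=f(x,y)$ for a fresh random $x$ to reconstruct the prover's state, by applying $V$ to the swapped-in $\ket{z}$ and $\ket{\phi_u}$, which it prepares from its own run. What remains is to show that the dummy $x_b$ are indistinguishable from the real ones from the prover's view. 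This holds because $c=\langle\cdot\rangle$-Hadamard outcomes are uniform and independent of $x_b$ (the phases are one-time padded by $\langle r_b,d\rangle$). Repeated test rounds before the measurement round are simulated identically, since the simulator knows all the secrets it revealed. Summing the AoK and collision-resistance errors, which are negligible, with the rigidity error $O(\sqrt\delta)$ gives $\beta=O(\sqrt\delta)$.
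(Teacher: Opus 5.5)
There is a genuine gap, and it sits in how your simulator removes the adversary's correlation with the claw it actually received.

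\textbf{The dummy-claw simulator fails.} Running $\mathcal{A}$ on claw states with dummy $x_0=x_1=0^n$ is distinguishable. The prover holds $\ket{x_b}$ in the computational basis, so it can apply the projective test ``is the $x$-register equal to $0^n$?'' and record the outcome in its output state. In the real protocol this test fires with probability $2^{-n}$, which is negligible for $n=\Omega(\secp)$ as in the garbling application. By gentle measurement it then leaves the state, and hence the $1-O(\delta)$ success probability, essentially unchanged. In your simulation it always fires. Your justification via uniformity of the Hadamard outcomes $c$ does not address this. Using fresh random $x'_0,x'_1$ makes the first-round view identical, but then the whole difficulty moves to the swap-in step. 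The adversary's residual state is entangled with the simulator's own $x'_u$. You must show it depends on $x'_u$ only through $(s'_u,f(x'_u,y),\theta'_u)$, and give an \emph{efficient} map that replaces $f(x'_u,y)$ by the ideal $f(x,y)$.

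\textbf{Step 2 does not supply this.} Your target form allows $b$-dependent junk $\ket{\phi_b}$ and makes no claim that it is independent of the secrets. So nothing prevents $\ket{\phi_u}$, which the simulator ``prepares from its own run,'' from encoding $x'_u$. Moreover, a self-testing isometry built from your Hadamard-derived $X$ is controlled on both branches' secrets: $s_0\oplus s_1$, $f(x_0,y)\oplus f(x_1,y)$ and $\theta_0\oplus\theta_1$. The simulator therefore cannot apply ``$V$'' for the real secrets, of which it learns only $f(x,y)$ for one branch.

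\textbf{What the paper does instead.} The paper avoids a product decomposition and uses three ingredients:
\begin{enumerate}
\item The purification rewriting \cref{eq:uhlmann} uses $d\neq 0^\lambda$ to put branch $b$ exactly in tensor product with $(x_{1-b},s_{1-b},\theta_{1-b})$. Hence each branch alone passes the parity check with probability exactly $1/2$. This, not the hardness of finding both $z_0,z_1$, is the operative reason behind your ``anticommutation.''
\item Combining this with the $1-O(\delta)$ overall acceptance gives \cref{eq:phi_equality} and \cref{eq:F_equality}, i.e.\ $\ket{\varphi_{b,0}}\approx_{O(\sqrt{\delta})}-\ket{\varphi_{b,1}}$ and $\mathsf{F}\ket{\varphi_{0,0}}\approx_{O(\sqrt{\delta})}\ket{\varphi_{1,0}}$, where $\mathsf{F}=2\Pi^{(\text{acc})}-\mathsf{Id}$.
\item The simulator runs $\mathcal{A}$ on its own random claw and measures the parity of the extracted branch bit against the target's branch bit. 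It then applies a hybrid $\tilde{\mathsf{F}}$, controlled on its \emph{own} secrets for the adversary's current branch and on the target's $(s,f(x,y))$ for the other branch. This transports the adversary's state into the target's branch.
\end{enumerate}
This branch-transport step is what makes ``the simulator only needs $f(x,y)$'' true. Your proposal asserts that fact rather than proving it.
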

Following arguments made in \cite[Section 6.1]{cryptoeprint:2023/1492},  \cref{lmm:beta-security} suffices to establish the security of the protocol, since it implies that for any $\varepsilon = 1/\poly(\secp)$, we can pick a sufficiently large $p$ such that the protocol is $\varepsilon$-secure. Thus, we concentrate on proving \cref{lmm:beta-security}. Our proof largely follows the strategy in \cite{cryptoeprint:2023/1492}, with some modifications to handle the committed inputs.

Before presenting the formal proof, let us establish some notation. To characterize the joint state between the prover and the verifier, we purify the random choices of the verifier, so prior to the beginning of the execution the joint state between the prover and the verifier is:
\begin{align*}
\ket{\vartheta}_{RVSA}&=\left(\frac{1}{\sqrt{2} \cdot 2^{n + 2\lambda}}\sum_{z_0,z_1} \ket{z_0,z_1}_R \ket{z_0,z_1}_V \left(\ket{0,z_0}_S+\ket{1,z_1}_S\right)\right)\otimes \ket{\alpha}_A\\
&= \frac{1}{\sqrt{2}\cdot 2^{n + 2\lambda}}\left(\sum_{z_0,z_1} \ket{z_0,z_1}_R \ket{z_0,z_1}_V \ket{0,z_0}_S\ket{\alpha}_A+\sum_{z_0,z_1} \ket{z_0,z_1}_R \ket{z_0,z_1}_V \ket{1,z_1}_S\ket{\alpha}_A\right)
\end{align*}
where $R$ is the purification register, $V$ is the internal register of the verifier, $S$ is the state that the verifier sends to the prover, and $\ket{\alpha}$ is some non-uniformly generated auxiliary state that the prover keeps in its internal register $A$. Recall that $z_b = (x_b, r_b,s_b) \in \{0,1\}^{n + 2\lambda}$, then the state of the protocol is given by 
\[
\Tr_R\ket{\vartheta}_{RVSA}
\]
but we can delay the tracing out until the end of the protocol and, by the principle of delayed measurement, this does not change the distribution of the state. Without loss of generality we can think of the first-round attacker as applying a unitary followed by a measurement of two designated sub-registers of $A$ in the computational basis. By linearity, this results into a state of the form
\begin{equation}\label{eq:RVSAprop}
\ket{\varphi}_{RVSA}\propto \sum_{b,c,d,z_0,z_1} \ket{z_0,z_1, c,d,\theta_0, \theta_1}_R \ket{x_0,x_1, s_0,s_1,\theta_0, \theta_1}_V\ket{\varphi_{c,d,z_b, b}}_{SA}
\end{equation}
omitting normalization factors, where $\theta_b=\theta_b(x_b,r_b,c,d)$ and the state $\ket{\varphi_{c,d,z_b, b}}$ is possibly sub-normalized. Note that we have dropped $(r_0, r_1, c, d)$ from the register $V$ (but not from $R$) since the verifier no longer needs that information after computing $(\theta_0, \theta_1)$ and so we can assume without loss of generality that the verifier traces it out.

Next, we proceed with some simple claims about the tests for the state as defined above. For any given $y \in\{0,1\}^\ell$ fixed in the experiment, the following holds for the computational-basis test.

\begin{claim}[Computational-basis Extractor]\label{claim:comp_extractor}
    Let $\ket{\varphi}_{RVSA}$ be the purified joint state of the protocol after the first round for a QPT adversary that passes the computational basis test with probability at least $1- O(\delta)$. There exists a QPT isometry $\mathsf{E}^{(\text{comp})}$ acting on $SA$ plus an additional register $E$ such that
    \[
    \Tr\left( \Pi^{(\text{eq})}_{VE} (\mathsf{Id}\otimes \mathsf{E}^{(\text{comp})}) \ketbra{\varphi}{\varphi}_{RVSA}(\mathsf{Id}\otimes \mathsf{E}^{(\text{comp})})^\dagger \right) \geq 1-O(\delta)
    \]
    where
    \[
    \Pi^{(\text{eq})}_{VE} := \sum_{b,s_0,s_1, x_0, x_1} \ketbra{s_0,s_1,x_0,x_1}{s_0,s_1,x_0,x_1}_V \otimes \ketbra{b,s_b, f(x_b,y)}{b,s_b, f(x_b,y)}_E.
    \]
\end{claim}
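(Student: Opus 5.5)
The extractor $\mathsf{E}^{(\text{comp})}$ will run the adversary's own computational-test strategy coherently, up to the commitment step. It then uses the knowledge extractor of the first state-preserving succinct AoK to pull out the committed string $(u,v,w)$ into a fresh register $E$. Concretely, $\mathsf{E}^{(\text{comp})}$ acts on $SA$ plus workspace as follows.
\begin{enumerate}
\item Apply the prover's unitary for the computational test, producing $\tilde c$ in a designated register.
\item Run the state-preserving extractor for the AoK of knowledge of the preimage of $\tilde c$, writing the extracted $(u,v,w)$ into $E$.
\end{enumerate}
Since all steps are QPT and can be purified, this gives a QPT isometry. State preservation guarantees that the residual state on $RVSA$ (together with $\tilde c$ and the extracted value) is $O(\delta)$-close, in the appropriate sense, to the state after an honest execution of the first AoK, conditioned on acceptance. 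Hence the adversary's success probability in the remainder of the computational test is essentially unchanged.

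Next I would argue that the extracted value equals $(b,s_b,f(x_b,y))$ for some $b$ with probability $1-O(\delta)$. Suppose it does not, with probability $\gamma$. Continue the test: reveal $(s_0,s_1,x_0,x_1)$ from $V$ and run the second AoK, which the adversary passes with probability $1-O(\delta)$. Now apply the AoK extractor to this second AoK to obtain a witness $(u',v',w',y')$ with the following properties:
\begin{itemize}
\item $\mathsf{Com}(u',v',w')=\tilde c$;
\item $\mathsf{Com}(y')=c$;
\item $(u',v',w')=(b',s_{b'},f(x_{b'},y'))$ for some $b'$.
\end{itemize}
Collision resistance, applied twice, gives $(u',v',w')=(u,v,w)$ and $y'=y$ except with negligible probability. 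Hence $(u,v,w)=(b',s_{b'},f(x_{b'},y))$, which matches the projector $\Pi^{(\text{eq})}$ since $V$ holds the correct $(s_0,s_1,x_0,x_1)$. The two extraction events together occur with probability at least $1-O(\delta)-\negl$. So $\gamma$ must be $O(\delta)$, or else we obtain a QPT collision finder with non-negligible advantage. The weight of the extracted state on $\Pi^{(\text{eq})}_{VE}$ is therefore $1-O(\delta)$.

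The main obstacle is the interleaving of the two extractions, with the revelation of $V$'s secrets in between. We must ensure that extracting from the first AoK does not disturb the state so much that the adversary fails the second AoK. This is exactly what state-preservation of the AoK handles, up to an inverse polynomial that we fold into $O(\delta)$.

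A related subtlety concerns the ordering of revelation. The verifier revealing $x_0,x_1,s_0,s_1$ is a classical copy from $V$, which commutes with the extractor acting on $SA$. Consequently the projector can be checked after the fact, and the bound is phrased on the joint state exactly as stated. I would also note that the reduction needs $y$ only as a fixed (non-uniform) value, which is allowed in the experiment.
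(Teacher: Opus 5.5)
Your proposal is correct and follows essentially the same route as the paper. Like the paper, you define $\mathsf{E}^{(\text{comp})}$ as the purified computational-test unitary followed by the first AoK's state-preserving extractor, then rule out a bad extracted value by continuing the test, extracting from the second AoK, and invoking collision resistance of $\mathsf{Com}$ on $\tilde c$. Your additional points, namely extracting $y'$ and using collision resistance on $\mathsf{Com}(y)$ to force $y'=y$, and noting that revealing $V$'s classical contents commutes with checking $\Pi^{(\text{eq})}_{VE}$, make explicit details the paper leaves implicit.
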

\begin{proof}[Proof of \cref{claim:comp_extractor}]
    By the extractability \ifllncs\else (\cref{def:AoK}) \fi of the \emph{first} AoK, there exists an extractor $\mathsf{E}^{(\text{ext})}$ that returns the pre-image of the commitment $\tilde{c}$, and we define $\mathsf{E}^{(\text{comp})}$ to be the purified adversarial unitary for the computational basis test, followed by the purification of such an extractor. It is clear that, running the extractor with parameter $O(\delta)$ and measuring the register $E$ in the computational basis returns a valid pre-image of the commitment $\Tilde{c}$ with probability at least $1 - O(\delta)$. 
      
    All is left to be shown is that such pre-image is indeed of the desired form $(b, s_b, f(x_b,y))$, for some $b\in\{0,1\}$. We claim that this holds with probability at least $1-O(\delta)$. Assume towards contradiction that this is not the case, then consider the following thought experiment, starting from the state $\ket{\varphi}_{RVSA}$:
    \begin{itemize}
        \item Run the first step of the computational basis test, where the adversary returns a commitment.
        \item Run the the extractor for the \emph{first} AoK, which returns an accepting transcript and a valid witness with probability at least $1 - O(\delta)$.
        \item Continue with the execution of the computational basis test, then run the extractor for the \emph{second} AoK. Measure the output of the extractor to obtain a string $(u',v',w')$.
    \end{itemize}
%
    Since passing the first AoK is a necessary condition for passing the computational basis test, conditioning on this event can only increase the probability 
    that the attacker passes the overall computational basis test, which includes a successful run of the second AoK. Furthermore, by the state-preserving extractability, the state produced after running the first extractor is $O(\delta)$-indistinguishable, for any QPT distinguisher, from a real run of the protocol.

    Thus, conditioning on the first extraction succeeding, the probability that the adversary passes the second AoK is at least $1-O({\delta})$. By the extractability \ifllncs\else (\cref{def:AoK}) \fi of the \emph{second} AoK the second extraction succeeds with probability also $1 - O({\delta})$. However, in this event it must be the case that
    \[
    (u,v,w) \neq (u',v',w') \in \{(0,s_0,f(x_0,y)),(1,s_1,f(x_1,y)) \}
    \]
    contradicting the collision resistance \ifllncs\else (\cref{def:crh}) \fi of the commitment scheme. This concludes our proof.
\end{proof}
A similar statement holds for the Hadamard-basis test.
\begin{claim}[Hadamard-basis Extractor]\label{claim:had_extractor}
    Let $\ket{\varphi}_{RVSA}$ be the purified joint state of the protocol after the first round for a QPT adversary that passes the Hadamard basis test with probability at least $1- O(\delta)$. There exists a QPT isometry $\mathsf{E}^{(\text{had})}$ acting on $SA$ plus an additional register $E$ such that
    \[
    \Tr\left( \Pi^{(\text{sub})}_{VE} (\mathsf{Id}\otimes \mathsf{E}^{(\text{had})}) \ketbra{\varphi}{\varphi}_{RVSA}(\mathsf{Id}\otimes \mathsf{E}^{(\text{had})})^\dagger \right) \geq 1-O(\delta)
    \]
    where
    \begin{align*}
    \Pi^{(\text{sub})}_{VE} := &\sum_{s_0,s_1,x_0,x_1, \theta_0, \theta_1} \ketbra{s_0,s_1,x_0,x_1, \theta_0, \theta_1}{s_0,s_1,x_0,x_1, \theta_0, \theta_1}_V \\ &\otimes 
    \sum_{(u,v,w)\in S_{s_0,s_1,x_0,x_1,\theta_0,\theta_1}}
    \ketbra{u,v,w}{u,v,w}_E
    \end{align*}
    and the subspace $S_{s_0,s_1,x_0,x_1,\theta_0,\theta_1}$ consists of all $(u,v,w)$ satisfying \[\theta_0 \oplus \theta_1 = u \oplus \langle v, s_0 \oplus s_1 \rangle \oplus \langle w, f(x_0, y) \oplus f(x_1, y) \rangle.\]
\end{claim}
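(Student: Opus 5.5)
The proof follows the template of \cref{claim:comp_extractor}, with the computational-basis test replaced by the Hadamard test and the membership condition replaced by $(u,v,w)\in S_{s_0,s_1,x_0,x_1,\theta_0,\theta_1}$.

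\textbf{Construction of $\mathsf{E}^{(\text{had})}$.} Starting from $\ket{\varphi}_{RVSA}$, let $\mathsf{E}^{(\text{had})}$ be the purification of the following steps:
\begin{enumerate}
\item run the adversary's first step of the Hadamard test, which produces the commitment $\tilde c$;
\item run the purified extractor $\mathsf{E}^{(\text{ext})}$ of the \emph{first} AoK (\cref{def:AoK}) with error parameter $O(\delta)$;
\item coherently copy the extracted pre-image $(u,v,w)$ of $\tilde c$ into the register $E$.
\end{enumerate}
Steps 1--3 act only on $SA$ plus ancillas. They do not touch $V$, which holds $(s_0,s_1,x_0,x_1,\theta_0,\theta_1)$, so the map is an isometry on $SA$ as required. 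By extractability, measuring $E$ yields a valid pre-image of $\tilde c$ with probability $1-O(\delta)$. Here I assume, as for the computational test, that the test is only ever passed when the first AoK succeeds.

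\textbf{Showing the extracted string lies in $S$.} I would argue by contradiction with the same thought experiment as before:
\begin{enumerate}
\item run the first extractor;
\item continue the Hadamard test, in which the verifier reveals $(\theta_0,\theta_1,s_0,s_1,x_0,x_1)$ from $V$;
\item run the extractor of the \emph{second} AoK to obtain $(u',v',w')$ together with $y'$.
\end{enumerate}

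By state-preservation, the post-extraction state is $O(\delta)$-indistinguishable from a real execution. Since passing the first AoK is necessary for passing the test, the second AoK is still passed with probability $1-O(\delta)$, so the second extraction also succeeds with probability $1-O(\delta)$. In that event the extracted witness satisfies three conditions:
\begin{itemize}
\item $\Com(u',v',w')=\tilde c$;
\item $\Com(y')=c$;
\item $(u',v',w')\in S$, where $S$ is evaluated with $y'$ in place of $y$.
\end{itemize}

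\textbf{Collision resistance.} The experiment's $y$ is given with $c=\Com(y)$, so $y'\neq y$ would be a collision, computable by a QPT algorithm that embeds the environment's $y$. Hence $y'=y$ except with negligible probability, and $S$ is evaluated with the correct $y$.

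Suppose the first-extracted $(u,v,w)$ lies outside $S$ with probability noticeably above $O(\delta)$. By a union bound, with non-negligible probability both extractions succeed while $(u,v,w)\notin S\ni(u',v',w')$. Then $(u,v,w)\neq(u',v',w')$ are two distinct openings of $\tilde c$, which contradicts collision resistance (\cref{def:crh}). So the projection $\Pi^{(\text{sub})}_{VE}$ succeeds with probability $1-O(\delta)$.

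\textbf{Main obstacle.} The delicate step is linking the two extractions. The first extraction must be performed \emph{before} the verifier reveals $\theta_0,\theta_1$ and the other values, while membership in $S$ is only checked after the reveal. State-preserving extraction is what lets me argue that the later transcript (the reveal plus the second AoK) is distributed essentially as in the real test. I also need the collision-finding reduction to be QPT given the environment's $y$, which is why the claim is stated for fixed $y$.

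All error terms add up to $O(\delta)+\negl(\secp)$, which gives the bound stated in the claim.
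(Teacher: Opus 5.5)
Your proposal is correct and is essentially the paper's argument. The paper proves this claim by pointing to the proof of \cref{claim:comp_extractor}: extract the opening of $\tilde c$ via the first AoK, use state preservation to continue through the reveal and the second AoK, and obtain a collision on $\tilde c$, with membership in $S_{s_0,s_1,x_0,x_1,\theta_0,\theta_1}$ replacing the claw condition; your explicit step forcing $y'=y$ via collision resistance of $c=\Com(y)$ is a detail the paper leaves implicit.
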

\begin{proof}[Proof of \cref{claim:had_extractor}]
    The proof follows along the same lines as the proof of \cref{claim:comp_extractor} with the only difference being that the equation relation checked by the test is the membership in the subspace $S_{s_0,s_1,x_0,x_1,\theta_0,\theta_1}$. 
\end{proof}
We are now ready to prove \cref{lmm:beta-security}.

\begin{proof}[Proof of \cref{lmm:beta-security}]
Let $\ket{\varphi}$ be the state after the first round of interaction, as defined in \cref{eq:RVSAprop}. We claim that
\begin{align}\label{eq:uhlmann}
\Tr_R &
\sum_{b,c,d,z_0,z_1} \ket{z_0,z_1, c,d,\theta_0, \theta_1}_R \ket{x_0,x_1, s_0,s_1,\theta_0, \theta_1}_V\ket{\varphi_{c,d,z_b, b}}_{SA}
\nonumber \\
\approx_{O(\sqrt{\delta})} 
\Tr_R& 
\sum_b \sum_{x_{1-b}, s_{1-b}, \theta_{1-b}}\ket{{x_{1-b}, s_{1-b}, \theta_{1-b}}}_R \ket{x_{1-b}, s_{1-b}, \theta_{1-b}}_V  \\ &\otimes 
\sum_{z_b, c, d,\theta_b} \ket{\theta_b, c, d,  \xi_{\theta_b, c, d, z_b}}_R \ket{x_b, s_b, \theta_b}_V \ket{\varphi_{c,d,z_b, b}}_{SA} \nonumber
\end{align}
 for some state $\ket{\xi_{\theta_b, c, d, z_b}}$. We defer the proof of \cref{eq:uhlmann} to a later point in this proof. Since tracing out the purification register does not change the state of the protocol, we can equivalently analyze the state on the RHS, which, up to a re-arrangement of the registers, corresponds to  
\begin{align} \label{eq:real_state}
&\sum_b \sum_{x_{1-b}, s_{1-b}, \theta_{1-b}}\ket{{x_{1-b}, s_{1-b}, \theta_{1-b}}}_R \ket{x_{1-b}, s_{1-b}, \theta_{1-b}}_V 
\nonumber\\&\otimes 
\sum_{\theta_b} \ket{\theta_b}_R \ket{\theta_b}_V
\sum_{z_b, c, d} \ket{c, d,  \xi_{\theta_b, c, d, z_b}}_R \ket{x_b, s_b}_V \ket{\varphi_{c,d,z_b, b}}_{SA}.
\end{align}
 Note in particular that, for each $b\in\{0,1\}$, we can identify a partition of the registers $R$ and $V$ such that the corresponding states are in tensor product.

By \cref{claim:comp_extractor} we know that there exists an isometry $\mathsf{E}^{(\text{comp})}$ acting on the prover registers $SA$ plus an auxiliary register $E$, which returns a valid claw with probability $1 - O({\delta})$. 
Let $\Pi^{(\text{eq})} = \Pi^{(\text{eq},0)} + \Pi^{(\text{eq},1)}$ where
\[
\Pi^{(\text{eq},b)}_{VE} = \sum_{s,x}\ketbra{s,x}{s,x}_V \otimes\ketbra{b,s,f(x,y)}{b,s,f(x,y)}_E
\]
be the projection onto the accepting subspace of the extractor. Then we claim that for $b\in\{0,1\}$ we have
\begin{align}\label{eq:comp_basis_ind}
    &\sum_{\theta_b} \ket{\theta_b}_R \ket{\theta_b}_V
\sum_{z_b, c, d} \ket{c, d,  \xi_{\theta_b, c, d, z_b}}_R \ket{x_b, s_b}_V \mathsf{E}^{(\text{comp})}\ket{\varphi_{c,d,z_b, b}}_{SA}\nonumber \\
\approx_{O({\sqrt{\delta}})}& 
\sum_{\theta_b} \ket{\theta_b}_R \ket{\theta_b}_V
\sum_{z_b, c, d} \ket{c, d,  \xi_{\theta_b, c, d, z_b}}_R \Pi^{(\text{eq},b)}_{VE} \ket{x_b, s_b}_V \mathsf{E}^{(\text{comp})}\ket{\varphi_{c,d,z_b, b}}_{SA}.
\end{align}
This follows by Gentle Measurement \ifllncs\else(\cref{lmm:gentle}) \fi plus a reduction to the hardness of finding a claw: Using the adversarial map of the first round, plus the isometry of the extractor, the reduction can efficiently prepare the state \[\sum_{z_b, c, d} \ket{x_b,s_b}\mathsf{E}^{(\text{comp})}\ket{\varphi_{c,d,z_b, b}},\] without any knowledge of $z_{1-b}$. Then, if the projection $\Pi^{(\text{eq},1-b)}$ succeeds, the reduction recovers $(1-b,s_{1-b}, f(x_{1-b},y))$, contradicting the security of the claw state, see \cref{eq:distswap}.

To summarize, we have characterized the state of the adversary right after the first round of interaction (i.e., right after sending $(c,d)$ to the verifier) as
\begin{align}\label{eq:state}
&\sum_b \left( \sum_{x_{1-b}, s_{1-b}, \theta_{1-b}}\ket{{x_{1-b}, s_{1-b}, \theta_{1-b}}}_R \ket{x_{1-b}, s_{1-b}, \theta_{1-b}}_V \right)\nonumber
\\&\otimes 
\left(
\sum_{\theta_b} \ket{\theta_b}_R \ket{\theta_b}_V
\sum_{z_b, c, d} \ket{c, d,  \xi_{\theta_b, c, d, z_b}}_R \Pi^{(\text{eq},b)}\ket{x_b, s_b}_V \mathsf{E}^{(\text{comp})}\ket{\varphi_{c,d,z_b, b}}_{SA}
\right)\nonumber\\   
=&\sum_b  \sum_{\theta_{1-b}}\ket{\theta_{1-b}}_R\ket{\theta_{1-b}}_V \otimes  \sum_{\theta_b} \ket{\theta_b}_R \ket{\theta_b}_V \otimes  \nonumber
\\&
\underbrace{\sum_{x_{1-b}, s_{1-b}}\ket{{x_{1-b}, s_{1-b}}}_R \ket{x_{1-b}, s_{1-b}}_V \otimes \sum_{z_b, c, d} \ket{c, d,  \xi_{\theta_b, c, d, z_b}}_R \Pi^{(\text{eq},b)}\ket{x_b, s_b}_V \mathsf{E}^{(\text{comp})}\ket{\varphi_{c,d,z_b, b}}_{SA}}_{=: \ket{\varphi_{b, \theta_b}}}\nonumber\\
=&\sum_b  \sum_{\theta_{1-b}}\ket{\theta_{1-b}}_R\ket{\theta_{1-b}}_V \otimes  \sum_{\theta_b} \ket{\theta_b}_R \ket{\theta_b}_V \otimes \ket{\varphi_{b, \theta_b}}_{RVSAE}
\end{align}
where the above equalities are just re-arrangements of registers. Note that the state $\ket{\varphi_{b, \theta_b}}$ is independent of $\theta_{1-b}$. We claim that
\begin{equation}\label{eq:phi_equality}
    \ket{\varphi_{0, 0}} \approx_{O(\sqrt{\delta})}  - \ket{\varphi_{0, 1}}
    \quad\text{and}\quad
    \ket{\varphi_{1, 0}} \approx_{O(\sqrt{\delta})}   -\ket{\varphi_{1, 1}}
\end{equation}
and we defer the proof of \cref{eq:phi_equality} to a later point in this proof.

We henceforth assume that the state of the prover is as in \cref{eq:state} at the beginning of the second round of interaction. This is without loss of generality: The adversary can apply $(\mathsf{E}^{(\text{comp})})^\dagger$ (which is efficiently implementable) to recover a state which, by \cref{eq:comp_basis_ind}, is within $O({\sqrt{\delta}})$ trace distance from the one in \cref{eq:real_state}.

By \cref{claim:had_extractor} we know that there exists an isometry $\mathsf{E}^{(\text{had})}$ acting on the prover registers $SA$ plus an auxiliary register, which returns a triple $(u,v,w)$ satisfying
\begin{equation}\label{eq:hadamard}
    \theta_0 \oplus \theta_1 = u \oplus \langle v, s_0 \oplus s_1 \rangle \oplus \langle w, f(x_0, y) \oplus f(x_1, y) \rangle
\end{equation}
with probability $1 - O({\delta})$. Let 
\[
    \Pi^{(\text{sub},0,0)}_{VE} := \sum_{s_0,s_1,x_0,x_1} \ketbra{s_0,s_1,x_0,x_1}{s_0,s_1,x_0,x_1}_V \otimes
    \sum_{(u,v,w)\in S_{s_0,s_1,x_0,x_1,0,0}}
    \ketbra{u,v,w}{u,v,w}_E
\]
be the projection onto 
triples $(u,v,w)$ satisfying \cref{eq:hadamard}, where the variables $\theta_0 = \theta_1 = 0$ are fixed, which is classically controlled on the verifier register $V$. Define
\[
\Pi^{(\text{acc})} :=  (\mathsf{Id} \otimes \mathsf{E}^{(\text{had})})^\dagger \Pi^{(\text{sub},0,0)} (\mathsf{Id} \otimes \mathsf{E}^{(\text{had})})
\]
and $\Pi^{(\text{rej})} := \mathsf{Id} - \Pi^{(\text{acc})}$. Let $\mathsf{F}$ be the operator
\[
\mathsf{F} := \Pi^{(\text{acc})} -\Pi^{(\text{rej})} = 2\Pi^{(\text{acc})} - \mathsf{Id} 
\]
which is efficiently implementable since since both of the above projections are also efficient. We claim that 
\begin{equation}\label{eq:F_equality}
    \mathsf{F}\ket{\varphi_{0, 0}} \approx_{O(\sqrt{\delta})}   \ket{\varphi_{1, 0}}
\end{equation}
and we defer the proof of \cref{eq:F_equality} to a later point in this proof.

The following claim uses the (approximate) equalities proven above and it establishes the existence of an efficient simulator that, starting from the state of an \emph{honest} prover, produces a state close to the adversarial one.
\begin{claim}[Efficient Simulator]\label{lmm:simulator}
    Let
    \ifllncs
    \begin{align*}
        \ket{\psi_{\text{tar}}}_{RVS} = \frac{1}{\sqrt{2^{2n+2\lambda+3}}}\sum_{b,x_0,x_1, s_0, s_1, \theta_0, \theta_1} &\ket{x_0, x_1, s_0, s_1, \theta_0, \theta_1}_R\ket{x_0, x_1, s_0, s_1, \theta_0, \theta_1}_V \\&(-1)^{\theta_b}\ket{b,s_b, f(x_b, y)}_{S}
    \end{align*}
    \else\[
    \ket{\psi_{\text{tar}}}_{RVS} = \frac{1}{\sqrt{2^{2n+2\lambda+3}}}\sum_{b,x_0,x_1, s_0, s_1, \theta_0, \theta_1} \ket{x_0, x_1, s_0, s_1, \theta_0, \theta_1}_R\ket{x_0, x_1, s_0, s_1, \theta_0, \theta_1}_V (-1)^{\theta_b}\ket{b,s_b, f(x_b, y)}_{S}
    \]\fi
    let $\ket{\varphi_{\text{adv}}}$ as in \cref{eq:state} and let $\ket{\alpha}_A$ be the auxiliary state of the attacker. Then there exists a QPT simulator $\mathsf{Sim}$ acting on $SA$ such that
    \[
    \Tr_R \ket{\varphi_{\text{adv}}}_{RVSAE} \approx_{O(\sqrt{\delta})} \Tr_R \mathsf{Sim}(\ket{\psi_{\text{tar}}}_{RVS}\otimes \ket{\alpha}_A).
    \]
\end{claim}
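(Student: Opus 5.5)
We will build $\mathsf{Sim}$ from the adversary's own efficient maps and use the three approximate identities already established: \cref{eq:comp_basis_ind}, which makes the extractor register $E$ hold exactly $(b,s_b,f(x_b,y))$; \cref{eq:phi_equality}, which says the $\theta_b$-dependence of $\ket{\varphi_{b,\theta_b}}$ is only a global sign $(-1)^{\theta_b}$; and \cref{eq:F_equality}, which says the reflection $\mathsf F$ maps the $b=0$ branch to the $b=1$ branch. Combining these, the adversarial state \cref{eq:state} is $O(\sqrt{\delta})$-close to
\[
\sum_{\theta_0,\theta_1}\ket{\theta_0,\theta_1}_R\ket{\theta_0,\theta_1}_V\Big((-1)^{\theta_0}\ket{\varphi_{0,0}}+(-1)^{\theta_1}\mathsf F\ket{\varphi_{0,0}}\Big).
\]
So the whole adversarial state is determined by the single state $\ket{\varphi_{0,0}}$ together with the efficient operator $\mathsf F$. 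Structurally, this matches $\ket{\psi_{\text{tar}}}$, which is $\sum(-1)^{\theta_b}\ket{b,s_b,f(x_b,y)}$ with the same $\theta$-registers.

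The simulator therefore works coherently on the honest register $S=\ket{b,s_b,f(x_b,y)}$ together with $\ket{\alpha}_A$. It proceeds in three steps.

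\textbf{Step 1.} Controlled on the $b$ bit of $S$, it prepares the adversarial branch $\ket{\varphi_{b,0}}$, but with the $V$-dependent parts of the branch expressed relative to the values already held in $S$. Concretely, $\ket{\varphi_{b,0}}$ is, by the argument behind \cref{eq:comp_basis_ind}, a state in which $E$ contains $(b,s_b,f(x_b,y))$. The simulator first runs the adversary's first-round map and $\mathsf{E}^{(\text{comp})}$ on a fresh self-prepared claw-like input. It then uses the Hadamard-reflection $\mathsf F$ (for the $b=1$ branch) to transport the branch.

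\textbf{Step 2.} It swaps the extractor output $E$ with the honest content of $S$. This is legitimate because $\Pi^{(\text{eq},b)}$ guarantees that they agree.

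\textbf{Step 3.} It uncomputes the simulator's own copies, so that the residual purification registers $\ket{c,d,\xi}$ decouple from $V$. The phase $(-1)^{\theta_b}$ is inherited automatically from $\ket{\psi_{\text{tar}}}$, and \cref{eq:phi_equality} guarantees this is the correct phase.

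To make Step 1 rigorous, we use the tensor-product structure of \cref{eq:real_state}. For fixed $b$, the registers holding $(x_{1-b},s_{1-b},\theta_{1-b})$ are uniformly maximally entangled between $R$ and $V$ and untouched by the prover. Likewise, $\ket{\varphi_{b,\theta_b}}$ depends on $V$ only through $(x_b,s_b)$, and this dependence is visible to the prover only through $E$. Hence $\Tr_R$ of both states reduces to a comparison on $V$ and the prover's side alone, where $V$ is uniform over $(x_0,x_1,s_0,s_1,\theta_0,\theta_1)$. Triangle inequalities over the three $O(\sqrt{\delta})$ approximations then give the claimed $O(\sqrt{\delta})$ bound, because each approximation is applied once and the operators involved ($\mathsf F$, $\mathsf{E}^{(\text{comp})}$) are unitary or isometric and so preserve distance.

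The main obstacle is Step 1: showing the simulator can efficiently produce the branch state given only $\ket{b,s_b,f(x_b,y)}$, without knowing $x_b,s_b$ classically. The idea is that all remaining dependence on $(x_b,s_b)$, after conditioning on $V$, is carried by $E$, so replacing $E$ by the honest register and running $(\mathsf{E}^{(\text{comp})})^\dagger$ yields the adversary's state. We would verify this by writing $\Pi^{(\text{eq},b)}\mathsf{E}^{(\text{comp})}\ket{\varphi_{c,d,z_b,b}}$ as $\ket{b,s_b,f(x_b,y)}_E\otimes\ket{\gamma_{c,d,z_b}}$. We would then argue, via the extractor's state-preservation and the purification in $\xi$, that after tracing $R$ the residual $\ket{\gamma}$ can be replaced by a state the simulator generates by itself.
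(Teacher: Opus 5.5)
There is a genuine gap, and it sits exactly where you locate the ``main obstacle'' in Step~1.

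\textbf{Why Step 1 fails as written.} When $\mathsf{Sim}$ runs the adversary and $\mathsf{E}^{(\text{comp})}$ on a self-prepared claw state, the branch it obtains is entangled with its \emph{own} sampled $(x_b',r_b',s_b',c,d)$. It is not entangled with the target's $(x_b,s_b)$ in $\tilde V$.

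Your fix rests on the assertion that, once $V$ is fixed, all dependence on $(x_b,s_b)$ is carried by $E$. This is unjustified and false as stated. $\Pi^{(\text{eq},b)}$ constrains only $E$, while $SA$ may carry further $z_b$-dependent information. Already for the honest prover, after $\mathsf{E}^{(\text{comp})}$ the register $S$ still holds a copy of $(b,s_b,f(x_b,y))$.

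Overwriting or swapping $E$ with the honest register and then applying $(\mathsf{E}^{(\text{comp})})^\dagger$ therefore leaves $SA$ correlated with the simulator's discarded values, not with $\tilde V$. Neither state-preservation of the extractor nor the form of $\xi$ says anything about how $\ket{\gamma_{c,d,z_b}}$ depends on $z_b$. That this dependence factors through $(s_b,f(x_b,y))$ is essentially what the claim asserts, and it can only come from the Hadamard test.

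\textbf{Consequences for Step 2.} For the same reason Step~2 is unfounded: $\Pi^{(\text{eq},b)}$ makes $E$ agree with the simulator's own $V$, not with $\tilde S$. Also, to disentangle $\tilde S$ you need a CNOT from $E$ that uncomputes it. A swap of two equal registers leaves a copy of $b$ in $\tilde S$, and tracing that out decoheres the branches.

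\textbf{The missing idea.} You need to transfer the correlation \emph{across branches} with the Hadamard reflection, exploiting that $\ket{\varphi_{1-b,0}}$ does not depend on $(x_b,s_b)$ at all. The paper's simulator does this in five moves.
\begin{enumerate}
\item It runs the adversary and $\mathsf{E}^{(\text{comp})}$ on its own claw state.
\item It measures the parity of the first qubit of $E$ and the first qubit of $\tilde S$. If the two bits coincide, it first applies $\mathsf F$ controlled on its own $V$, so that its state is always in the branch opposite to the target's.
\item It removes its own phase by a $\mathsf{Z}$ controlled on its own $\theta$.
\item It applies $\tilde{\mathsf F}$, controlled on the target's $(s_b,f(x_b,y))$ read from $\tilde S$ together with its own $(s_{1-b},x_{1-b})$.
\item It applies a CNOT from $E$ onto $\tilde S$ and identifies its leftover registers with those of the target, as explained below.
\end{enumerate}

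\textbf{Why this works.} Because the opposite-branch state is independent of $(x_b,s_b)$, the target's values can legitimately be plugged in at step 4. Then \cref{eq:F_equality} produces $\ket{\varphi_{b,0}}$, now correlated with $\tilde V$. The target's phase turns this into $\ket{\varphi_{b,\theta_b}}$ by \cref{eq:phi_equality}. At this point $E$ genuinely equals the content of $\tilde S$, so the CNOT of step 5 clears $\tilde S$. Finally, the simulator's leftover $(x_{1-b},s_{1-b})$ pairs are maximally entangled product factors, so they can be identified with those in $\tilde R\tilde V$.

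Your use of $\mathsf F$ ``for the $b=1$ branch'' is half of this idea. It has to be done in both branches, always starting from the opposite branch and with the mixed control, rather than relying on the $E$-swap for $b=0$.
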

In words, \cref{lmm:simulator} allows us to simulate the output of the adversary by activating the simulator on input $\ket{\psi_{\text{tar}}}$ and $\ket{\alpha}$. Let $\ket{\psi_{\text{tar}}} = \frac{1}{\sqrt{2}}(\ket{\psi_{\text{tar},0}} + \ket{\psi_{\text{tar},1}})$ where
\begin{align}\label{eq:target_state}
\ket{\psi_{\text{tar},b}} = &\frac{1}{\sqrt{2^{n+\lambda+1}}}\sum_{x_{1-b}, s_{1-b}, \theta_{1-b}} \ket{x_{1-b}, s_{1-b}, \theta_{1-b}}_R \ket{x_{1-b}, s_{1-b}, \theta_{1-b}}_V \nonumber\\ &\otimes 
\frac{1}{\sqrt{2^{n+\lambda+1}}}\sum_{x_b, s_b,  \theta_b} \ket{x_b, s_b, \theta_b}_R\ket{x_b, s_b, \theta_b}_V (-1)^{\theta_b}\ket{b,s_b, f(x_b, y)}_{S}.
\end{align}
By the \emph{distinguishing implies swapping} principle (see \cref{eq:distswap}), the state $\Tr_{RV}\ket{\psi_{\text{tar}}}$ is negligibly close (in trace distance) from a uniform mix over $\Tr_{RV}\ket{\psi_{\text{tar},b}}$. Therefore the output of the simulator can be approximated by activating $\mathsf{Sim}$ on the $S$ register of $\ket{\psi_{\text{tar},b}}$, for a randomly sampled $b\in\{0,1\}$, and $\ket{\alpha}$. The former can be rewritten as
\[
\Tr_{RV}\frac{1}{2} \sum_b \ketbra{\psi_{\text{tar},b}}{\psi_{\text{tar},b}}_{RVS} = \frac{1}{2^{1+\lambda+n}} \sum_{b,s,x} \ketbra{b}{b} \otimes \ketbra{s}{s} \otimes \ketbra{f(x, y)}{f(x, y)}
\]
which is easy to simulate given $f(x,y)$, for a randomly sampled $x$, from the ideal functionality. This concludes our proof.
\end{proof}
We conclude this section by proving the remaining technical claims.

\begin{proof}[Proof of \cref{eq:uhlmann}]
    Since the adversary succeeds with probability $1-O(\delta)$ and one of the conditions is that $d \neq 0^\lambda$, \ifllncs\else by \cref{lmm:gentle} \fi it suffices to prove the following identity
    \begin{align*}
\Tr_R &
\sum_{b,c,d\neq 0^\lambda,z_0,z_1} \ket{z_0,z_1, c,d,\theta_0, \theta_1}_R \ket{x_0,x_1, s_0,s_1,\theta_0, \theta_1}_V\ket{\varphi_{c,d,z_b, b}}_{SA}
\nonumber \\
=
\Tr_R& 
\sum_b \sum_{x_{1-b}, s_{1-b}, \theta_{1-b}}\ket{{x_{1-b}, s_{1-b}, \theta_{1-b}}}_R \ket{x_{1-b}, s_{1-b}, \theta_{1-b}}_V  \\ &\otimes 
\sum_{z_b, c, d\neq 0^\lambda,\theta_b} \ket{\theta_b, c, d,  \xi_{\theta_b, c, d, z_b}}_R \ket{x_b, s_b, \theta_b}_V \ket{\varphi_{c,d,z_b, b}}_{SA}.
\end{align*}
It suffices to show the existence of a unitary, acting only on the purification register $R$, mapping the state on the LHS to the state on the RHS.

Let $M_d$ be any $\lambda \times \lambda$ invertible matrix whose last row is $d$, which always exists if $d \neq 0^\lambda$. Note that $M_d \cdot x = y$ with $y_\lambda = \langle x, d \rangle$ and furthermore, since $M_d$ is invertible and linear, the set of pre-images $x$ with $M_d \cdot x = (\cdots, b)$ is exactly the set of solutions to the equation $\langle x, d\rangle = b$.

Define $\mathsf{U}({c,d,x_b, \theta_b})$ to be the unitary that implements the following steps:
\begin{align*}
    \ket*{0^{\lambda}} \xrightarrow{H^{\otimes (\lambda-1)}\otimes \mathsf{Id}}& \frac{1}{\sqrt{2^{\lambda-1}}}\sum_{u\in\{0,1\}^{\lambda-1}} \ket{u}\ket{0} \\
    \xrightarrow{\mathsf{Id}\otimes \mathsf{X}^{\langle x_b,c \rangle \oplus \theta_b}}& \frac{1}{\sqrt{2^{\lambda-1}}}\sum_{u\in\{0,1\}^{\lambda-1}} \ket{u}\ket{\langle x_b,c \rangle \oplus \theta_b} \\
    \xrightarrow{M_d^{-1}}& \frac{1}{\sqrt{2^{\lambda-1}}}\sum_{v:\langle v, d\rangle =\langle x_b,c \rangle \oplus \theta_b} \ket{v}.
\end{align*}
Let $\mathsf{U}^{(b)}$ be the controlled application of the above unitary on classical variables, that is
\[
\mathsf{U}^{(b)} = \sum_{c,d\neq 0^\lambda,x_b,\theta_b} \ketbra{c,d,x_b,\theta_b}{c,d,x_b,\theta_b}\otimes \mathsf{U}({c,d,x_b, \theta_b})^\dagger
\]
where $\mathsf{U}({c,d,x_b, \theta_b})^\dagger$ acts on the register storing the variable $r_b$. Note that $\mathsf{U}^{(0)}\mathsf{U}^{(1)} = \mathsf{U}^{(1)}\mathsf{U}^{(0)}$ since they are only intersecting on registers where they are classically controlled.

Then for all $b\in\{0,1\}$ we have that

\ifllncs
\begin{align*}
    &\mathsf{U}^{(b)}\mathsf{U}^{(1-b)}\sum_{c,d\neq 0^\lambda,z_0,z_1} \ket{z_0,z_1, c,d,\theta_0, \theta_1}_R \ket{x_0,x_1, s_0,s_1,\theta_0, \theta_1}_V\ket{\varphi_{c,d,z_b, b}}_{SA}\\
    =&\mathsf{U}^{(b)}\mathsf{U}^{(1-b)}
    \sum_{c,d\neq 0^\lambda}\ket{c,d}_R\sum_{z_{1-b}}\ket{z_{1-b}, \theta_{1-b}}_R\ket{x_{1-b}, s_{1-b}, \theta_{1-b}}_V
    \sum_{z_b} \ket{z_b, \theta_b}_R \ket{x_b,s_b,\theta_b}_V\ket{\varphi_{c,d,z_b, b}}_{SA}\\
=&\mathsf{U}^{(b)}
    \sum_{c,d\neq 0^\lambda}\ket{c,d}_R\sum_{x_{1-b}, s_{1-b}, \theta_{1-b}}\ket{x_{1-b},s_{1-b}, \theta_{1-b}}_R\ket{x_{1-b}, s_{1-b}, \theta_{1-b}}_V \\ 
    &\ \ \ \ \ \ \ \ \ \sum_{z_b} \ket{z_b, \theta_b}_R \ket{x_b,s_b,\theta_b}_V\ket{\varphi_{c,d,z_b, b}}_{SA}\\
=&\sum_{x_{1-b}, s_{1-b}, \theta_{1-b}}\ket{{x_{1-b}, s_{1-b}, \theta_{1-b}}}_R \ket{x_{1-b}, s_{1-b}, \theta_{1-b}}_V \\
& \ \ \ \ \ \ \ \ \ \sum_{z_b, c, d\neq 0^\lambda,\theta_b} \ket{\theta_b, c, d,  \xi_{\theta_b, c, d, z_b}}_R \ket{x_b, s_b, \theta_b}_V \ket{\varphi_{c,d,z_b, b}}_{SA}.
\end{align*}
\else
\begin{align*}
    &\mathsf{U}^{(b)}\mathsf{U}^{(1-b)}\sum_{c,d\neq 0^\lambda,z_0,z_1} \ket{z_0,z_1, c,d,\theta_0, \theta_1}_R \ket{x_0,x_1, s_0,s_1,\theta_0, \theta_1}_V\ket{\varphi_{c,d,z_b, b}}_{SA}\\
    =&\mathsf{U}^{(b)}\mathsf{U}^{(1-b)}
    \sum_{c,d\neq 0^\lambda}\ket{c,d}_R\sum_{z_{1-b}}\ket{z_{1-b}, \theta_{1-b}}_R\ket{x_{1-b}, s_{1-b}, \theta_{1-b}}_V
    \sum_{z_b} \ket{z_b, \theta_b}_R \ket{x_b,s_b,\theta_b}_V\ket{\varphi_{c,d,z_b, b}}_{SA}\\
=&\mathsf{U}^{(b)}
    \sum_{c,d\neq 0^\lambda}\ket{c,d}_R\sum_{x_{1-b}, s_{1-b}, \theta_{1-b}}\ket{x_{1-b},s_{1-b}, \theta_{1-b}}_R\ket{x_{1-b}, s_{1-b}, \theta_{1-b}}_V
    \sum_{z_b} \ket{z_b, \theta_b}_R \ket{x_b,s_b,\theta_b}_V\ket{\varphi_{c,d,z_b, b}}_{SA}\\
=&\sum_{x_{1-b}, s_{1-b}, \theta_{1-b}}\ket{{x_{1-b}, s_{1-b}, \theta_{1-b}}}_R \ket{x_{1-b}, s_{1-b}, \theta_{1-b}}_V 
\sum_{z_b, c, d\neq 0^\lambda,\theta_b} \ket{\theta_b, c, d,  \xi_{\theta_b, c, d, z_b}}_R \ket{x_b, s_b, \theta_b}_V \ket{\varphi_{c,d,z_b, b}}_{SA}.
\end{align*}
\fi
where the first identity is a re-arrangement of the registers, the second one holds because $\mathsf{U}^{(1-b)}$ precisely uncomputes $r_{1-b}$ and, by linearity, there are exactly $2^{\lambda-1}$ values of $r_{1-b}$ consistent with $\theta_{1-b} = 0$ and 
$\theta_{1-b} = 1$, so the amplitudes of $\theta_{1-b}$ are uniform. Finally, the last equality follows by setting
\[
\mathsf{U}^{(b)}\ket{c,d,\theta_b,x_b,s_b, r_b}_R = \ket{c,d,\theta_b}_R \underbrace{\ket{x_b,s_b}_R \mathsf{U}({c,d,x_b, \theta_b})^\dagger \ket{r_b}_R}_{=: \ket{\xi_{\theta_b, c, d, z_b}}}
\]
and by a re-arrangement of the registers.
\end{proof}

\begin{proof}[Proof of \cref{eq:phi_equality} and \cref{eq:F_equality}]
    Let $\Pi^{(\text{sub})}_{VE}$ and $\mathsf{E}^{(\text{had})}$ defined as in \cref{claim:had_extractor} and let us rewrite the state as
    \[
    \frac{1}{\sqrt{2}}\sum_b  \sum_{\theta_{1-b}}\ket{\theta_{1-b}}_R\ket{\theta_{1-b}}_V \otimes  \sum_{\theta_b} \ket{\theta_b}_R \ket{\theta_b}_V \otimes \ket{\varphi_{b, \theta_b}}_{RVSAE} = \frac{1}{\sqrt{2}} \sum_b \ket{\varphi_b}_{RSVAE}
    \]
    where we have re-introduced the normalization factor for the $b$ variable. By \cref{claim:had_extractor} we have that
    \begin{equation}\label{eq:normsqared}        
    \norm{\frac{1}{\sqrt{2}} \sum_b  \Pi^{(\text{sub})}_{VE}(\mathsf{Id} \otimes \mathsf{E}^{(\text{had})})\ket{\varphi_b}_{RSVAE}}^2 = 1 - O({\delta})
    \end{equation}
    At the same time, for all $b\in\{0,1\}$ we can expand the above state as
    \[
   \Pi^{(\text{sub})}_{VE}(\mathsf{Id} \otimes \mathsf{E}^{(\text{had})}) \ket{\varphi_b}_{RSVAE} = \Pi^{(\text{sub})}_{VE}\sum_{\theta_{1-b}}\ket{\theta_{1-b}}_R\ket{\theta_{1-b}}_V \otimes  \sum_{\theta_b} \ket{\theta_b}_R \ket{\theta_b}_V \otimes (\mathsf{Id} \otimes \mathsf{E}^{(\text{had})})\ket{\varphi_{b, \theta_b}}_{RVSAE}
    \]
    and note that the register containing $\theta_{1-b}$ is in tensor product with the rest of the system, so the probability that the projection $\Pi^{(\text{sub})}$ succeeds is precisely $1/2$, since it is testing a linear relation. We can conclude that
    \[
    \norm{\Pi^{(\text{sub})}_{VE}(\mathsf{Id} \otimes \mathsf{E}^{(\text{had})}) \ket{\varphi_b}_{RSVAE}}^2 = \frac{1}{2}.
    \]
    Let us define $\ket{\tilde{\varphi}_b}:=\Pi^{(\text{sub})}_{VE}(\mathsf{Id} \otimes \mathsf{E}^{(\text{had})}) \ket{\varphi_b}$, then by \cref{eq:normsqared} we have that
    \[
    \norm{\sum_b\ket{\tilde{\varphi}_b}}^2= \sum_b\norm{\ket{\tilde{\varphi}_b}}^2 + 2\Re\braket{\tilde{\varphi}_0}{\tilde{\varphi}_1} =2(1-O({\delta}))
    \]
    thus the real component of $\braket{\tilde{\varphi}_0}{\tilde{\varphi}_1}$ is bounded by $1/2 - O({\delta})$ and furthermore
    \[
    \abs{1/2 - O({\delta})}\leq \abs{\braket{\tilde{\varphi}_0}{\tilde{\varphi}_1}} \leq \norm{\ket{\tilde{\varphi}_0}} \cdot \norm{\ket{\tilde{\varphi}_1}} = \frac{1}{2}
    \]
    by the Cauchy-Schwarz inequality. Finally, we obtain that the trace distance between $\ket{\tilde{\varphi}_0}$ and $\ket{\tilde{\varphi}_1}$ is bounded by
    \[
    \frac{1}{2}\sqrt{1-4\abs{\braket{\tilde{\varphi}_0}{\tilde{\varphi}_1}}^2} \leq 
    \frac{1}{2}\sqrt{1-4\abs{1/2 - O({\delta})}^2} \leq O(\sqrt{\delta})
    \]
    since the function is decreasing in $\abs{\braket{\tilde{\varphi}_0}{\tilde{\varphi}_1}}$. To summarize, we obtained that 
    \begin{equation}\label{eq:accept}
        \Pi^{(\text{sub})}_{VE}(\mathsf{Id} \otimes \mathsf{E}^{(\text{had})}) \ket{\varphi_0} \approx_{O(\sqrt{\delta})} \Pi^{(\text{sub})}_{VE}(\mathsf{Id} \otimes \mathsf{E}^{(\text{had})}) \ket{\varphi_1}
    \end{equation}
    and, using a symmetric argument, i.e., the fact that the rejection probability is bounded by $O({\delta})$, one can also prove that
\begin{equation}\label{eq:reject}
        (\mathsf{Id}-\Pi^{(\text{sub})}_{VE})(\mathsf{Id} \otimes \mathsf{E}^{(\text{had})}) \ket{\varphi_0} \approx_{O(\sqrt{\delta})} -(\mathsf{Id}-\Pi^{(\text{sub})}_{VE})(\mathsf{Id} \otimes \mathsf{E}^{(\text{had})}) \ket{\varphi_1}.
    \end{equation}   
For notational convenience, let us rewrite 
\[
\Pi^{(\text{sub})}_{VE}=\sum_{\theta_0, \theta_1} \ketbra{\theta_0, \theta_1}{ \theta_0, \theta_1}_V \otimes
    \sum_{t: P(t)= \theta_0 \oplus\theta_1}
    \ketbra{t}{t}_{VE}
\]
where $t = (s_0,s_1,x_0,x_1,u,v,w)$ and $P(t)$ as the predicate defined by \cref{eq:hadamard}. Consequently, we also have
\[
\mathsf{Id}- \Pi^{(\text{sub})}_{VE}=\sum_{\theta_0, \theta_1} \ketbra{\theta_0, \theta_1}{ \theta_0, \theta_1}_V \otimes
    \sum_{t: P(t)\neq \theta_0 \oplus\theta_1}
    \ketbra{t}{t}_{VE}.
\]
For any fixed pair $(\theta_0, \theta_1)$, the action of the operators $\Pi^{(\text{sub})}$ and $(\mathsf{Id}- \Pi^{(\text{sub})})$ splits the state $\ket{\varphi_b}$ into two orthogonal subspaces. Furthermore, since orthogonal projections do not increase vector lengths, for any fixed pair $(\theta_0, \theta_1)$ we also have
\begin{align*}
&\Pi^{(\text{sub})}_{VE}\ket{\theta_0, \theta_1}_R\ket{\theta_0, \theta_1}_V \mathsf{E}^{(\text{had})}\ket{\varphi_{0, \theta_0}}_{RVSAE}\\ 
    &=\ket{\theta_0, \theta_1}_R\ket{\theta_0, \theta_1}_V \left(\sum_{t: P(t)= \theta_0 \oplus\theta_1}
    \ketbra{t}{t}_{VE}\right)\mathsf{E}^{(\text{had})}\ket{\varphi_{0, \theta_0}}_{RVSAE}\\ &\approx_{O(\sqrt{\delta})}
    \ket{\theta_0, \theta_1}_R\ket{\theta_0, \theta_1}_V \left(\sum_{t: P(t)= \theta_0 \oplus\theta_1}
    \ketbra{t}{t}_{VE}\right)\mathsf{E}^{(\text{had})}\ket{\varphi_{1, \theta_1}}_{RVSAE} 
\end{align*}
by \cref{eq:accept}, and similarly
\begin{align*}
&\left(\mathsf{Id}-\Pi^{(\text{sub})}_{VE}\right)\ket{\theta_0, \theta_1}_R\ket{\theta_0, \theta_1}_V \mathsf{E}^{(\text{had})}\ket{\varphi_{0, \theta_0}}_{RVSAE}\\ 
    &=\ket{\theta_0, \theta_1}_R\ket{\theta_0, \theta_1}_V \left(\sum_{t: P(t)\neq \theta_0 \oplus\theta_1}
    \ketbra{t}{t}_{VE}\right)\mathsf{E}^{(\text{had})}\ket{\varphi_{0, \theta_0}}_{RVSAE}\\ &\approx_{O(\sqrt{\delta})}
    \ket{\theta_0, \theta_1}_R\ket{\theta_0, \theta_1}_V (-1)\left(\sum_{t: P(t)\neq \theta_0 \oplus\theta_1}
    \ketbra{t}{t}_{VE}\right)\mathsf{E}^{(\text{had})}\ket{\varphi_{1, \theta_1}}_{RVSAE} 
\end{align*}
by \cref{eq:reject}. The validity of \cref{eq:F_equality} then follows by combining the above two identities with $(\theta_0, \theta_1) = (0,0)$ and a triangle inequality. Furthermore,  fixing $(\theta_0, \theta_1) = (0,0)$ first and then $(\theta_0, \theta_1) = (1,0)$, we can derive
\begin{align*}
    \left(\sum_{t: P(t)= 0}
    \ketbra{t}{t}_{VE}\right)\mathsf{E}^{(\text{had})}\ket{\varphi_{0, 0}}_{RVSAE} \approx_{O(\sqrt{\delta})} 
&\left(\sum_{t: P(t)= 0}
    \ketbra{t}{t}_{VE}\right)\mathsf{E}^{(\text{had})}\ket{\varphi_{1, 0}}_{RVSAE}\\
   \approx_{O(\sqrt{\delta})} 
& (-1)  \left(\sum_{t: P(t)= 0}
    \ketbra{t}{t}_{VE}\right)\mathsf{E}^{(\text{had})}\ket{\varphi_{0, 1}}_{RVSAE}.
\end{align*}
Similarly, we can derive 
\begin{align*}
    \left(\sum_{t: P(t)= 1}
    \ketbra{t}{t}_{VE}\right)\mathsf{E}^{(\text{had})}\ket{\varphi_{0, 0}}_{RVSAE} \approx_{O(\sqrt{\delta})} 
&(-1)\left(\sum_{t: P(t)= 1}
    \ketbra{t}{t}_{VE}\right)\mathsf{E}^{(\text{had})}\ket{\varphi_{1, 0}}_{RVSAE}\\
\approx_{O(\sqrt{\delta})} 
& (-1)  \left(\sum_{t: P(t)= 1}
   \ketbra{t}{t}_{VE}\right)\mathsf{E}^{(\text{had})}\ket{\varphi_{0, 1}}_{RVSAE}.
\end{align*}
Then, by a triangle inequality and by unitary invariance of the trace distance,  we obtain
\[
\ket{\varphi_{0, 0}}_{RVSAE} \approx_{O(\sqrt{\delta})} - \ket{\varphi_{0, 1}}_{RVSAE}
\]
and the other approximate equality in \cref{eq:phi_equality} can be shown with the same strategy.

\end{proof}

\begin{proof}[Proof of \cref{lmm:simulator}]
    We describe the simulator. On input $\ket{\alpha}$, the simulator runs the first round interaction (including the computation of the verifier), followed by the extractor $\mathsf{E}^{(\text{comp})}$ to compute a state that, by \cref{eq:comp_basis_ind}, is within $O(\sqrt{\delta})$ trace distance from $\ket{\varphi_{\text{adv}}}$. 
    
    To avoid ambiguity, we relabel the registers for the state $\ket{\varphi_{\text{tar}}}$ as $\Tilde{R}\Tilde{V}\Tilde{S}$. Acting on the $\Tilde{S}$ register of $\ket{\varphi_{\text{tar}}}$, the simulator measures the first qubit of $E$ and the first qubit of $\Tilde{S}$ with the projective measurement
    \[
    \Pi^{(0)} = \sum_{(a,b) : a\oplus b = 0} \ketbra{a,b}{a,b} \quad\text{and}\quad
    \Pi^{(1)} = \mathsf{Id} -  \Pi^{(0)}.
    \]
    If the result of the measurement is $1$, the simulator proceeds as follows:
    \begin{itemize}
        \item Controlled on the first qubit of $\Tilde{S}$ being $0$, apply the following operators:
        \begin{itemize}
            \item Controlled on the value $\theta_1 =1$ in register $V$, apply a $\mathsf{Z}$ (phase flip) gate to the first qubit of $E$.
            \item Let $\Tilde{\mathsf{F}}$ be defined as $\mathsf{F}$, except that it is classically controlled on the $(s_1, x_1)$ variables in register $V$ and the $(s_0, f(x_0,y))$ variables in register $\Tilde{S}$. Apply $\Tilde{\mathsf{F}}^\dagger$. %
            \item $\mathsf{CNOT}$ the content of the register $E$ onto the register $\Tilde{S}$.
        \end{itemize}
        \item Controlled on the first qubit of $\Tilde{S}$ being $1$, 
        apply the following operators:
        \begin{itemize}
            \item Controlled on the value $\theta_0 =1$ in register $V$, apply a $\mathsf{Z}$ (phase flip) gate to the first qubit of $E$.
            \item Apply the $\Tilde{\mathsf{F}}$ operator, classically controlled on the $(s_0, x_0)$ variables in register $V$ and the $(s_1, f(x_1,y))$ variables in register $\Tilde{S}$.
            \item $\mathsf{CNOT}$ the content of the register $E$ onto the register $\Tilde{S}$.
        \end{itemize}
    \end{itemize}
    Finally, the simulator traces out the registers $R$, $V$, and $\Tilde{S}$.
    On the other hand, if the result of the measurement is $0$, then we proceed as above except that, after the application of the phase flip, we add an application of the $\mathsf{F}$ operator \emph{only} controlled on the $V$ register (in both cases). This is \emph{in addition} to the above instructions, which are also executed by the simulator.

    It is clear that the simulator runs in QPT and only acts on $\ket{\alpha}$ and the $\Tilde{S}$ register, plus some internal registers. All is left to be shown is that the state produced by the simulator is approximately equal to $\ket{\varphi_{\text{adv}}}_{\tilde{R}\tilde{V}SAE}$. The first projective measurement has the effect of entangling the internal state of the simulator with the honest target state. More precisely, the joint state of the system after observing the outcome $1$ is $O(\sqrt{\delta})$-close to 
    \begin{align*}
        \sum_{c} \ket{\varphi_{\text{tar}, c}}_{\Tilde{R}\Tilde{V}\Tilde{S}} \otimes \sum_{\theta_{c}}\ket{\theta_{c}}_R\ket{\theta_{c}}_V \otimes  \sum_{\theta_{1-c}} \ket{\theta_{1-c}}_R \ket{\theta_{1-c}}_V \otimes \ket{\varphi_{{1-c}, \theta_{1-c}}}_{RVSAE}
    \end{align*}
    omitting normalization factors, where $\ket{\varphi_{\text{tar}, c}}$ is defined as in \cref{eq:target_state} and the other terms in the tensor product are derived from \cref{eq:state}. Next, we analyze the effect of the operations controlled on $c=0$ (the case $c=1$ is identical but with the variables swapped). The effect of the controlled phase flip is to turn the state into
    \begin{align}\label{eq:intermediate_state}
        &\ket{\varphi_{\text{tar}, 0}}_{\Tilde{R}\Tilde{V}\Tilde{S}} \otimes \sum_{\theta_{0}}\ket{\theta_{0}}_R\ket{\theta_{0}}_V \otimes  \sum_{\theta_{1}} \ket{\theta_{1}}_R \ket{\theta_{1}}_V \otimes (-1)^{\theta_1}\ket{\varphi_{{1}, \theta_{1}}}_{RVSAE}\nonumber\\
        \approx_{O(\sqrt{\delta})}&\ket{\varphi_{\text{tar}, 0}}_{\Tilde{R}\Tilde{V}\Tilde{S}} \otimes \sum_{\theta_{0}}\ket{\theta_{0}}_R\ket{\theta_{0}}_V \otimes  \sum_{\theta_{1}} \ket{\theta_{1}}_R \ket{\theta_{1}}_V \otimes \ket{\varphi_{{1}, 0}}_{RVSAE}
    \end{align}
    by \cref{eq:phi_equality}. Expanding the above state we obtain
    \begin{align*}
        &\sum_{x_{1}, s_{1}, \theta_{1}} \ket{x_{1}, s_{1}, \theta_{1}}_{\tilde{R}} \ket{x_{1}, s_{1}, \theta_{1}}_{\tilde{V}} \nonumber\\ &\otimes 
\sum_{x_0, s_0,  \theta_0} \ket{x_0, s_0, \theta_0}_{\tilde{R}}\ket{x_0, s_0, \theta_0}_{\tilde{V}} (-1)^{\theta_0}\ket{0,s_0, f(x_0, y)}_{\tilde{S}}\\
&\otimes \sum_{\theta_{0}}\ket{\theta_{0}}_R\ket{\theta_{0}}_V \otimes  \sum_{\theta_{1}} \ket{\theta_{1}}_R \ket{\theta_{1}}_V \\
&\otimes \sum_{x_{0}, s_{0}}\ket{{x_{0}, s_{0}}}_R \ket{x_{0}, s_{0}}_V \otimes \sum_{z_1, c, d} \ket{c, d,  \xi_{0, c, d, z_1}}_R \Pi^{(\text{eq},1)}\ket{x_1, s_1}_V \mathsf{E}^{(\text{comp})}\ket{\varphi_{c,d,z_1, 1}}_{SA}.
    \end{align*}
    For the analysis, we can equivalently substitute the application of $\Tilde{\mathsf{F}}^\dagger$ with the application of $\mathsf{F}^\dagger$ controlled on the variables $(s_0, x_0)$ in register $\Tilde{V}$ (instead of $\Tilde{S}$), which does not change the resulting state. Let us define the operator $\mathsf{U}^{(f,y)}$ acting on registers $\tilde{V}$ and $\Tilde{S}$, as
    \[
    \mathsf{U}^{(f,y)} \ket{s,x} \ket{b,w,z} = \ket{s,x}\ket{b,w\oplus s, z \oplus f(x,y)} 
    \]
    in words, it uncomputes the content of $\Tilde{S}$. Since $\mathsf{F}$ is classically controlled on $\tilde{V}$ and $\mathsf{U}^{(f,y)}$ is diagonal in the computational basis, they commute and we have that
    \[
    \mathsf{F}^\dagger = \mathsf{Id}\cdot \mathsf{F}^\dagger = (\mathsf{U}^{(f,y)})^\dagger\mathsf{U}^{(f,y)}\mathsf{F}^\dagger = (\mathsf{U}^{(f,y)})^\dagger\mathsf{F}^\dagger\mathsf{U}^{(f,y)}.
    \]
    Next, we analyze the application of the three operators to the above state. The application of $\mathsf{U}^{(f,y)}$ results into the state
\begin{align*}
        &\sum_{x_{1}, s_{1}, \theta_{1}} \ket{x_{1}, s_{1}, \theta_{1}}_{\tilde{R}} \ket{x_{1}, s_{1}, \theta_{1}}_{\tilde{V}} \nonumber \otimes 
\sum_{x_0, s_0,  \theta_0} \ket{x_0, s_0, \theta_0}_{\tilde{R}}\ket{x_0, s_0, \theta_0}_{\tilde{V}} (-1)^{\theta_0}\\
&\otimes \sum_{\theta_{0}}\ket{\theta_{0}}_R\ket{\theta_{0}}_V \otimes  \sum_{\theta_{1}} \ket{\theta_{1}}_R \ket{\theta_{1}}_V \\
&\otimes \sum_{x_{0}, s_{0}}\ket{{x_{0}, s_{0}}}_R \ket{x_{0}, s_{0}}_V \otimes \sum_{z_1, c, d} \ket{c, d,  \xi_{0, c, d, z_1}}_R \Pi^{(\text{eq}, 1)}\ket{x_1, s_1}_V \mathsf{E}^{(\text{comp})}\ket{\varphi_{c,d,z_1, 1}}_{SA}\\
=&\sum_{x_{1}, s_{1}} \ket{x_{1}, s_{1}}_{\tilde{R}} \ket{x_{1}, s_{1}}_{\tilde{V}} \otimes \sum_{\theta_1}\ket{\theta_1}_{\Tilde{R}}\ket{\theta_1}_{\Tilde{V}}\nonumber
\otimes \sum_{\theta_0}\ket{\theta_0}_{\Tilde{R}}\ket{\theta_0}_{\Tilde{V}}(-1)^{\theta_0}
\\
&\otimes \sum_{\theta_{0}}\ket{\theta_{0}}_R\ket{\theta_{0}}_V \otimes  \sum_{\theta_{1}} \ket{\theta_{1}}_R \ket{\theta_{1}}_V \otimes \sum_{x_{0}, s_{0}}\ket{{x_{0}, s_{0}}}_R \ket{x_{0}, s_{0}}_V \\
&\otimes 
\underbrace{\sum_{x_0, s_0} \ket{x_0, s_0}_{\tilde{R}}\ket{x_0, s_0}_{\tilde{V}} 
\otimes \sum_{z_1, c, d} \ket{c, d,  \xi_{0, c, d, z_1}}_R \Pi^{(\text{eq},1)} \ket{x_1, s_1}_V \mathsf{E}^{(\text{comp})}\ket{\varphi_{c,d,z_1, 1}}_{SA}}_{= \ket{\varphi_{1,0}}}
    \end{align*}
temporarily tracing out $\tilde{S}$, since it is now in the $\ket{0,0,0}$ state, and re-arranging the registers. Applying the $\mathsf{F}$ operator and again re-arranging registers, results into
\begin{align*}
    &\sum_{x_{1}, s_{1}} \ket{x_{1}, s_{1}}_{\tilde{R}} \ket{x_{1}, s_{1}}_{\tilde{V}} 
\otimes \sum_{\theta_{0}}\ket{\theta_{0}}_R\ket{\theta_{0}}_V \otimes  \sum_{\theta_{1}} \ket{\theta_{1}}_R \ket{\theta_{1}}_V \otimes \sum_{x_{0}, s_{0}}\ket{{x_{0}, s_{0}}}_R \ket{x_{0}, s_{0}}_V \\
&\otimes \sum_{\theta_1}\ket{\theta_1}_{\Tilde{R}}\ket{\theta_1}_{\Tilde{V}}
\otimes \sum_{\theta_0}\ket{\theta_0}_{\Tilde{R}}\ket{\theta_0}_{\Tilde{V}}
\otimes(-1)^{\theta_0}\mathsf{F}\ket{\varphi_{1,0}}_{\Tilde{R}\Tilde{V}RVSAE}\\
\approx_{O(\sqrt{\delta})}
&\sum_{x_{1}, s_{1}} \ket{x_{1}, s_{1}}_{\tilde{R}} \ket{x_{1}, s_{1}}_{\tilde{V}} 
\otimes \sum_{\theta_{0}}\ket{\theta_{0}}_R\ket{\theta_{0}}_V \otimes  \sum_{\theta_{1}} \ket{\theta_{1}}_R \ket{\theta_{1}}_V \otimes \sum_{x_{0}, s_{0}}\ket{{x_{0}, s_{0}}}_R \ket{x_{0}, s_{0}}_V \\
&\otimes \sum_{\theta_1}\ket{\theta_1}_{\Tilde{R}}\ket{\theta_1}_{\Tilde{V}}
\otimes \sum_{\theta_0}\ket{\theta_0}_{\Tilde{R}}\ket{\theta_0}_{\Tilde{V}}
\otimes(-1)^{\theta_0}\ket{\varphi_{0,0}}_{\Tilde{R}\Tilde{V}RVSAE}\\
\approx_{O(\sqrt{\delta})}
&\sum_{x_{1}, s_{1}} \ket{x_{1}, s_{1}}_{\tilde{R}} \ket{x_{1}, s_{1}}_{\tilde{V}} 
\otimes \sum_{\theta_{0}}\ket{\theta_{0}}_R\ket{\theta_{0}}_V \otimes  \sum_{\theta_{1}} \ket{\theta_{1}}_R \ket{\theta_{1}}_V \otimes \sum_{x_{0}, s_{0}}\ket{{x_{0}, s_{0}}}_R \ket{x_{0}, s_{0}}_V \\
&\otimes \sum_{\theta_1}\ket{\theta_1}_{\Tilde{R}}\ket{\theta_1}_{\Tilde{V}}
\otimes \sum_{\theta_0}\ket{\theta_0}_{\Tilde{R}}\ket{\theta_0}_{\Tilde{V}}
\otimes\ket{\varphi_{0,\theta_0}}_{\Tilde{R}\Tilde{V}RVSAE}
\end{align*}
where the first identity follows by \cref{eq:F_equality} and the second by \cref{eq:phi_equality}. Since the subsequent operators only act on 
$\ket{\varphi_{0,\theta_0}}$, it suffices to analyze their action on the expanded state
\begin{align} \label{eq:finalfinal}
&\ket{\varphi_{0,\theta_0}}_{\Tilde{R}\Tilde{V}RVSAE} \nonumber\\
=& 
\sum_{x_{1}, s_{1}}\ket{{x_{1}, s_{1}}}_R \ket{x_{1}, s_{1}}_V \otimes \sum_{z_0, c, d} \ket{c, d,  \xi_{\theta_0, c, d, z_0}}_{\tilde{R}} \Pi^{(\text{eq},0)}\ket{x_0, s_0}_{\tilde{V}} \mathsf{E}^{(\text{comp})}\ket{\varphi_{c,d,z_0, 0}}_{SA}.
\end{align}
Here, we can simply observe that the above state is supported only on states where the $E$ register contains the triple $(0,s_0, f(x_0, y))$, by definition of the projector $\Pi^{(\text{eq},0)}$. Thus the application of  $(\mathsf{U}^{(f,y)})^\dagger$ is exactly canceled out by the $\mathsf{CNOT}$ on the same register, i.e., we have
\[
\mathsf{CNOT}_{E\Tilde{S}}(\mathsf{U}^{(f,y)})^\dagger_{\Tilde{V}\Tilde{S}}\ket{\varphi_{0,\theta_0}}_{\Tilde{R}\Tilde{V}RVSAE}\otimes \ket{0,0,0}_{\Tilde{S}} = \ket{\varphi_{0,\theta_0}}_{\Tilde{R}\Tilde{V}RVSAE}\otimes \ket{0,0,0}_{\Tilde{S}}.
\]
Finally, note that the registers $R$ and $V$ in the state in \cref{eq:finalfinal} are in tensor product with the rest of the state, so swapping them with the corresponding registers $\Tilde{R}$ and $\Tilde{V}$ does not change the state. Tracing out the registers $R$ and $V$, we obtain the state
\begin{align*}
& \sum_{\theta_1}\ket{\theta_1}_{\Tilde{R}}\ket{\theta_1}_{\Tilde{V}}
\otimes \sum_{\theta_0}\ket{\theta_0}_{\Tilde{R}}\ket{\theta_0}_{\Tilde{V}}
\\ &\otimes\sum_{x_{1}, s_{1}}\ket{{x_{1}, s_{1}}}_{\Tilde{R}} \ket{x_{1}, s_{1}}_{\Tilde{V}}\otimes \sum_{z_0, c, d} \ket{c, d,  \xi_{\theta_0, c, d, z_0}}_{\tilde{R}} \Pi^{(\text{eq}, 0)}\ket{x_0, s_0}_{\tilde{V}} \mathsf{E}^{(\text{comp})}\ket{\varphi_{c,d,z_0, 0}}_{SA}\\
    &= \sum_{\theta_1}\ket{\theta_1}_{\Tilde{R}}\ket{\theta_1}_{\Tilde{V}}
\otimes \sum_{\theta_0}\ket{\theta_0}_{\Tilde{R}}\ket{\theta_0}_{\Tilde{V}}
\otimes\ket{\varphi_{0,\theta_0}}_{\Tilde{R}\Tilde{V}SAE}.
\end{align*}
Combining with same analysis for the bit $c=1$, overall we obtain that the joint state is within $O(\sqrt{\delta})$ trace distance from
\[
\sum_c \sum_{\theta_{1-c}}\ket{\theta_{1-c}}_{\Tilde{R}}\ket{\theta_{1-c}}_{\Tilde{V}}
\otimes \sum_{\theta_c}\ket{\theta_c}_{\Tilde{R}}\ket{\theta_c}_{\Tilde{V}}
\otimes\ket{\varphi_{c,\theta_c}}_{\Tilde{R}\Tilde{V}SAE}.
\]
which is precisely the state in \cref{eq:state}.

This concludes the proof in case the measurement of the simulator returns $1$. On the other hand, if the measurement returns $0$ (again we only show the case controlled on $c=0$ and the case $c=1$ follows with the same argument plus the fact that $\mathsf{F}^\dagger = \mathsf{F}$) with the same argument as above we can see that the joint state after the phase flip equals
\[
\ket{\varphi_{\text{tar}, 0}}_{\Tilde{R}\Tilde{V}\Tilde{S}} \otimes \sum_{\theta_{0}}\ket{\theta_{0}}_R\ket{\theta_{0}}_V \otimes  \sum_{\theta_{1}} \ket{\theta_{1}}_R \ket{\theta_{1}}_V \otimes \ket{\varphi_{{0}, 0}}_{RVSAE}.
\]
The application of the extra $\mathsf{F}$ operator, this time controlled only on $V$, results into 
\begin{align*}
    &\ket{\varphi_{\text{tar}, 0}}_{\Tilde{R}\Tilde{V}\Tilde{S}} \otimes \sum_{\theta_{0}}\ket{\theta_{0}}_R\ket{\theta_{0}}_V \otimes  \sum_{\theta_{1}} \ket{\theta_{1}}_R \ket{\theta_{1}}_V \otimes \mathsf{F}\ket{\varphi_{{0}, 0}}_{RVSAE}\\
    \approx_{O(\sqrt{\delta})}&\ket{\varphi_{\text{tar}, 0}}_{\Tilde{R}\Tilde{V}\Tilde{S}} \otimes \sum_{\theta_{0}}\ket{\theta_{0}}_R\ket{\theta_{0}}_V \otimes  \sum_{\theta_{1}} \ket{\theta_{1}}_R \ket{\theta_{1}}_V \otimes \ket{\varphi_{{1}, 0}}_{RVSAE}
\end{align*}
by \cref{eq:F_equality}. However, note that this state is now identical to the one in \cref{eq:intermediate_state}, so precisely the same argument as above applies.
\end{proof}

\subsection{Compressing Interactive Protocols} 
\label{sec:compress_inter_protocols}

Before presenting our compiler, let us make some simplifications to the input protocol, which do not affect the generality of our result but make the presentation cleaner. First, we can assume that the verifier samples a random seed $s \gets \{0,1\}^\lambda$ at the beginning of the protocol, and afterwards it is completely \emph{deterministic}. Assuming one-way functions, this is without loss of generality since $s$ can be expanded into arbitrarily-many random coins, using a sufficiently expanding PRG. Then the \emph{next message} function of the verifier for the $i$-th round can be expressed as
\[
f_i(s, m_1, \dots, m_i)
\]
where $s$ is the aforementioned seed and $(m_1, \dots, m_i)$ are the \emph{prover} messages received so far. This is once again without loss of generality: Although in general the verifier's next message function could also depend on the previous messages of the verifier, these can be recomputed from $s$ and the prover's messages.
Finally, assuming that the prover sends the last message and that $r = r(\lambda)$ is the number of rounds of the protocol, we can also assume that the output of the verifier is computed by a function $f_r(s, m_1, \dots, m_r) \to z$ of the protocol transcript.

With the above structure in mind, let us describe our compiler, assuming the existence of a chosen-input committed SFS protocol, a collision-resistant commitment, and a state-preserving succinct AoK for NP. Let $(P, V)$ be the prover and verifier algorithm, respectively, let $\chi$ be the public input and let $\varepsilon$ be the input parameter. As the first message, the verifier samples a commitment function $\mathsf{Com}$ and sends it to the prover. Then, for each $i \in \{1, \dots, r-1\}$ the parties proceed as follows.
\begin{itemize}
    \item (Prover Message) The prover computes locally its next message of the protocol $m_i$ and sends $c_i := \mathsf{Com}(m_i)$ to the verifier. Then, the prover and the verifier engage in a state-preserving succinct AoK of knowledge, where the prover proves the knowledge of a pre-image of $c_i$.
    \item (Verifier Message) The prover and the verifier engage in a chosen-input committed SFS protocol for $f_i$ with parameter $O(\varepsilon^{2})$, where the verifier's input are $s$ and the commitments $(c_1, \dots, c_i)$\footnote{In a slight abuse of notation, we assume that the concatenation of several commitments is also a valid commitment.} and the prover's input are the messages $(m_1, \dots, m_i)$. At the end of the interaction, the prover receives $f_i(s, m_1, \dots, m_i)$.
\end{itemize}
The prover computes its last message $m_r$ and sends $c_r := \mathsf{Com}(m_r)$ to the verifier. Then, the prover and the verifier engage in a state-preserving succinct AoK of knowledge, where the prover proves the knowledge of a pre-image of $c_r$. Finally, the verifier sends $s$ to the prover. The prover returns $z$ and the prover and verifier then engage in a state-preserving succinct AoK, where the prover proves the knowledge of $(m_1, \dots, m_r)$ such that:
\begin{itemize}
    \item For all $i\in\{1, \dots, r\}$, it holds that $c_i:= \mathsf{Com}(m_i)$.
    \item $f_r(s,m_1, \dots, m_r) = z$.
\end{itemize}
The verifier outputs $z$ if all sub-protocols accept, and otherwise it outputs $\bot$.

The compiler clearly preserves the correctness of the protocol, and the efficiency follows from the properties of the succinct AoK and the chosen-input committed SFS. The following establishes the soundness of the compiler.

\begin{lemma}[Soundness]\label{lmm:soundnes}
    Assume the existence of succinct collision-resistant commitments, state-preserving succinct AoKs, and chosen-input committed SFS.
    The compiler as described above satisfies soundness according to \cref{def:ComSFS}.
\end{lemma}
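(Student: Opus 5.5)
We prove soundness in the sense of \cref{thm:compcom}: for any QPT compiled adversary $\tilde\Adv$ we build a QPT adversary $\Adv$ against the original protocol $(P,V)$ such that $\langle \Adv, V\rangle$ and $\langle \tilde\Adv,\tilde V\rangle$ are $O(\varepsilon)$-indistinguishable. The argument is a hybrid over the $r$ rounds. In each round we replace one sub-protocol of the compiled interaction with an ideal counterpart, and in the end $\Adv$ will simply run $\tilde\Adv$ internally. The main tools are the AoK extractors, which turn $\tilde\Adv$'s commitments into explicit messages $m_i$, and the cSFS simulator from \cref{def:ComSFS}, which replaces each compressed verifier message by a call to an ideal functionality. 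That ideal functionality is exactly $V$'s next-message function on the extracted messages.

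\textbf{Step 1 (prover messages).} In round $i$, after $\tilde\Adv$ sends $c_i$, we run the state-preserving extractor of the AoK for knowledge of a preimage of $c_i$. This yields $m_i$ with $\mathsf{Com}(m_i)=c_i$, and the residual state of $\tilde\Adv$ is indistinguishable from the real one up to the extraction error parameter, which we set to $O(\varepsilon/r)$. Alternatively we can rely on the fact that it is polynomial, since $r=\poly(\secp)$ and the $\varepsilon$ budget is split across rounds. If extraction fails, $\Adv$ sends an abort-inducing message; this event is charged to the extraction error.

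\textbf{Step 2 (verifier messages).} Each chosen-input cSFS invocation for $f_i$ is replaced by its ideal experiment. The environment $\cE$ is the hybrid execution so far; it outputs the prover input $y=(m_1,\dots,m_i)$ (extracted) and the adversary's state. We then apply the simulator $\Sim$. Whenever the simulator requests the output, $\Adv$ forwards $m_i$ to the real $V$ and obtains $f_i(s,m_1,\dots,m_i)$. This is precisely the real verifier's next message, with $s$ the verifier's seed, so $\Adv$ never needs $s$. If $\Sim$ outputs $b=0$, $\Adv$ aborts, matching $\tilde V$ outputting $\bot$. By \cref{def:ComSFS} each replacement costs the cSFS error parameter. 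Setting that parameter (the $O(\varepsilon^2)$ in the compiler, or more generally $\varepsilon/r$) makes the sum over rounds $O(\varepsilon)$.

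\textbf{Step 3 (the final AoK).} At the end, $\tilde V$ reveals $s$ and $\tilde\Adv$ claims an output $z$, proving knowledge of $(m'_1,\dots,m'_r)$ with $\mathsf{Com}(m'_i)=c_i$ and $f_r(s,m'_\cdot)=z$. This step is the delicate one, because in the ideal world $\Adv$ does not know $s$ until the interaction with $V$ ends. We handle it by having $\Adv$ make its final move $m_r$ and then letting $V$ output $z^\ast=f_r(s,m_1,\dots,m_r)$. To compare the worlds, we extract the witness from the final AoK. By collision resistance of $\mathsf{Com}$, $m'_i=m_i$ for all $i$ except with negligible probability, so either the compiled verifier rejects or its output equals $z^\ast$. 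If we let the distinguisher see the full hybrid transcript, the two output distributions then coincide up to the AoK error.

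\textbf{Main obstacle.} The hardest part is the composition: applying cSFS security sequentially while the environment includes the earlier simulated rounds and the state-preserving extractors. The errors must not compound multiplicatively. Each hybrid must be efficiently samplable by the environment $\cE$ of the next step, which requires the extractors and simulators to be QPT and state-preserving. We would first verify this for a single round and then induct, absorbing all previous simulators into $\cE$.
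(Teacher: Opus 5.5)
Your hybrid structure matches the paper's: per-round AoK extraction of $m_i$, cSFS simulation fed with $f_i(s,m_1,\dots,m_i)$ obtained by forwarding the extracted messages to the real $V$, and collision resistance against the extractor of the final AoK. However, Step~3 does not go through as you state it.

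In the compiled protocol $\tilde{\mathcal{A}}$ receives $s$ \emph{before} the final AoK, so whether $\tilde V$ ends with $\bot$ can depend on $s$. For instance, $\tilde{\mathcal{A}}$ can play honestly but deliberately fail the last AoK whenever $f_r(s,m_1,\dots,m_r)$ takes a particular value. Collision resistance only shows that the compiled output lies in $\{\bot, z^\ast\}$. It says nothing about the probability of $\bot$, which here is correlated with $z^\ast$. Your $\mathcal{A}$ never sees $s$ and has no move after $m_r$, so ``letting $V$ output $z^\ast$'' cannot reproduce this. Letting the distinguisher see the hybrid transcript is also not available, since in the soundness game $\mathcal{D}$ only sees $\langle \mathcal{A}, V\rangle$. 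Concretely, suppose $V$ simply outputs the first bit of $s$. The compiled adversary above then produces $0$ or $\bot$ with probability $1/2$ each. Any plain $\mathcal{A}$ can only abort independently of $s$, so it outputs $0$ and $1$ with equal probability. The two output distributions are therefore at distance at least $1/4$.

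The paper's proof gets around this by constructing an $\mathcal{A}$ that plays the original protocol \emph{with the extra ability to instruct $V$ to output $\bot$}. Operationally, after $m_r$ the seed $s$ is made available to $\mathcal{A}$. It then runs $\tilde{\mathcal{A}}$'s last step internally (receive $s$, announce $z$, final AoK) and triggers $\bot$ if that step fails. Such an abort can only lower the acceptance probability, which is all that is used downstream when the output is accept/reject, as in the QMA application. You need this relaxation, or some other way to make the final check independent of $s$, to complete Step~3.

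A smaller point concerns the error accounting. With per-round parameter $O(\varepsilon^2)$, the $2r$ hybrids cost $O(r\varepsilon^2)$, which is $O(\varepsilon)$ only if $r=O(1/\varepsilon)$. Your alternative choice $\varepsilon/r$ is the one that actually makes the union bound give $O(\varepsilon)$.
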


\begin{proof}[Proof of \cref{lmm:soundnes}]
    Let $\tilde{\mathcal{A}}$ be a QPT adversary against the compiled protocol. We gradually change the view of the adversary in the following sequence of hybrids. For each round $i\in\{1, \dots, r\}$ we modify the experiment as follows:
    \begin{itemize}
        \item Execute the simulator of the AoK (with parameter $O(\varepsilon^{2})$) to produce a transcript, along with a valid witness $\tilde{m}_i$. Output $\bot$ if the AoK transcript is rejecting.

        By assumption, the transcript is $O(\varepsilon^{2})$-indistinguishable from a real execution, and the extracted witness is valid with probability $1- O(\varepsilon^{2})$.
        

        \item If $i\leq r$, simulate the chosen-input committed SFS protocol using $f_i(s, \tilde{m}_1, \dots \tilde{m}_i)$ as the function output. Output $\bot$ if the prover does not successfully terminate the protocol.

        By assumption, the trace distance between the simulated state and the real state is bounded by $O(\varepsilon^{2})$.
    \end{itemize}
    By a union bound, the above interaction is $\varepsilon$-indistinguishable from a real protocol run.

    Now, we define the final experiment which is the same as experiment $r$, except for the following difference.

    \begin{itemize}
        \item The verifier outputs $z = f_r(s,\Tilde{m}_1,\dots,\Tilde{m}_r)$ if it has not aborted.
    \end{itemize}

    To show that this is $\negl$-indistinguishable from experiment $r$, it suffices to bound the probability of extracting a set of messages $(\Tilde{m}_1, \dots, \tilde{m}_r)$ such that $f_r(s,\Tilde{m}_1,\dots,\Tilde{m}_r) \neq z$, but the final succinct AoK is accepting. Conditioned on successfully extracting all messages, which happens with noticeable probability, we claim that this can happen only with negligible probability. Assume towards contradiction that this is not the case, then running the extractor for the last AoK results in $(m_1^*, \dots, m_r^*)$ such that $f_r(s,m_1^*, \dots, m_r^*) = z$ with non-negligible probability. However, it must be the case that 
    \[
    (m_1^*, \dots, m_r^*) \neq (\Tilde{m}_1, \dots, \tilde{m}_r)
    \]
    since $f_r(s,\Tilde{m}_1,\dots,\Tilde{m}_r) \neq z$. This is a contradiction to the collision resistance of the commitment.
    
    To complete the proof, we note that we can turn the final experiment into an adversary $\mathcal{A}$ that plays against the original protocol, except that it can instruct the verifier to output $\bot$ with some probability. Indeed $\mathcal{A}$ can simply forward the extracted messages $(\Tilde{m}_1, \dots, \tilde{m}_r)$ to the verifier, and use its responses as input to the committed SFS protocol.





\end{proof}



\subsection{Putting Things Together: Proof of \cref{thm:compcom}}

We show how to combine the lemmas that we proved in order to establish \cref{thm:compcom}. First, \cref{lmm:beta-security} establishes the existence of a random-input committed SFS, assuming the existence of succinct commitments and state-preserving AoKs for NP. Without additional assumptions (since one-wayness is implied by collision-resistance) we can invoke \cref{lmm:random-chosen} to make the protocol chosen-input instead. Plugging this into the communication compression compiler (\cref{lmm:soundnes}) completes the proof of \cref{thm:compcom}.

Recall that the verifier in the resulting protocol is entirely classical, except for the ability to prepare and send uniformly random claw states. To make the verifier classical, we \ifllncs refer to the full version \cite{full}\else invoke \cref{thm:claw-state-RSP} from \cref{sec:rspv}\fi, which shows that OSP implies a simulation-secure notion of claw-state preparation\ifllncs\else~(\cref{def:RSPV})\fi. This simulation security enables a straightforward substitution of the quantum claw-state message with a classical verifier protocol. This implies the ``moreover'' part of the statement of \cref{thm:compcom}, where the verifier is entirely classical. 

%% file: full_protocol.tex
\section{Succinct Classically-Verifiable Arguments for QMA}
\label{sec:full_protocol_thm}

We state and prove the main theorem of this work.

\begin{theorem}
\label{thm:final}
   Assuming OSP and collapsing hash functions, there exists a succinct classically-verifiable $(1-\negl(\secp),\negl(\secp))$ argument system for QMA\ifllncs\else~(\cref{def:CVQC})\fi. 
\end{theorem}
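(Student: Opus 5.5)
The plan is to compose the two steps described in the techniques overview: a round-efficient, non-succinct classically-verifiable protocol for QMA from OSP, followed by the generalized compression compiler of \cref{thm:compcom}.

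\textbf{Step 1: the round-efficient protocol.} Start from the simplified two-prover game derived from \cite{metger2024succinct}. This game has constant completeness--soundness gap, Alice's operation has $\poly(\secp)$ depth, and it is sound against computationally non-local strategies. Compile it with the generalized KLVY compiler, evaluating Alice's operation under OSP-based blind classical delegation \cite{bartusek2025power}. The number of rounds of that delegation depends only on circuit depth, so the compiled protocol has $\poly(\secp)$ rounds, with the polynomial independent of the witness size. Message lengths, however, may grow with $|x|$ and the witness.

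Next, amplify the constant gap by sequential repetition. Take $k=\poly(\secp)$ repetitions, large enough that a Chernoff-style threshold test separates the two cases. This gives completeness $1-\negl(\secp)$ and, by sequential repetition for single-prover arguments against QPT provers, soundness $\negl(\secp)$. The total round count $r$ is still a fixed $\poly(\secp)$.

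\textbf{Step 2: compression.} Apply \cref{thm:compcom} to this $r$-round protocol, with public input $\chi=x$ and output $z\in\{\mathsf{acc},\mathsf{rej}\}$. Instantiate the required primitives from collapsing hash functions:
\begin{itemize}
\item Succinct collision-resistant commitments come from the hash functions themselves, since collapsing implies collision resistance.
\item State-preserving succinct AoKs for NP come from \cite{LMS}, building on Kilian's protocol with collapsing hashes \cite{CMSZ}.
\end{itemize}
Use the ``moreover'' part of \cref{thm:compcom} with OSP so that the verifier is classical.

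The parameters follow directly. Completeness carries over up to negligible total variation distance. Communication is $r\cdot p(\secp,1/\varepsilon)$ and verifier time is $r\cdot|x|\cdot p(\secp,1/\varepsilon)$. Both are a fixed polynomial once $\varepsilon$ is fixed.

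\textbf{Main obstacle: negligible soundness.} The compiler only preserves soundness up to an additive $O(\varepsilon)$ for inverse-polynomial $\varepsilon$, so the compressed protocol alone has soundness $O(\varepsilon)+\negl$, not $\negl(\secp)$. I would first fix a constant $\varepsilon$, say $\varepsilon=1/100$. This gives a constant gap with succinct communication, since $p(\secp,100)$ is a fixed polynomial.

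Then I would amplify by $\secp$ sequential repetitions of the compressed protocol with a threshold or all-accept rule. This keeps communication and verifier time a fixed polynomial times $|x|$, and makes soundness error $2^{-\Omega(\secp)}$, provided sequential repetition holds for these computationally sound, private-coin quantum-prover arguments.

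The point to check carefully is sequential repetition. It should follow because each repetition uses fresh verifier randomness and the verifier's decision is determined by its private state, so the adversary's residual state can be carried into the next repetition. The soundness guarantee of \cref{thm:compcom} holds for every QPT adversary, including one with auxiliary quantum advice; combined with the soundness of the underlying protocol, each repetition accepts a NO instance with probability at most $O(\varepsilon)+\negl$ conditioned on the past. A hybrid or union argument over the $\secp$ repetitions then gives negligible soundness.
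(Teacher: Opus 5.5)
Your proposal is correct and is essentially the paper's proof. The paper applies the classical-verifier compiler of \cref{thm:compcom} directly to the constant-soundness, $\poly(\secp)$-round protocol, which gives soundness $s + O(\varepsilon)$ for an inverse-polynomial $\varepsilon$, and then sequentially repeats the compressed protocol $\secp$ times with an all-accept rule; completeness is already $1-\negl(\secp)$, so no threshold is needed. Your extra sequential amplification \emph{before} compression is harmless, since the round count stays a fixed polynomial, but it is redundant: as you note yourself, the compiler's additive $O(\varepsilon)$ loss erases it. Also choose your constant $\varepsilon$ small relative to the hidden constant in $O(\varepsilon)$, rather than fixing it at $1/100$.
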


\begin{proof}
    To prove the theorem, we consider the following final compilation of our building blocks. 

    Given a security parameter $\secp$, an instance $x \in \{0,1\}^*$, and an algorithm $C$ which is the verifier for a $\QMA$ promise problem $\cL = (\cL_\mathrm{yes}, \cL_\mathrm{no})$, do the following.
    \begin{enumerate}
        \item Apply our classical-verifier communication compression compiler (\cref{thm:compcom} in \cref{sec:compression_compiler}) to the round-efficient classically-verifiable argument for QMA (\ifllncs worked out in the full version \cite{full}\else Protocol~\ref{prot:main-blind} in \cref{sec:non-succinct_argument_for_qma_mnz24}\fi). Execute the compiled protocol on the inputs above. 

        \item Sequentially repeat Step 1 for $\lambda$ times. Output $\top$ if all repetitions output $\top$; output $\bot$ otherwise.
    \end{enumerate}

The completeness follows straightforwardly from the completeness of the round-efficient protocol \ifllncs\else(\cref{thm:round_efficient_classical_qma}) \fi and the completeness of the communication compression (\cref{thm:compcom}).

For soundness, note that \ifllncs the main protocol in the full version \else\cref{prot:main-blind} \fi only has constant soundness\ifllncs. \else, as stated in \cref{thm:round_efficient_classical_qma}.\fi  By the property of the communication compression compiler (\cref{thm:compcom}), the compiled protocol in Step 1 will inherit this constant soundness plus/minus an error of $1/\poly(\secp)$. Therefore, by sequentially repeating Step 1 and accepting only when all repetitions accept, we can obtain $\negl(\secp)$ soundness in a straightforward way.

Since \ifllncs the main protocol \else\cref{prot:main-blind} \fi has round complexity $\poly(\lambda)$, by the efficiency property of the compression compiler, the resulting protocol in Step 1 has communication complexity $\poly(\secp)$ and verifier runtime $|x| \cdot \poly(\secp)$. Therefore, the final protocol after $\secp$ sequential repetition preserves the communication complexity and verifier runtime.
\end{proof}
We also note that the same exact proof shows that applying our communication compression to the LWE-based constant-round classical verification of quantum sampling protocol (with $1/\poly$-soundness) from \cite{EC:CLLW} yields \emph{succinct} classical verification for quantum sampling, under the assumption of LWE (which additionally implies all assumptions used in \cref{thm:final}). That is, we establish the following.

\begin{theorem}\label{thm:sampling}
    Assuming LWE, there exists succinct classical verification of quantum sampling with inverse-polynomial soundness\ifllncs\else~(\cref{def:sampling})\fi.
\end{theorem}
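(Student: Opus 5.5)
The plan is to mirror the proof of \cref{thm:final}: apply the communication compression compiler of \cref{thm:compcom} to the LWE-based constant-round classical verification protocol for quantum sampling of \cite{EC:CLLW}. Before doing so, I check that \cref{thm:compcom} applies. The \cite{EC:CLLW} protocol is an interaction between a QPT prover and a PPT verifier. Its public input $\chi$ is the (possibly succinct) description $Q$ of the sampler together with the accuracy parameter. The verifier's output $z \in Z \cup \{\bot\}$ is either a sample or an abort. Its round count $r$ is a constant independent of $|Q|$. The assumptions of \cref{thm:compcom} are also satisfied under LWE. Collapsing hash functions, and hence succinct collision-resistant commitments and state-preserving succinct AoKs via \cite{LMS,CMSZ}, follow from LWE. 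OSP follows from LWE via the TCFs of \cite{BCMVV}. So the ``moreover'' clause of \cref{thm:compcom} applies, and the compiled verifier is fully classical.

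Completeness comes first. The completeness clause of \cref{thm:compcom} says the honest compiled output distribution is within negligible total variation distance of the honest original output distribution. The sampling completeness guarantee of \cite{EC:CLLW} (closeness of the output to the target distribution, plus a low abort probability) therefore transfers, losing only an additive $\negl(\secp)$.

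Soundness is the step requiring care, because sampling soundness is a \emph{simulation-style} guarantee rather than a bound on an acceptance probability. Suppose the target soundness is $\epsilon$. Instantiate \cref{thm:compcom} with $\varepsilon' = c\cdot\epsilon$ for a small constant $c$, and instantiate \cite{EC:CLLW} with soundness $\epsilon/2$. Given a compiled adversary $\tilde\Adv$, \cref{thm:compcom} yields an adversary $\Adv$ against the original protocol. For every QPT distinguisher, the output $\langle \Adv, V\rangle$ is $O(\varepsilon')$-indistinguishable from $\langle \tilde\Adv, \tilde V\rangle$. The definition of soundness for sampling (\cref{def:sampling}) asks that the verifier's output be close to an ideal mixture of $\bot$ and samples from the target distribution. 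The soundness of \cite{EC:CLLW} gives this closeness for $\langle \Adv, V\rangle$ up to $\epsilon/2$. Composing the two bounds by the triangle inequality gives $\epsilon/2 + O(\varepsilon') \le \epsilon$.

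The main obstacle I anticipate is matching notions. If \cref{def:sampling} is stated in terms of statistical (total variation) distance, the computational indistinguishability from \cref{thm:compcom} does not directly suffice. I would first check whether \cref{def:sampling} is already computational, in which case nothing more is needed. If it is statistical, I would rely on the fact that both sides are classical samples over an efficiently checkable domain. I would then argue that the relevant test, namely comparing against the ideal sampler's output, can be carried out by a QPT distinguisher. If that fails, I would settle for the computational variant of sampling soundness.

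Efficiency follows from the efficiency clause of \cref{thm:compcom} with constant $r$. Communication is $r\cdot p(\secp,1/\epsilon) = \poly(\secp,1/\epsilon)$. Verifier time is $|Q|\cdot\poly(\secp,1/\epsilon)$, because the verifier of \cite{EC:CLLW} has description size polynomial in $|Q|$, $\secp$, and $1/\epsilon$. Unlike the QMA case, no sequential repetition is needed, since only inverse-polynomial soundness is claimed.
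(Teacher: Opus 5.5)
Your proposal is correct and follows the paper's route: the paper's proof says only that the argument for \cref{thm:final} carries over when the compression compiler of \cref{thm:compcom} is applied to the constant-round LWE-based sampling protocol of \cite{EC:CLLW}, with LWE supplying the collapsing hash functions and OSP, and your triangle-inequality composition of the two simulation-style soundness guarantees is the step that remark leaves implicit. Two small corrections: \cref{def:sampling} already quantifies only over QPT distinguishers, so your statistical-versus-computational fallback is unnecessary (and that upgrade would not work in general); and the verifier-time bound should be read off directly from \cref{thm:compcom} as $r\cdot|\chi|\cdot p(\secp,1/\varepsilon)$ with $|\chi|\approx|Q|$, not from the \cite{EC:CLLW} verifier's description size being polynomial in $|Q|$, which would only give a $\mathrm{poly}(|Q|)$ factor.
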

{Moreover, \cref{thm:sampling} can be used to establish a stronger version of \cref{thm:compcom}, where the verifier is not assumed to be classical, as follows: First, observe that any interactive protocol between a prover and a verifier can be thought of as \emph{sampling} from a distribution specified by a quantum circuit where each ``gate'' is the unitary applied by the prover/verifier at a given round and each ``register'' consists of the internal state of a party or a message exchanged during the protocol. Now it is clear that, so long as the verifier's description corresponds to the honest algorithm, sampling from the output of this circuit is equivalent to running the original interactive protocol and returning the output of the verifier. Thus, assuming LWE, we can now apply \cref{thm:sampling} to sampling from such a distribution, obtaining a protocol with (i) a classical verifier and (ii) succinct communication.}

%% file: ack.tex
\section{Acknowledgments}

We thank Jiayu Zhang for valuable correspondence throughout the preparation of this work.

G.M.\ is supported by the European Research Council through an ERC Starting Grant (Grant agreement No.~101077455, ObfusQation) and funded by the Deutsche Forschungsgemleinschaft (DFG, German Research Foundation) under Germany's Excellence Strategy - EXC 2092 CASA – 390781972.

%% file: prelim.tex
\section{Preliminaries}
We denote by $[n]$ the set $\{1, \dots, n\}$. We say that a function~$f$ is negligible in the security parameter~$\lambda$ if~$f(\lambda) = {\lambda}^{-\omega(1)}$.

\subsection{Quantum Information}

In this section, we provide a brief overview of quantum information. For a more detailed introduction, we refer to \cite{nielsen2010quantum,watrous2018theory}.
A \emph{(quantum) register}~$A$ consisting of $n$ qubits is associated with the Hilbert space $\mathcal{H}_A = (\mathbb{C}^2)^{\otimes n}$. Given two registers $A$ and $B$, we denote the composite register by~$AB$.
The corresponding Hilbert space is given by the tensor product $\mathcal{H}_{AB} = \mathcal{H}_A \otimes \mathcal{H}_B$. The \emph{(quantum) state} of a register $A$ is described by a density operator $\rho_A$ on $\mathcal{H}_A$, which is a positive semi-definite Hermitian operator with trace equal to one.
A state is called \emph{pure} if it has rank one.
Thus, pure quantum states can be represented by unit vectors $\ket\psi_A \in \mathcal{H}_A$, with $\rho_A = \ketbra{\psi}{\psi}_A$.
For a quantum state~$\rho_{AB}$ on $\mathcal{H}_{AB}$, we denote $\rho_A = \Tr_B(\rho_{AB})\in \mathcal{H}_A$ the reduced state of $\rho_{AB}$ on $A$.

A \emph{projective measurement} is defined by a set of projectors $\{ \Pi_j \}_j$ such that $\sum_j \Pi_j = \mathsf{Id}$.
A projector $\Pi$ is a Hermitian operator such that $\Pi^2 = \Pi$, that is, an orthogonal projection.
Given a state $\rho$, the measurement yields outcome $j$ with probability $p_j = \Tr(\Pi_j \rho)$, upon which the state changes to ${\Pi_j \rho \Pi_j/p_j}$.
A basis measurement is one where~$\Pi_j=\ketbra{e_j}{{e_j}}$ and the $\{\ket{e_j}\}_j$ (necessarily) form an orthonormal basis.
A \emph{positive operator-valued measure} (POVM) is a generalization of a projective measurement.
A POVM is defined by a set of positive semi-definite operators~$\{\mathsf{E}_j\}_j$ such that~$\sum_j \mathsf{E}_j = \mathsf{Id}$ (that is, the $\mathsf{E}_j$ no longer need to be projections).
As before, given a quantum state~$\rho$, the probability of obtaining outcome $j$ when performing the measurement is given by~$p_j = \Tr(\mathsf{E}_j \rho)$, but the state after the measurement is no longer uniquely specified.

For two observables $A,B$, the commutator is denoted as $[A,B] := AB-BA$ and the anticommutator is $\{A,B\} := AB+BA$

The \emph{trace distance} between two states $\rho$ and $\sigma$, denoted by $\mathsf{TD}(\rho, \sigma)$, is defined as
\[
\mathsf{TD}(\rho, \sigma) = \frac{1}{2}\norm{\rho - \sigma}_1 = \frac{1}{2}\Tr\left(\sqrt{(\rho-\sigma)^\dagger(\rho - \sigma)}\right).
\]
The trace distance can be equivalently defined as the maximal advantage of any (possibly unbounded) quantum algorithm in distinguishing the two states $\rho$ and $\sigma$. Sometimes we use the shorthand $\rho \approx_\varepsilon \sigma$ to denote $\mathsf{TD}(\rho, \sigma) \leq \varepsilon$.

We recall the following version of the gentle measurement lemma.

\begin{lemma}[Gentle Measurement]\label{lmm:gentle}
    Let $\rho$ be a density matrix and $\Pi$ be an orthogonal projection. If $\Tr(\Pi\rho)\geq 1-\varepsilon$, then 
    \[
    \rho \approx_\varepsilon \frac{\Pi \rho \Pi}{\Tr(\Pi\rho)}.
    \]
\end{lemma}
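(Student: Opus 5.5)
The statement is the Gentle Measurement lemma (\cref{lmm:gentle}); the cleanest route is to reduce to pure states via purification, compute the fidelity between the state before and after the projection, and convert fidelity to trace distance via Fuchs--van de Graaf.

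\textbf{Step 1: purify.} Let $\ket{\psi}_{AB}$ purify $\rho_A$, set $p := \Tr(\Pi\rho) \geq 1-\varepsilon$, and let $\ket{\psi'} := (\Pi\otimes\mathsf{Id})\ket{\psi}/\sqrt{p}$. Tracing out $B$ gives back $\Pi\rho\Pi/p$. Since the partial trace cannot increase trace distance, it suffices to bound $\mathsf{TD}(\ketbra{\psi},\ketbra{\psi'})$.

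\textbf{Step 2: overlap.} Because $\Pi$ is a Hermitian idempotent,
\[
\braket{\psi}{\psi'} = \bra{\psi}(\Pi\otimes\mathsf{Id})\ket{\psi}/\sqrt{p} = p/\sqrt{p} = \sqrt{p}.
\]

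\textbf{Step 3: pure-state trace distance.} For pure states, $\mathsf{TD} = \sqrt{1-\abs{\braket{\psi}{\psi'}}^2}$, so
\[
\mathsf{TD}\left(\rho, \frac{\Pi\rho\Pi}{\Tr(\Pi\rho)}\right) \leq \sqrt{1-p} \leq \sqrt{\varepsilon}.
\]

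\textbf{Main obstacle.} The lemma as stated claims a bound of $\varepsilon$, while the argument above only yields $\sqrt{\varepsilon}$, and $\sqrt{\varepsilon}$ is tight. For example, take a qubit $\rho = \ketbra{\phi}$ with $\ket{\phi} = \sqrt{1-\varepsilon}\ket{0}+\sqrt{\varepsilon}\ket{1}$ and $\Pi = \ketbra{0}$. Then $\Tr(\Pi\rho) = 1-\varepsilon$, and $\Pi\rho\Pi/\Tr(\Pi\rho) = \ketbra{0}$. The trace distance is exactly $\sqrt{1-(1-\varepsilon)} = \sqrt{\varepsilon}$, which exceeds $\varepsilon$ for small $\varepsilon$.

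So I would state and prove the lemma with $\sqrt{\varepsilon}$ (the standard form, e.g.\ Winter's, is $2\sqrt{\varepsilon}$ for POVMs). The linear-in-$\varepsilon$ bound that does hold concerns the \emph{unnormalized} state: $\norm{\rho - \Pi\rho\Pi}_1 \leq 2\sqrt{\varepsilon}$ in general, and for the overlap one has $\Tr(\rho\,\Pi\rho\Pi) \geq (1-\varepsilon)^2$. The applications in \cref{lmm:beta-security} already claim only $O(\sqrt{\delta})$ closeness, so they remain valid under the corrected $\sqrt{\varepsilon}$ version.
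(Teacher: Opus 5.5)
The paper gives no proof of this lemma; it only recalls it as a standard fact, so there is no argument of the authors' to compare yours against. Your analysis is correct. With $\approx_\varepsilon$ meaning $\mathsf{TD}\le\varepsilon$, as defined in the preliminaries, the statement as written is false, and your qubit example attains exactly $\sqrt{\varepsilon}$. Your purification argument proves the tight bound $\mathsf{TD}\bigl(\rho,\Pi\rho\Pi/\Tr(\Pi\rho)\bigr)\le\sqrt{1-\Tr(\Pi\rho)}\le\sqrt{\varepsilon}$.

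You are also right that nothing downstream breaks. The lemma's uses in the proof of \cref{lmm:beta-security} are the projection onto $d\neq 0^\lambda$ in the proof of \cref{eq:uhlmann} and the projection $\Pi^{(\text{eq},b)}$ behind \cref{eq:comp_basis_ind}. Both succeed with probability $1-O(\delta)$ and only claim $O(\sqrt{\delta})$ closeness. So the $\varepsilon$ in the statement is presumably a typo for $\sqrt{\varepsilon}$, and your corrected version is exactly what the paper needs.

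Your closing remark has one slip. The inequality $\Tr(\rho\,\Pi\rho\Pi)\ge(1-\varepsilon)^2$ holds for pure $\rho$ but not in general. For $\rho=\mathsf{Id}/d$ and $\Pi=\mathsf{Id}$ we have $\varepsilon=0$, yet $\Tr(\rho\,\Pi\rho\Pi)=1/d$. Also, $\|\rho-\Pi\rho\Pi\|_1\le 2\sqrt{\varepsilon}$ is not a linear-in-$\varepsilon$ bound.

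The genuinely linear statement is about fidelity. By Uhlmann's theorem and your Step 2, the squared fidelity satisfies $F\bigl(\rho,\Pi\rho\Pi/p\bigr)\ge\abs{\braket{\psi}{\psi'}}^2=p\ge 1-\varepsilon$. Fuchs--van de Graaf then turns this into the same $\sqrt{\varepsilon}$ trace-distance bound, which is the role you announced for it at the start but never needed. None of this affects your main argument.
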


We will also consider the notion of \emph{computational indistinguishability}.
Recall that a \emph{nonuniform QPT algorithm} $\mathcal A = \{\mathcal A_\lambda,\tau_\secp\}_\lambda$ consists of a family of quantum channels that can be implemented by a polynomial-size quantum circuit $\cA_\secp$ initialized with a quantum advice state $\tau_\secp$ of polynomial number of qubits (we will often drop the explicit mention of $\tau_\secp$ when referring to non-uniform QPT adversaries).
We call $\mathcal A$ a \emph{nonuniform QPT distinguisher} if the channel outputs a single bit.

\begin{definition}[Computational Indistinguishability]\label{def:computational indist}
We say that two families of states~$\{\rho_\lambda\}_\lambda$ and $\{\sigma_\lambda\}_\lambda$ are \emph{computationally indistinguishable} if for every nonuniform QPT distinguisher~$\mathcal A = \{\mathcal A_\lambda\}_\lambda$ there exists a negligible function~$\negl$ such that the following holds for all~$\lambda$:
\begin{align*}
     \abs{ \Pr\left( \mathcal A_\lambda(\rho_\lambda) = 1 \right) - \Pr\left( \mathcal A_\lambda(\sigma_\lambda) = 1 \right) } \leq \negl(\lambda).
\end{align*}
We abbreviate this definition by writing $\rho_\lambda \approx_c \sigma_\lambda$. If the indistinguishability only holds up to some $\epsilon = \epsilon(\secp)$, that is, for every nonuniform QPT distinguisher~$\mathcal A = \{\mathcal A_\lambda\}_\lambda$,
\begin{align*}
     \abs{ \Pr\left( \mathcal A_\lambda(\rho_\lambda) = 1 \right) - \Pr\left( \mathcal A_\lambda(\sigma_\lambda) = 1 \right) } \leq \epsilon,
\end{align*}
then we write $\rho_\secp \approx_{c,\epsilon} \sigma_\secp$.
\end{definition}

\subsection{Classically-Verifiable Argument Systems}

The class QMA consists of all (promise) languages that are decidable by a polynomial-size quantum circuit, on input a non-deterministically generated witness state. We recall the definition of a classical verification protocol for QMA.
\begin{definition}\label{def:CVQC}
    A classically-verifiable argument for QMA is an interaction between a QPT prover, with input an instance $x$ and state $\ket{\psi}$, and a PPT verifier, with input an instance $x$, \[\{\top,\bot\} \gets \langle P(1^\secp,\ket{\psi},x),V(1^\secp,x)\rangle,\] where $\{\top,\bot\}$ is the verifier's output. We say that the protocol is an $(\alpha(\secp),\beta(\secp))$ argument for QMA if for any language $\cL = (\cL_{\mathsf{yes}} \cup \cL_{\mathsf{no}})$ in QMA, the following hold.
    
    \begin{itemize}
        \item \textbf{Completeness.} For any $x \in \cL_{\mathsf{yes}}$, there exists a state $\ket{\psi}$ such that \[\Pr[\top \gets \langle P(1^\secp,\ket{\psi},x),V(1^\secp,x)\rangle] \geq \alpha(\secp).\]
        \item \textbf{Soundness.} For any $x \in \cL_{\mathsf{no}}$, and any QPT adversary $\{\cA_\secp\}_{\secp \in \bbN}$, \[\Pr\left[\top \gets \langle \cA_\secp,V(1^\secp,x)\rangle\right] \leq \beta(\secp).\]
    \end{itemize}
    Moreover, we say that the protocol is \emph{succinct} if there exists a fixed polynomial $p(\secp)$ such that the total communication is bounded by $p(\secp)$ and $V$'s run-time is bounded by $|x| \cdot p(\secp)$.
\end{definition}

Next, we define classical verification of quantum \emph{sampling}.

\begin{definition}\label{def:sampling}
    Classical verification of quantum sampling with inverse-polynomial soundness is an interaction between a QPT prover and a PPT verifier, with input the (potentially succinct) description $|Q|$ of a polynomial-size quantum circuit $\{Q_\secp\}_{\secp \in \bbN}$,
    \[z \gets \langle P(1^\secp,|Q|,\epsilon),V(1^\secp,|Q|,\epsilon)\rangle,\] where $z$ is the verifier's output. It should satisfy the following properties.
    \begin{itemize}
        \item \textbf{Completeness.} For any $\epsilon = 1/\poly(\secp)$, \[\mathsf{TV}\left(\left\{z \gets Q_\secp\right\},\left\{z \gets \langle P(1^\secp,|Q|,\epsilon),V(1^\secp,|Q|,\epsilon)\rangle\right\}\right) = \negl(\secp).\]
        \item \textbf{Soundness.} For any probability $p$, let $Z[Q,p]$ be the distribution that outputs 
        \begin{itemize}
            \item $\bot$ with probability $p$, and
            \item $z \gets Q_\secp$ otherwise.
        \end{itemize}
        For any $\epsilon = 1/\poly(\secp)$ and any QPT adversary $\{\cA_\secp\}_{\secp \in \bbN}$, there exists a $p$ such that for any QPT distinguisher $\{\D_\secp\}_{\secp \in \bbN}$, \[\bigg| \Pr[1 \gets \D_\secp\left(Z[Q,p]\right)] - \Pr[1 \gets \D_\secp\left(\langle \cA_\secp,V(1^\secp,|Q|,\epsilon)\rangle\right)]\bigg| \leq O(\epsilon).\]
    \end{itemize}
    Moreover, we say that the protocol is succinct if there exists a fixed polynomial $p(\secp)$ such that, for any $\epsilon = 1/\poly(\secp)$, the total communication is bounded by $p(\secp,1/\epsilon)$ and $V$'s run-time is bounded by $|Q| \cdot p(\secp,1/\epsilon)$.
\end{definition}

\subsection{Collision-Resistant and Collapsing Functions}
\label{sec:hash-functions}

Let $\cH = \{H_{\secp}\}_{\secp \in \N}$ be such that each $H_{\secp}$ is a distribution over functions $h \colon \{0,1\}^{\ell(\secp)} \to \{0,1\}^{\secp}$. We define the standard notion of collision resistance for functions.

\begin{definition}[Collision Resistance]
\label{def:crh}
$\cH$ is \emph{post-quantum collision resistant} if for every QPT adversary $\cA$,
\begin{equation*}
\Pr
\left[
\begin{array}{c}
x \neq x' \; \wedge \\
h(x) = h(x')  
\end{array}
\middle\vert
\begin{array}{r}
h \gets H_{\secp} \\
(x,x') \gets \cA(h)
\end{array}
\right]
= \negl(\secp).
\end{equation*}
\end{definition}
The notion of collapsing \cite{unruh2016computationally}, which we define next, generalizes collision resistance in the context of quantum algorithms.
\begin{definition}[Collapsing]
\label{def:collapsing}
$\cH$ is \emph{collapsing} if for every QPT adversary $\cA$,
\begin{equation*}
\Big|
\Pr[\mathsf{HCollasingExp}(0,\secp,\cA) = 1]
-
\Pr[\mathsf{HCollasingExp}(1,\secp,\cA) = 1]
\Big|
\leq \negl(\secp).
\end{equation*}
For $b \in \{0,1\}$ the experiment $\mathsf{HCollasingExp}(b,\secp,\cA)$ is defined as follows:
\begin{enumerate}
  \item The challenger samples $h \gets H_{\secp}$ and sends $h$ to $\cA$.
  \item $\cA$ replies with a (classical) binary string $y \in \{0,1\}^{\ell(\secp)}$ and a $\secp$-qubit quantum state on register $\cX$. (The requirement that $y$ is classical can be enforced by having the challenger immediately measure these registers upon receiving them.)
  \item The challenger computes $h$ in superposition on the $\secp$-qubit quantum state, and measures the bit indicating whether the output of $h$ equals $y$. If $h$ does not equal $y$, the challenger aborts and outputs $\bot$.
  \item If $b = 0$, the challenger does nothing. If $b = 1$, the challenger measures the $\secp$-qubit state in the standard basis.
  \item The challenger returns contents of the registers $\cX$ to $\cA$.
  \item $\cA$ outputs a bit $b'$, which is the output of the experiment.
\end{enumerate}
\end{definition}

\begin{claim}[\cite{unruh2016computationally}]
\label{claim:collapsing-crhf}
If $\cH$ is collapsing then $\cH$ is post-quantum collision resistant.
\end{claim}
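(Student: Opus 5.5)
The claim is that collapsing implies post-quantum collision resistance. I will argue by contrapositive: from a collision-finder I build a distinguisher for the collapsing experiment of \cref{def:collapsing}. Fix a QPT adversary $\cA$ that, given $h \gets H_\secp$, outputs $(x,x')$ with $x\neq x'$ and $h(x)=h(x')$ with non-negligible probability $\eta(\secp)$. The reduction $\cB$ works as follows. It receives $h$ and runs $\cA(h)$ to get $(x,x')$. If this is not a valid collision, $\cB$ outputs a random bit and stops. Otherwise it sets $y := h(x)$ and prepares the uniform superposition $\frac{1}{\sqrt2}(\ket{x}+\ket{x'})$ on the register $\cX$. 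To do this it samples a uniform bit $\beta$, encodes it in Hadamard form $\ket{+}$, and coherently maps $\ket{0}\mapsto\ket{x}$ and $\ket{1}\mapsto\ket{x'}$. This map is efficient and reversible, since $x$ and $x'$ are known classical strings. $\cB$ then sends $y$ and $\cX$ to the challenger.

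Both branches of the superposition hash to $y$, so the challenger's check on whether the output of $h$ equals $y$ passes with certainty and leaves the state undisturbed. In the $b=0$ world the register comes back as $\frac{1}{\sqrt2}(\ket{x}+\ket{x'})$. In the $b=1$ world it comes back as the mixture $\frac12(\ketbra{x}+\ketbra{x'})$. Next $\cB$ uncomputes the encoding by mapping $\ket{x}\mapsto\ket{0}$ and $\ket{x'}\mapsto\ket{1}$, which is a well-defined unitary on the span of these two basis vectors because $x\neq x'$. It then measures in the Hadamard basis and outputs $b'=1$ if the outcome is $\ket{-}$, and $b'=0$ otherwise. In the $b=0$ world the qubit is exactly $\ket{+}$, so $b'=0$ with probability $1$. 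In the $b=1$ world it is maximally mixed, so $b'=1$ with probability $1/2$.

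Conditioned on a valid collision, the distinguishing gap is therefore $1/2$. Conditioned on failure, both worlds give identical output distributions. The total advantage is thus $\eta/2$, which is non-negligible and contradicts collapsing. The delicate points are small. First, one must use the fact that $\cA$'s collision is \emph{classical and known}, which is exactly what makes the superposition efficiently preparable. Second, one must check that the challenger's equality measurement is deterministic on this state, so that it does not itself collapse the superposition. Third, the register-size mismatch is harmless: the definition speaks of a $\secp$-qubit state while preimages have length $\ell(\secp)$, and this should be read as the preimage register $\cX$ of length $\ell$. I do not expect a genuine obstacle. The argument is Unruh's original one, and the only care needed is this bookkeeping of the failure event.
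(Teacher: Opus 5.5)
The paper gives no proof of its own and only cites Unruh. Your reduction is the standard argument from that work and is correct, with advantage $\eta/2$: you prepare $\frac{1}{\sqrt{2}}(\ket{x}+\ket{x'})$ from the classical collision, submit it with $y=h(x)$, uncompute, and test coherence with a Hadamard-basis measurement. The only slip is cosmetic: you should simply prepare the fixed state $\ket{+}$ rather than ``sample a uniform bit $\beta$'', because preparing $H\ket{\beta}$ for random $\beta$ while always outputting $b'=1$ on outcome $\ket{-}$ would make the gap vanish; your later analysis makes clear you intend $\ket{+}$.
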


\begin{remark} 
In an abuse of notation, we sometimes refer to collision resistant and/or collapsing functions as \textbf{succinct commitments}, where we use $\Com$ to refer to the description of the hash function $h$ (which is still sampled from some family $H_\secp$).
\end{remark}

\subsection{State-Preserving Arguments of Knowledge for NP}\label{sec:AoK}

We recall the definition of state-preserving arguments of knowledge (AoKs) \cite{BKL+,LMS}. It is shown in \cite{CMSZ,LMS} that collapsing hash functions are sufficient to construct state-preserving AoKs for NP.

\begin{definition}[State-Preserving AoKs for NP]\label{def:AoK}
    A state-preserving argument of knowledge (AoK) for NP is an interaction between a QPT prover and PPT verifier on input an instance $x$ \[\{\top,\bot\} \gets \langle P(1^\secp,x),V(1^\secp,x)\rangle,\] where $\{\top,\bot\}$ is the verifier's output. For any language $\cL$ in NP, the following properties hold.
    \begin{itemize}
    \item \textbf{Succinctness.} When invoked on a security parameter $\secp$ and instance size $n$ and a relation decidable in time $T$, the communication complexity of the protocol is $\poly(\secp, \log T )$. The verifier computational complexity is $\poly(\secp, \log T ) + \Tilde{O}(n)$.

    \item \textbf{Extraction.} There exists an extractor $\mathsf{Ext}(x, \varepsilon)$ with running in time $\poly(n, \secp, 1/\varepsilon)$ outputting a classical transcript $\tau$ and a classical string $w$, such that the following holds. For any $\epsilon = 1/\poly(\secp)$, any QPT adversarial prover $\cA = \{\cA_\secp\}_{\secp \in \bbN}$, any QPT distinguisher $\cD = \{\cD_\secp\}_{\secp \in \bbN}$, and any advice state $\rho_{A,E} = \{\rho_{\secp,A,E}\}_{\secp \in \bbN}$, consider the following games.\\
        
        \noindent\underline{$\ExpREAL[\cA,\rho]$}
        \begin{itemize}
            \item Run $\langle \cA(A, x),V(1^\secp,x)\rangle$ and let $\tau$ be the classical transcript and $A$ the residual state on the input register.
            \item Output $(\tau,A,E)$.
        \end{itemize}

        \noindent\underline{$\ExpIDEAL[\mathsf{Ext},\epsilon,\rho]$}
        \begin{itemize}
            \item Run $(\tau, w, A) \gets \mathsf{Ext}(x, \varepsilon)^{\mathcal{A}(A,x)}$.
            \item Output $(\tau,A,E)$.
        \end{itemize}
        Then,
        \[\bigg| \Pr[\cD(\ExpREAL[\cA,\rho]) = 1] - \Pr[\cD(\ExpIDEAL[\mathsf{Ext},\epsilon,\rho]) = 1]\bigg| \leq \epsilon.\]
        Furthermore, the probability that $\tau$ is an accepting transcript but $w$ is not a witness for $x$ is bounded by $\varepsilon + \negl(\secp)$. 
    \end{itemize}
\end{definition}

\subsection{Remote State Preparation}

We define remote state preparation with two flavors of security: (1) an indistinguishability-based notion, called oblivious state preparation (OSP), and (2) a simulation-based notion, called verifiable state preparation (VSP).

\begin{definition}[Oblivious State Preparation]\label{def:OSP}
   Oblivious state preparation (OSP) is a protocol that takes place between a PPT sender $S$ with input $b \in \{0,1\}$ and a QPT receiver $R$:
    \[(s,\ket*{\psi}) \gets \OSP\langle S(1^\secp,b),R(1^\secp)\rangle,\] where $s \in \{0,1\}$ is the sender's output and $\ket*{\psi}$ is the receiver's output. It should satisfy the following properties.
    \begin{itemize}
        \item \textbf{Correctness.} For any $b \in \{0,1\}$, let \[\Pi_b \coloneqq \sum_{s \in \{0,1\}}\ketbra*{s} \otimes H^b\ketbra*{s}H^b.\] Then for any $b \in \{0,1\}$, \[\E\left[\|\Pi_b\ket*{s}\ket*{\psi}\| : (s,\ket*{\psi}) \gets \OSP\langle S(1^\secp,b),R(1^\secp)\rangle\right] = 1-\negl(\secp).\]
        \item \textbf{Security.} For any QPT adversary $\{\cA_\secp\}_{\secp \in \bbN}$,
        \begin{align*}\Big|&\Pr\left[b_\cA = 0 : (s,b_\cA) \gets \OSP\langle S(1^\secp,0),\cA_\secp\rangle\right]\\ &- \Pr\left[b_\cA = 0 : (s,b_\cA) \gets \OSP\langle S(1^\secp,1),\cA_\secp\rangle\right]\Big|  = \negl(\secp).\end{align*}
    \end{itemize}
\end{definition}

\begin{definition}[Verifiable State Preparation]\label{def:RSPV}
    Verifiable state preparation (VSP) for a state family $\{\ket{\psi_i}\}_{i \in [K]}$ is a protocol that takes place between a PPT sender $S$ and a QPT receiver $R$:
    \[(i,\ket*{\psi_i}) \gets \VSP\langle S(1^\secp,\epsilon),R(1^\secp,\epsilon)\rangle,\] where $\epsilon$ is an error parameter, $i \in [K] \cup \{\bot\}$ is the sender's output, and $\ket*{\psi_i}$ is the receiver's output. It should satisfy the following properties.
    \begin{itemize}
        \item \textbf{Correctness.} For any $\epsilon = 1/\poly(\secp)$, \[\TD\left(\VSP\langle S(1^\secp,\epsilon),R(1^\secp,\epsilon)\rangle,\frac{1}{K}\sum_{i \in [K]}\ketbra{i} \otimes \ketbra{\psi_i}\right) = \negl(\secp).\]
        \item \textbf{Security.} For any $\epsilon = 1/\poly(\secp)$, any QPT adversarial receiver $\cA = \{\cA_\secp\}_{\secp \in \bbN}$, and any advice state $\rho_{A,E} = \{\rho_{\secp,A,E}\}_{\secp \in \bbN}$, there exists a QPT simulator $\Sim = \{\Sim_\secp\}_{\secp \in \bbN}$ such that for any QPT distinguisher $\cD = \{\cD_\secp\}_{\secp \in \bbN}$, the following holds. Define the following games.\\
        
        \noindent\underline{$\ExpREAL[\cA,\epsilon,\rho]$}
        \begin{itemize}
            \item Run $(i,A) \gets \VSP\langle S(1^\secp,\epsilon),\cA(A)\rangle$.
            \item Output $(i,A,E)$.
        \end{itemize}

        \noindent\underline{$\ExpIDEAL[\Sim,\epsilon,\rho]$}
        \begin{itemize}
            \item Run $(b,A) \gets \Sim(1^\secp,\epsilon,A)$.
            \item If $b = 1$, sample $i \gets [K]$, otherwise if $b = 0$, set $i = \bot$ and $\ket{\psi_i} = 0$.
            \item Run $A \gets \Sim(1^\secp,\epsilon,A,\ket{\psi_i})$.
            \item Output $(i,A,E)$.
        \end{itemize}

        Then,
        \[\bigg| \Pr[\cD(\ExpREAL[\cA,\epsilon,\rho]) = 1] - \Pr[\cD(\ExpIDEAL[\Sim,\epsilon,\rho]) = 1]\bigg| \leq \epsilon.\]
    \end{itemize}
\end{definition}

%% file: remote_state_preparation.tex
\section{Verifiable Preparation of Claw States}
\label{sec:rspv}

In this section, we formalize the fact that OSP (\cref{def:OSP}) implies VSP (\cref{def:RSPV}) for claw states. That is, we prove the following theorem.

\begin{theorem}[VSP for Claw States]\label{thm:claw-state-RSP}
    Assuming OSP, there exists VSP for the ``claw state'' family \[\left\{\frac{1}{\sqrt{2}}\left(\ket{0,x_0}+\ket{1,x_1}\right)\right\}_{x_0,x_1 \in \{0,1\}^{m(\secp)}}\] for any polynomial $m(\secp)$.
\end{theorem}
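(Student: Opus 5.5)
The proof goes in two stages, following the outline in the Techniques section and the approach of \cite{ZhaRSP}. First we show that OSP (\cref{def:OSP}) implies VSP (\cref{def:RSPV}) for the BB84 family $\{H^\theta\ket{s}\}_{\theta,s\in\{0,1\}}$. Second, we show that VSP for BB84 states implies VSP for claw states of any polynomial length $m(\secp)$.

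\textbf{Stage one: OSP to BB84 VSP.} The sender runs many OSP instances in parallel, with i.i.d.\ uniform bases $\theta_j$, so that the receiver holds $H^{\theta_j}\ket{s_j}$. A random subset of positions is then used as test rounds. In a test round the sender reveals $\theta_j$, asks the receiver to measure in that basis, and checks the outcome against $s_j$. The untested positions are output, together with $(\theta_j,s_j)$ on the sender's side.

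The soundness argument mirrors \cite{ZhaRSP}'s argument for plain TCFs, with the TCF replaced by OSP:
\begin{itemize}
\item OSP security says the receiver's view is computationally independent of $\theta_j$.
\item Passing both standard-basis and Hadamard-basis tests with probability $1-O(\delta)$ therefore forces, via a rigidity/self-testing argument, an efficient isometry that maps the receiver's register to a qubit matching $H^{\theta}\ket{s}$ up to $O(\sqrt{\delta})$ error.
\item The simulator runs the adversary, applies this isometry, and swaps in the ideal BB84 qubit.
\item Cut-and-choose over the test rounds converts the per-instance bound into the target $\epsilon$, with the number of instances polynomial in $1/\epsilon$.
\end{itemize}

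\textbf{Stage two: BB84 VSP to claw-state VSP.} We obtain $2m+1$ (plus test) BB84 qubits and use only those that came out in the computational basis, $\ket{s_j}$, together with one Hadamard-basis qubit $\ket{\pm}$. The receiver applies $\mathrm{CNOT}$s from the $\ket{+}$ control onto the qubits carrying the bit-strings. This produces a superposition $\ket{0,x_0}+\ket{1,x_1}$, where the sender knows $x_0,x_1$ as XOR-combinations of the known $s_j$.

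Concretely, the construction follows \cite{ZhaRSP}: the receiver holds pairs of qubits $\ket{a_j}$ and $\ket{b_j}$, and the controlled operation selects between them. Random-basis re-measurement tests are interleaved to check that the receiver actually performed the operations.

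Because the underlying BB84 VSP is simulation-secure, the ideal BB84 states can be substituted, and the remaining analysis is a statement about honest quantum states plus adversarial processing. A sequential composition argument then shows that the claw-state protocol is simulation-secure, with error adding up over the constituent VSP calls. We set each BB84 VSP error to $\epsilon/\poly(m)$.

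\textbf{Main obstacle.} The hardest step is stage one's rigidity argument using only OSP indistinguishability rather than the TCF structure. In particular, the extracted isometry must be efficient, and the anticommutation needed between the two basis observables must be derived from computational hiding of $\theta$. I would try to reuse the generic argument of \cite{bartusek2025power}, which builds blind delegation from OSP, as a black box. A fallback is the computationally non-local game view: Alice's part is the OSP, and soundness against computationally non-local strategies gives the required self-test.
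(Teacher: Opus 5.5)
\paragraph{The gap is in Stage one.} Your test rounds reveal $\theta_j$ before the receiver measures. The receiver's two measurements are therefore chosen after the basis is known, and nothing relates them to each other. OSP hiding of $\theta$ is never used during the test, so no anticommutation can be derived from it.

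In fact the test is unsound for OSP as defined in \cref{def:OSP}, which only requires $\theta$ to be hidden. Take any OSP and add an ``opt-out'' branch: the receiver may request it, and the sender then sends a fresh uniformly random $s$ in the clear and outputs it. This is still a valid OSP, since the opt-out view is independent of $\theta$ and honest receivers never opt out. A receiver that always opts out passes every one of your test rounds with probability $1$ while holding only the classical strings $s_j$. No simulator given $H^\theta\ket{s}$ can reproduce this: a distinguisher knowing $(s,\theta)$ compares the receiver's register with $s$ when $\theta=1$. So the claimed rigidity step producing an isometry onto $H^\theta\ket{s}$ is false for this test, and cut-and-choose cannot repair it.

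\paragraph{What the paper does instead.} The missing idea is a test in which the receiver must answer while $\theta$ is still hidden. The paper (\cref{lemma:OSP-VSP}) first reduces to single-qubit BB84 VSP by sequential, not parallel, repetition \cite[Theorem 3.3]{ZhaRSP}. It then uses a CHSH-type test: after the OSP, the sender sends a random $q$, the receiver measures in the $X+Z$ or $X-Z$ basis and returns $a$, and the sender accepts iff $a=s\oplus\theta q$. The opt-out attacker wins only with probability $3/4<\cos^2(\pi/8)$.

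For a receiver scoring at least $\cos^2(\pi/8)-\delta$, OSP hiding plus the computational Tsirelson bound (the proof of \cite[Theorem 6.2]{bartusek2025power} with \cite[Theorem 4.7]{BGKPV}) gives $\Tr(\{O_0,O_1\}^2\tau)=O(\delta)$. This yields an efficient isometry extracting a qubit. \cref{lemma:CHSH}, again using hiding of $\theta$, shows the leftover register is computationally independent of $(s,\theta)$, so the simulator just swaps in the ideal qubit. Because the honest score is $\cos^2(\pi/8)$ rather than $1$, amplification goes through Zhang's score-based PrePSPV-to-RSPV argument \cite[Section 3.5.2]{ZhaRSP}, not plain cut-and-choose on near-perfect tests.

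Your ``fallback'' of viewing the OSP as Alice's side of a computationally non-local game points the right way, but the proof rests on exactly this choice of game. Stage two should be taken as a black box from \cite[Theorem 6.7]{ZhaRSP}, as the paper does. Do not use your own sketch of it: CNOT-ing a single $\ket{\pm}$ onto qubits designated as computational-basis states makes $x_0\oplus x_1$ determined by the CNOT pattern. It also lets the receiver learn $x_0$ by measuring, so it does not realize the ideal claw state.
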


This follows by combining the following two lemmas.

\begin{lemma}\label{lemma:OSP-VSP}
    Assuming OSP, there exists VSP for the ``BB84 state'' family \[\left\{H^\theta\ket{x}\right\}_{x,\theta \in \{0,1\}^{n(\secp)}}\] for any polynomial $n(\secp)$. 
\end{lemma}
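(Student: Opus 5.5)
The plan is to follow the route sketched in the introduction, namely adapting Zhang's argument that plain trapdoor claw-free functions yield VSP for BB84 states, with the cryptographic ingredients supplied by OSP instead of a TCF. The protocol will be a cut-and-choose ``self-testing'' protocol. The sender runs $N = \poly(n,\secp,1/\epsilon)$ independent OSP instances with uniformly random basis bits $\theta_j \gets \{0,1\}$. The receiver then holds (honestly) states $H^{\theta_j}\ket{s_j}$, and the sender knows $(s_j,\theta_j)$. The sender picks a uniformly random subset $T \subseteq [N]$ of test positions. For each $j \in T$, the sender reveals $\theta_j$ and asks the receiver to measure in basis $\theta_j$ and report the outcome; the sender rejects if any outcome differs from $s_j$. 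Finally, the sender selects $n$ untested positions at random, asks the receiver to permute them into the output register, and outputs $(x,\theta)$ restricted to those positions. Correctness follows directly from OSP correctness and a union bound.

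For security, we first reduce to a single-qubit rigidity statement, in the same spirit as the argument in the proof of \cref{lmm:beta-security}. Purify the sender's randomness and consider the joint state after all OSP executions. For each position $j$, define the adversary's ``claimed measurement'' operators in the two bases: $Z_j$ is the receiver's (purified) procedure when told $\theta_j = 0$, and $X_j$ is the procedure when told $\theta_j = 1$. Passing the test with probability $1-O(\delta)$ means both observables are predicted correctly by the sender's $s_j$ on average. OSP security says the receiver's state is computationally independent of $\theta_j$. As in the derivation of \cref{eq:phi_equality} from the Hadamard extractor, this independence lets us argue that $Z_j$ and $X_j$ approximately anticommute on the adversary's state. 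The reduction is to a distinguisher that would otherwise predict $\theta_j$ by checking which observable is consistent. Approximate anticommutation then gives, via the standard Jordan's lemma / qubit-extraction argument, an efficient isometry $\Phi_j$ that maps the adversary's system to a qubit register tensored with junk, on which $Z_j \approx \mathsf{Z}$ and $X_j \approx \mathsf{X}$, up to error $O(\sqrt{\delta})$.

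The simulator then runs the real protocol internally with a self-generated OSP sender on the test positions. For the output positions, it applies $\Phi_j$ to extract the adversary's qubit, swaps in the ideal BB84 qubit $H^{\theta_j}\ket{x_j}$ received from the ideal functionality, and applies $\Phi_j^\dagger$. It outputs $b=0$ exactly when the internal test rejects. We then choose $|T|$ large relative to $n$ so that a standard sampling argument upgrades average test success to per-position error $\delta = O(\epsilon^2/n^2)$, making the total error across $n$ positions at most $\epsilon$.

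The main obstacle is the step from computational basis-hiding, a single-bit indistinguishability guarantee, to an efficient approximate-anticommutation (rigidity) statement that holds \emph{simultaneously} for $n$ positions sharing one adversarial Hilbert space. The isometries for different positions need not commute. I would first handle one position at a time through a hybrid argument, replacing position $j$'s qubit with an ideal one. Each hybrid invokes OSP security via the reduction to predicting $\theta_j$. Rigidity is used only in its single-qubit form, with the other positions' sender data sampled by the reduction itself, so that errors add up linearly over the $n$ hybrids.
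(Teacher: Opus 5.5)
Your protocol has a genuine gap at the self-testing step. In your test the sender \emph{reveals} $\theta_j$ and then asks for a measurement in basis $\theta_j$. The receiver therefore only ever answers the one question whose basis it has just been told. Passing such a test with probability $1-O(\delta)$ does not force the two ``claimed'' observables to anticommute, and commuting (even classical) strategies can pass it.

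Your proposed reduction, a distinguisher that predicts $\theta_j$ by ``checking which observable is consistent'', cannot be implemented. Consistency is measured against $s_j$, which is the sender's private output. The OSP security game constrains only the receiver's view, so $s_j$ is not available to the distinguisher.

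Concretely, take any OSP and let the receiver first send a bit $m$. On $m=1$ the sender samples $s,r\gets\{0,1\}$, sends $(s,r)$ if $\theta=0$ and $(r,s)$ if $\theta=1$, and outputs $s$. This is still an OSP in the sense of \cref{def:OSP}: the $m=1$ view is uniform and independent of $\theta$, and honest receivers send $m=0$. Yet a receiver that always sends $m=1$ passes every one of your tests with probability $1$ while holding only classical bits. No simulator can reproduce this from $H^\theta\ket{x}$: it would have to output two bits whose $\theta$-th entry equals $x$, which succeeds with probability at most $\cos^2(\pi/8)$.

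The analogy with \cref{eq:phi_equality} does not transfer either. That argument is information-theoretic and uses the claw-state structure (the register holding $\theta_{1-b}$ is in tensor product with the prover's branch), not computational hiding.

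The missing idea is to test the receiver while $\theta$ stays hidden. In the paper's $\mathsf{VSP}_{\mathsf{Test}}$, the sender sends a random $q$, the receiver measures in the $X+Z$ or $X-Z$ basis, and the sender accepts iff $a = x\oplus\theta q$. This is a CHSH game compiled with OSP playing Alice (question $\theta$, answer $x$), with honest value $\cos^2(\pi/8)$. Computational CHSH rigidity (\cite[Theorem 6.2]{bartusek2025power} with \cite[Theorem 4.7]{BGKPV}) turns a score of $\cos^2(\pi/8)-\delta$ into $\Tr(\{O_0,O_1\}^2\tau)=O(\delta)$, which yields the qubit-extracting isometry.

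\cref{lemma:CHSH} then uses OSP hiding once more, via the bit-flip/phase-flip distinguishers, to show that the leftover junk is computationally independent of $(x,\theta)$. Your ``swap in the ideal qubit'' step needs this but does not address it.

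Finally, the paper does not use a cut-and-choose ``sampling argument'', which is not routine against entangled quantum adversaries. It reduces to a single qubit via Zhang's sequential repetition theorem and upgrades the score-based guarantee to full VSP via Zhang's amplification of PrePSPV-with-score \cite{ZhaRSP}.
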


\begin{proof}
    First, due to sequential repetition for VSP (\cite[Theorem 3.3]{ZhaRSP}), it suffices to show the existence of VSP for the state family $\{H^\theta\ket{x}\}_{x,\theta \in \{0,1\}}$. For this, consider the following pair of protocols.\\

    \noindent\underline{$\VSP_\Test$}
    \begin{itemize}
        \item The sender $S$ samples $\theta \gets \{0,1\}$.
        \item Run $(x,\ket{\psi}) \gets \OSP\langle S(1^\secp,\theta),R(1^\secp)\rangle$.
        \item The sender $S$ samples $q \gets \{0,1\}$, and sends $q$ to $R$.
        \item $R$ measures their state $\ket{\psi}$ in the $X+Z$ basis if $q = 0$ or the $X-Z$ basis if $q = 1$ to obtain a bit $a$. $R$ sends $a$ to $S$.
        \item $S$ outputs 1 if $a = x \oplus \theta \cdot q$.
    \end{itemize}

    \noindent\underline{$\VSP_\Out$}
    \begin{itemize}
        \item The sender $S$ samples $\theta \gets \{0,1\}$.
        \item Run $(x,\ket{\psi}) \gets \OSP\langle S(1^\secp,\theta),R(1^\secp)\rangle$.
        \item $R$ outputs $\ket{\psi}$ and $S$ outputs $(x,\theta)$.
    \end{itemize}
    By the arguments in \cite[Section 3.5.2]{ZhaRSP} (amplification of ``PrePSPV-with-score'' to RSPV), in order to establish $\VSP$ for $\{H^\theta\ket{x}\}_{x,\theta \in \{0,1\}}$, it suffices to show that for any $\epsilon = 1/\poly(\secp)$, there exists a $\delta = 1/\poly(\secp)$ such that the following holds. For any QPT adversarial receiver $\cA = \{\cA_\secp\}_{\secp \in \bbN}$ and advice state $\rho_{A,\sE} = \{\rho_{\secp,A,E}\}_{\secp \in \bbN}$ such that 
    \[\Pr\left[1 \gets \VSP_\Test\langle S(1^\secp),\cA(A)\rangle\right] \geq \cos^2(\pi/8)-\delta,\]
    there exists a QPT simulator $\Sim = \{\Sim_\secp\}_{\secp \in \bbN}$ such that for any QPT distinguisher $\cD = \{\cD_\secp\}_{\secp \in \bbN}$, 
    \begin{align*}
        \bigg|&\Pr\left[\cD\left((x,\theta),A,E\right) = 1 : ((x,\theta),A) \gets \VSP_\Out\langle S(1^\secp),\cA(A)\rangle\right] \\
        &- \Pr\left[\cD\left((x,\theta),A,E\right) = 1 : \begin{array}{r}x,\theta \gets \{0,1\} \\ A \gets \Sim(1^\secp,A,H^\theta\ket{x})\end{array}\right]\bigg| \leq \epsilon.
    \end{align*}

    To see this, we will adapt arguments made in \cite[Section 4.2]{ZhaRSP}. We will make use of the following key lemma. First, we introduce the following notation.
    \begin{itemize}
        \item For $k \in [0,7]$, let $\ket{\phi_k} \coloneqq \cos(k\pi/8)\ket{0} + \sin(k\pi/8)\ket{1}$.
        \item Define the following ``CHSH acceptance'' projectors $\Pi^{(\theta,q)}_{B,X}$, where $\theta \in \{0,1\}$ denotes Alice's question and $q \in \{0,1\}$ denotes Bob's question. Each projector  operates on two single-qubit registers $B,X$, where $B$ is meant to hold Bob's qubit and $X$ is meant to hold Alice's answer $x$.
        \begin{align*}
        &\Pi^{(0,0)}_{B,X} = \ketbra{\phi_1} \otimes \ketbra{0} + \ketbra{\phi_5} \otimes \ketbra{1}  \\
        &\Pi^{(0,1)}_{B,X} = \ketbra{\phi_7} \otimes \ketbra{0} + \ketbra{\phi_3} \otimes \ketbra{1}  \\
        &\Pi^{(1,0)}_{B,X} = \ketbra{\phi_1} \otimes \ketbra{0} + \ketbra{\phi_5} \otimes \ketbra{1} \\
        &\Pi^{(1,1)}_{B,X} = \ketbra{\phi_3} \otimes \ketbra{0} + \ketbra{\phi_7} \otimes \ketbra{1}  
        \end{align*}
    \end{itemize}

    \begin{lemma}\label{lemma:CHSH}
        Let $X$ be a classical register, $Z$ be an arbitrary-size quantum register, and $B$ be a single-qubit register. Let $\rho^{(0)}_{Z,B,X}, \rho^{(1)}_{Z,B,X}$ be any two states such that \[\Tr_X\left(\rho^{(0)}\right)  \approx_c \Tr_X\left(\rho^{(1)}\right)\] and 
        \[\frac{1}{4}\Tr\left(\Pi^{(0,0)}_{B,X}\rho^{(0)}\right) + \frac{1}{4}\Tr\left(\Pi^{(0,1)}_{B,X}\rho^{(0)}\right) + \frac{1}{4}\Tr\left(\Pi^{(1,0)}_{B,X}\rho^{(1)}\right) + \frac{1}{4}\Tr\left(\Pi^{(1,1)}_{B,X}\rho^{(1)}\right) \geq \cos^2(\pi/8) - \delta.\] 

        Then, there exists $\sigma_Z$ such that 
        \[\left(\frac{1}{2}\rho^{(0)} \otimes \ketbra{0}_T + \frac{1}{2}\rho^{(1)} \otimes \ketbra{1}_T\right) \approx_{c,\poly(\delta)} \sigma_Z \otimes \frac{1}{4}\sum_{x,\theta \in \{0,1\}} \left(H^\theta\ketbra{x} H^\theta\right)_B \otimes \ketbra{x}_X \otimes \ketbra{\theta}_T  .\]
    \end{lemma}

    \begin{proof}
        Write \[\rho^{(0)}_{Z,B,X} = \rho^{(0,0)}_{Z,B} \otimes \ketbra{0}_X + \rho^{(0,1)}_{Z,B} \otimes \ketbra{1}_X\] and \[\rho^{(1)}_{Z,B,X} = \rho^{(1,0)}_{Z,B} \otimes \ketbra{0}_X + \rho^{(1,1)}_{Z,B} \otimes \ketbra{1}_X.\]

        By a standard argument, we have that for any $\tau_B$,
        \[\frac{1}{2}\Tr\left(\ketbra{\phi_1}\tau\right) + \frac{1}{2}\Tr\left(\ketbra{\phi_7}\tau\right) \leq \cos^2(\pi/8),\]
        this expression is maximized at $\tau = \ketbra{0}$, and thus there exists $\sigma_Z^{(0,0)}$ such that
        \[\rho_{Z,B}^{(0,0)} \approx_{\poly(\delta)} \sigma^{(0,0)}_Z \otimes \ketbra{0}_B.\] Analogous arguments allow us to conclude that there exist $\sigma_Z^{(0,1)},\sigma_Z^{(1,0)},\sigma_Z^{(1,1)}$ such that 
        \[\rho^{(0)}_{Z,B,X} \approx_{\poly(\delta)} \sigma^{(0,0)}_Z \otimes \ketbra{0}_B \otimes \ketbra{0}_X + \sigma^{(0,1)}_Z \otimes \ketbra{1}_B \otimes \ketbra{1}_X\] and \[\rho^{(1)}_{Z,B,X} \approx_{\poly(\delta)} \sigma^{(1,0)}_Z \otimes \ketbra{+}_B \otimes \ketbra{0}_X + \sigma^{(1,1)}_Z \otimes \ketbra{-}_B \otimes \ketbra{1}_X.\]

        To complete the proof, it remains to show that 
        \begin{enumerate}   
            \item $\sigma_Z^{(0,0)} \approx_{c,\poly(\delta)} \sigma_Z^{(0,1)}$,
            \item $\sigma_Z^{(1,0)} \approx_{c,\poly(\delta)} \sigma_Z^{(1,1)}$,
            \item and $\sigma_Z^{(0,0)} \approx_{c,\poly(\delta)} \sigma_Z^{(1,0)}$,
        \end{enumerate}
        so that we can set $\sigma_Z \coloneqq \sigma_Z^{(0,0)}$. Claim (3) follows directly by combining Claim (1), Claim (2), and the first condition of the lemma statement. Thus it remains to prove Claim (1) and Claim (2).
        
        For Claim (1), consider any QPT $\cD$ and define \[\epsilon \coloneqq \Pr\left[\cD\left(\sigma_Z^{(0,0)}\right) = 1\right] - \Pr\left[\cD\left(\sigma_Z^{(0,1)}\right) = 1\right].\]
        We define a $\cD'$ that operates on registers $Z$ and $B$ as follows.
        \begin{itemize}
            \item Run $\cD$ on $Z$ and apply a bit flip to $B$ if the output is 1.
		\item Measure $B$ in the standard basis and output 1 if the result is 0.
        \end{itemize}
        We have that \[\Pr[\cD'\left(\Tr_X\left(\rho^{(0)}\right)\right) = 1] = 1/2 \pm \poly(\delta) + \poly(\epsilon),\] while \[\Pr[\cD'\left(\Tr_X\left(\rho^{(1)}\right)\right) = 1] = 1/2 \pm \poly(\delta).\] Thus, by the first condition in the lemma statement, we conclude that $\epsilon = \poly(\delta)$, completing the proof of Claim (1).

        For Claim (2), consider any QPT $\cD$ and define \[\epsilon \coloneqq \Pr\left[\cD\left(\sigma_Z^{(1,0)}\right) = 1\right] - \Pr\left[\cD\left(\sigma_Z^{(1,1)}\right) = 1\right].\]
        We define a $\cD'$ that operates on registers $Z$ and $B$ as follows.
        \begin{itemize}
            \item Run $\cD$ on $Z$ and apply a phase flip to $B$ if the output is 1.
		\item Measure $B$ in the Hadamard basis and output 1 if the result is $\ket{+}$.
        \end{itemize}
        We have that \[\Pr\left[\cD'\left(\Tr_X\left(\rho^{(1)}\right)\right) = 1\right] = 1/2 \pm \poly(\delta) + \poly(\epsilon),\] while \[\Pr[\cD'\left(\Tr_X\left(\rho^{(0)}\right)\right) = 1] = 1/2 \pm \poly(\delta).\] Thus, by the first condition in the lemma statement, we conclude that $\epsilon = \poly(\delta)$, completing the proof of Claim (2), and hence the lemma.



        

    \end{proof}

    Now, we complete the proof of \cref{lemma:OSP-VSP}. Fix any adversarial receiver $\cA$. Let $\tau_A$ be their state after the $\OSP$ is completed, and before the question $q$ is sent. Let $O_0,O_1$ be the binary observables they apply to register $A$ (where we have expanded $A$ sufficiently so that $O_0$ and $O_1$ are projective) in order to respond to questions $q = 0$ and $q=1$ respectively. By the proof of \cite[Theorem 6.2]{bartusek2025power} combined with \cite[Theorem 4.7]{BGKPV}, we have that \[\Tr\left(\{O_1,O_2\}^2 \tau\right) = O(\delta),\] where $\{O_0,O_1\} = O_0O_1 + O_1O_0$. Then, by a standard argument (see e.g. \cite[Claim 6.4]{Vid20-course}), there exists an efficient isometry $V: A \to B,A'$ that makes calls to $O_0$ and $O_1$ such that \begin{align*}O_0(\tau) &\approx_{\poly(\delta)} V^\dagger\left(\ketbra{\phi_1}_B - \ketbra{\phi_5}_B\right)V(\tau) \\ O_1(\tau) &\approx_{\poly(\delta)} V^\dagger\left(\ketbra{\phi_3}_B - \ketbra{\phi_7}_B\right)V(\tau).\end{align*} Now, we can define the simulator.\\
    
    \noindent\underline{$\Sim(1^\secp,A,H^\theta\ket{x})$}
    \begin{enumerate}
    	\item Sample $\theta \gets \{0,1\}$ and run $((x,\theta),A) \gets \OSP\langle S(1^\secp,\theta),\cA(A)\rangle$.
	\item Apply $V$, swap the state on register $B$ with $H^\theta\ket{x}$, and apply $V^\dagger$. 
	\item Output the resulting state on register $A$.
    \end{enumerate}

    Define register $Z \coloneqq (E,A')$, let $X$ be the single-qubit (classical) register holding $S$'s output $x$, and let $T$ be the single-qubit (classical) register holding $S$'s output $\theta$. Let $\rho^{(0)}_{Z,B,X} = (E,V(A),X)$ conditioned on $\theta = 0$ and $\rho^{(1)}_{Z,B,X} = (E,V(A),X)$ conditioned on $\theta = 1$. Then by the security of $\OSP$ and the fact that $\cA$ succeeds with probability $\cos^2(\pi/8) - \delta$ in $\VSP_\Test$, we have that $\rho^{(0)}$ and $\rho^{(1)}$ satisfy the conditions of \cref{lemma:CHSH}. By the conclusion of \cref{lemma:CHSH}, we have that 
    \[\left((x,\theta),A,E\right) : ((x,\theta),A) \gets \VSP_\Out\langle S(1^\secp),\cA(A)\rangle\] is $\poly(\delta)$ indistinguishable from a state where the application of Step 2 of $\Sim$ has no effect, which completes the proof.

\end{proof}

The following is established in \cite[Theorem 6.7]{ZhaRSP}

\begin{lemma}[\cite{ZhaRSP}]
    For any polynomial $m(\secp)$, there exists a polynomial $n(\secp)$ such that the following holds. Assuming VSP for the BB84 state family \[\left\{H^\theta\ket{x}\right\}_{x,\theta \in \{0,1\}^{n(\secp)}},\] there exists VSP for the claw state family \[\left\{\frac{1}{\sqrt{2}}\left(\ket{0,x_0}+\ket{1,x_1}\right)\right\}_{x_0,x_1 \in \{0,1\}^{m(\secp)}}.\]
\end{lemma}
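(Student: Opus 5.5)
We build a VSP protocol for the claw-state family out of many BB84 states obtained from the assumed VSP, following the blueprint of \cite[Section 6]{ZhaRSP}. First, using the assumed BB84 VSP with a sufficiently large $n = \poly(m,\secp)$, the sender obtains $(x,\theta)$ and the receiver holds $H^{\theta}\ket{x}$. By the simulation security in \cref{def:RSPV}, we may analyze the rest of the protocol in a hybrid world: the receiver's register is replaced by honestly generated BB84 qubits handed over by the simulator, with the output $\bot$ possible. Every later step therefore only needs to be analyzed against an adversary who starts from genuine BB84 states with uniformly random bases unknown to it. This costs $\epsilon$ in the final bound.

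\textbf{Step 1: the claw-generation gadget.} Partition the $n$ qubits into blocks. The sender publicly announces, or has the receiver compute under a classical description, a linear map. The honest receiver applies it coherently so that one distinguished qubit, held in the Hadamard basis (i.e.\ $\ket{+}$ or $\ket{-}$, a "selector"), controls which of two computational-basis strings, built from the standard-basis qubits, is written into the output register. A simpler way to organize this is in two parts. The Hadamard-basis qubits supply the superposition over $b$. The computational-basis qubits supply $x_0$ and $x_1$, copied via controlled-SWAP or CNOT, conditioned on the selector.

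After the receiver measures the leftover qubits in the Hadamard basis and reports the outcomes, the sender recomputes a phase correction. This is exactly analogous to the $\theta_b$ bookkeeping in \cref{sec:SFSprotocol}. The result is the state $\frac{1}{\sqrt2}(\ket{0,x_0}\pm\ket{1,x_1})$, and the sender can fix the sign with a $Z$-correction instruction. Completeness is a direct calculation.

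\textbf{Step 2: tests and soundness.} We interleave, with some probability, test rounds in which the receiver is asked to measure the prepared claw register in the computational or the Hadamard basis and report the outcome. The sender checks consistency with $(b,x_b)$ or with the parity relation $\langle d, x_0\oplus x_1\rangle \oplus u = \text{phase}$, as in the SFS tests. Given high passing probability $1-\delta$, a rigidity argument of the same shape as the proof of \cref{lmm:beta-security} yields two things. The computational test gives an isometry that extracts $(b,x_b)$. The Hadamard test together with the anticommutation between the two tests gives an operator $\mathsf{F}$ mapping the $b=0$ branch to the $b=1$ branch up to $O(\sqrt{\delta})$. Because the adversary started from BB84 states with hidden bases, it cannot learn both $x_0$ and $x_1$, which plays the role of \cref{eq:distswap}. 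The simulator then swaps an honest claw state into the extracted register, as in \cref{lmm:simulator}. Finally, the test-and-output ("PrePSPV-with-score") amplification of \cite[Section 3.5.2]{ZhaRSP} converts the $O(\sqrt{\delta})$ guarantee into $\epsilon$-security by choosing the test probability appropriately, and sequential composition \cite[Theorem 3.3]{ZhaRSP} glues the pieces together.

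\textbf{Main obstacle.} I expect the hardest step to be the rigidity argument in Step 2. The tests must be purely classical given BB84 inputs, so the phase relation must be checkable from the sender's knowledge of $(x,\theta)$ alone. The argument must also show that passing both tests forces the receiver's register to be isometric to a genuine claw state rather than to a mixture over $b$. I would first try to reduce this directly to the structure of the proof of \cref{lmm:beta-security}, treating the BB84-derived claw as the verifier-sent claw state there and taking $f$ to be the identity.
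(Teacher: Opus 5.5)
The paper gives no proof of this lemma; it imports it directly from \cite[Theorem 6.7]{ZhaRSP}. Your reconstruction has a genuine gap in Step~1, and Step~2 cannot repair it.

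For the honest receiver to use ``one distinguished Hadamard-basis qubit'' as a selector and ``the standard-basis qubits'' as the sources of $x_0,x_1$ (via controlled-SWAP or CNOT), the sender must tell it which qubits play which role. That means revealing their bases. A qubit known to be in the computational basis can be measured with no disturbance at all. So a malicious receiver reads off $x_0$ and $x_1$, then follows the honest instructions, and ends with exactly the honest state $\frac{1}{\sqrt{2}}(\ket{0,x_0}\pm\ket{1,x_1})$. It therefore passes every computational- and Hadamard-basis test with the honest probability, yet its view contains both preimages. The distinguisher sees the sender's output $(x_0,x_1)$, and no simulator holding one copy of the claw state can produce both strings (\cref{eq:distswap}). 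Hence simulation security fails.

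In particular, your sentence ``because the adversary started from BB84 states with hidden bases, it cannot learn both $x_0$ and $x_1$'' is false for this gadget, since the gadget itself un-hides exactly the relevant bases. Tests can only certify the receiver's state; they cannot remove side information that the honest instructions hand out.

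The alternative also fails. If you keep the bases hidden and have the receiver act uniformly on all qubits, correctness breaks. A Hadamard-basis qubit in a string slot stays in the Hadamard basis: a CNOT onto $\ket{\pm}$ only kicks a phase back to the control, and a controlled-SWAP moves $\ket{\pm}$ into the output. The result is then not of claw form.

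What is missing is a preparation gadget with two properties at once. The classical information given to the receiver must leave at least one of $x_0,x_1$ hidden (equivalently, the pattern $x_0\oplus x_1$ given $x_b$). At the same time, the honest output must still be exactly of claw form. Constructing such a gadget and proving it simulation-secure is the real content of Zhang's theorem, and your sketch does not supply it.

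The proposed reduction to \cref{lmm:beta-security} does not supply it either, for two reasons:
\begin{itemize}
\item That proof \emph{assumes} a genuine verifier-sent claw state whose $z_{1-b}$ is hidden; this is where \cref{eq:distswap} enters, in the step following \cref{eq:comp_basis_ind}. That is precisely the conclusion you need here, so the argument is circular.
\item Instantiating the SFS with $f$ the identity only gives the prover the classical string $x_b$, not a claw state.
\end{itemize}
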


%% file: non_succinct_qma.tex
\section{Round-Efficient Classically-Verifiable Arguments for QMA}\label{def:compiled-game}
\label{sec:non-succinct_argument_for_qma_mnz24}
In this section, we derive a classically-verifiable QMA verification protocol with a fixed $\poly(\secp)$ number of rounds, assuming OSP. Our round-efficient protocol is an adapted and simplified version of the question-succinct QMA verification protocol in \cite{metger2024succinct} (Protocol 2 in their paper). We will adapt their protocol so that it remains round efficient, but remove the need for the question-succinctness property. At a high level, we will modify the protocol in the following two ways: 
\begin{enumerate}
    \item Replace the QFHE scheme with a classical-verifier blind delegation scheme built from OSP \cite{bartusek2025power}, so that we don't need to assume LWE.
    \item Remove the components needed for question-succinctness, since we will make the protocol succinct later using our communication compression compiler from \cref{sec:compression_compiler}.
\end{enumerate}

\subsection{Preliminaries}
We first present some preliminaries used in the protocol and proofs for this section.
\subsubsection{Pauli matrices and the Heisenberg-Weyl group}
We use the usual Pauli matrices $\sigma_X, \sigma_Y, \sigma_Z$. We will also find it convenient to set $\sigma_{\1} = \1$.
For $w \in \{\1, X,Z\}^n$ and $a \in \bits^n$, we define
\begin{align*}
\sigma_w(a) = \bigotimes_i \sigma_{w_i}^{a_i} \,.
\end{align*}
We also write $\sigma_w = \sigma_w(\vec 1) = \otimes_i \sigma_{w_i}$.

In addition, we define the Pauli \emph{projections} $\pi^w_a$ as
\begin{align}
\label{eq:pauli-proj}
\pi^w_u = \bigotimes_i \left( \frac{\1 + (-1)^{u_i} \sigma_{w_i}}{2} \right) = \E_{a \in \bits^n} (-1)^{u \cdot a} \sigma_w(a)\,.
\end{align}

\paragraph{Measurement notations.} We will need to establish some notation specific to the prover's ``Bob'' operators in Protocol~\ref{prot:main-2-prover}.

\begin{definition}[Projective measurements]
~
\begin{enumerate}
\item \textbf{Pure basis measurements.} In the Pauli braiding test (Protocol~\ref{prot:pauli-braiding}), and also in the $b=0$ case of the mixed versus pure basis test (Protocol~\ref{prot:mixed-vs-pure}), Bob is asked a single-bit question corresponding to a basis ($X$ or $Z$). For each basis $W \in \{X,Z\}$, we notate Bob's projective measurement after receiving question $W$ as a set of projectors $\{P^{W}_u\}_{u}$ with outcomes $u \in \{0,1\}^n$.
\item \textbf{Mixed basis measurements.} In the mixed versus pure basis test (Protocol~\ref{prot:mixed-vs-pure}), Bob receives a question $w \in \{\1, X, Z\}^n$ corresponding to $n$ Pauli bases (one for each of $n$ qubits). For each Pauli string $w \in \{\1, X,Z\}^n$, we notate Bob's projective measurement after receiving question $w$ as a set of projectors $\{M^{w}_u\}_{u \in \bits^n}$.
We can assume without loss of generality that Bob always answers ``0'' on indices where he was asked to measure the identity, i.e.~formally we can assume that $M^w_u = 0$ if there exists an index $i \in [n]$ for which $w_i = \1$ but $u_i = 1$.
The reason that this assumption is without loss of generality is that we can always replace Bob's measurements by a post-processed version that has this property; since the verifier ignores all indices for which $w_i = \1$, this post-processing affects neither Bob's success probability nor any of the rigidity statements we show below.
\end{enumerate}
\end{definition}

\begin{definition}[Pure basis observables]
    For $W \in \{X, Z\}$ and $a \in \bits^n$, define the binary observable
    \[ W(a) := \sum_{u} (-1)^{ u \cdot a} P^W_u. \]
\end{definition}
\begin{definition}[Mixed basis observables]
For $w \in \{\1, X,Z\}^n$ and $a \in \bits^n$, define the binary observables 
    \begin{align*}
        O^w(a) &= \sum_{u \in \bits^n} (-1)^{a \cdot u} M^w_u \,,\\
        O^w_W(a) &= \sum_{u \in \bits^n} (-1)^{\vec 1_{w = W \wedge a = 1} \cdot u} M^w_u \,.
    \end{align*}
We also write $O^w = O^w(\vec 1)$ and $O^w_W = O^w_W(\vec 1)$.
\end{definition}

\paragraph{Alice and Bob's measurements.} For convenience, we summarize the different question types that Alice and Bob may each see in Protocol~\ref{prot:main-2-prover} here. (These are not necessarily in one-to-one correspondence with the subgames of the protocol, since some subgames are indistinguishable from Bob's point of view.)

We denote the set of Alice questions by $\cQ_A$ and the set of Bob questions by $\cQ_B$.

\paragraph{Alice questions:}
\begin{itemize}
\item $(a, b)$: $a$ and $b$ are $n$-bit strings and the operations are measuring $\sigma_Z(a)$ and $\sigma_X(b)$.
\item $(\mathsf{MS}, a, b, i_1, i_2, i_3)$:  play magic square for the cells indicated by $i_1, i_2, i_3$ with the operators for cells 1 and 5 coinciding with the $\sigma_X(a)$ and $\sigma_Z(b)$ operators.
\item $X$ or $Z$: measure all qubits in $\sigma_X$ or $\sigma_Z$ basis.
\item $w$ for Pauli string $w \in \{\1,X,Z\}^n$: do mixed basis measurement in Pauli bases given by $w$.
\item $\mathsf{tele}$: do teleportation measurement.
\end{itemize}

\paragraph{Bob questions:}
\begin{itemize}
\item $X$ or $Z$: measure all qubits in $\sigma_X$ or $\sigma_Z$ basis.
\item $(\mathsf{MS}, a, b, j)$: play magic square for the cell indicated by $j$ with the operators for cells 1 and 5 coinciding with the $\sigma_X(a)$ and $\sigma_Z(b)$ operators.
\item $w$ for Pauli string $w \in \{\1,X,Z\}^n$: do mixed basis measurement in Pauli bases given by $w$.
\end{itemize}

\subsubsection{State dependent norm}
Let $\mathsf{Pos}$ denote the set of positive semidefinite
$A: \cH \to \cH$.

\begin{definition}[State-dependent inner product and norm]\label{def:state_dep_inner_product}
 Let $\cH$ be a finite-dimensional Hilbert space and $A, B \in L(\cH)$ be linear operators on $\cH$. Let $\psi \in \mathsf{Pos}(\cH)$. We define the state-dependent (semi) inner product of $A$ and $B$ w.r.t $\psi$ as 
\begin{equation*}
\langle A, B \rangle_\psi = \Tr[ A^\dagger B \psi ] \,.
\end{equation*} 
This induces the state-dependent (semi) norm 
\begin{equation*}
\norm{A}_\psi^2 = \langle A, A \rangle_\psi = \Tr[ A^\dagger A \psi ] \,.
\end{equation*}
\end{definition}

\begin{remark} \label{rem:schatten_norm}
The state dependent (semi) norm can also be expressed as a Schatten 2-norm (also called the Hilbert-Schmidt norm): 
\begin{equation*}
\norm{A}_\psi = \norm{A \psi^{1/2}}_2 \,.
\end{equation*}
\end{remark}

The following are some standard and easy-to-prove properties about the norm, taken from \cite{metger2024succinct}. 
\begin{lemma}[Basic properties of the state dependent norm] \label{lem:state_dep_norm_props}
For all (not necessarily normalised) states $\psi, \psi' \in \mathsf{Pos}(\cH)$ and linear operators $A, B \in L(\cH)$ on some finite-dimensional Hilbert space $\cH$:
\begin{enumerate}
\item $\norm{A}_{B \psi B^\dagger} = \norm{AB}_{\psi}$. \label{item:conj_mult}
\item $\norm{A B}_{\psi} \leq \norm{A}_{\infty} \norm{B}_{\psi}$, where $\Vert A \Vert_\infty : = \sup_{\Vert v \Vert = 1} \Vert Av \Vert$
\label{item:submult_infty}
\item For any unitary $U$, $\norm{UA}_\psi = \norm{A}_\psi$. \label{item:left_unitary_inv}
\item Linearity of the squared norm in the state: $\norm{A}_{\psi + \psi'}^2 = \norm{A}_{\psi}^2 + \norm{A}_{\psi'}^2$. \label{item:linearity}
\item Triangle inequality for the squared norm: $\norm{A+B}_\psi^2 \leq 2 \norm{A}_\psi^2 + 2 \norm{B}_\psi^2$. \label{item:triangle_ineq_sq}
\end{enumerate}
\end{lemma}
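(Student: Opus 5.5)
Each item is a one-line consequence of the definition $\norm{A}_\psi^2=\Tr[A^\dagger A\psi]$, together with its Schatten-norm form $\norm{A}_\psi=\norm{A\psi^{1/2}}_2$ from \cref{rem:schatten_norm}. I would take them in the listed order.

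For item~\ref{item:conj_mult}, cyclicity of the trace gives
\[
\norm{A}_{B\psi B^\dagger}^2=\Tr[A^\dagger A B\psi B^\dagger]=\Tr[(AB)^\dagger(AB)\psi]=\norm{AB}_\psi^2 .
\]

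For item~\ref{item:submult_infty}, write $\norm{AB}_\psi=\norm{AB\psi^{1/2}}_2$ and apply the standard Schatten bound $\norm{XY}_2\le\norm{X}_\infty\norm{Y}_2$. Alternatively, use the operator inequality $B^\dagger A^\dagger A B\preceq \norm{A}_\infty^2 B^\dagger B$ and note that $\psi\succeq 0$, so traces against $\psi$ respect the inequality.

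Item~\ref{item:left_unitary_inv} is immediate, since $(UA)^\dagger UA=A^\dagger A$. Item~\ref{item:linearity} follows directly from linearity of $\Tr[A^\dagger A\,\cdot\,]$ in the state.

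For item~\ref{item:triangle_ineq_sq}, the first step is that $\norm{\cdot}_\psi$ is a seminorm, because it is the Hilbert--Schmidt norm of $A\mapsto A\psi^{1/2}$. So $\norm{A+B}_\psi\le\norm{A}_\psi+\norm{B}_\psi$. Squaring and using $(x+y)^2\le 2x^2+2y^2$ finishes. Equivalently, expand $\norm{A+B}_\psi^2+\norm{A-B}_\psi^2=2\norm{A}_\psi^2+2\norm{B}_\psi^2$ (parallelogram law) and drop the nonnegative $\norm{A-B}_\psi^2$.

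None of these steps is a real obstacle. The only point needing care is that $\psi$ may be unnormalized and only positive semidefinite, so $\psi^{1/2}$ must be the PSD square root. Everything used holds for PSD $\psi$ regardless of normalization.
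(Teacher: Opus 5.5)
Your proposal is correct: each item follows as you describe from $\norm{A}_\psi^2=\Tr[A^\dagger A\psi]$ and the Hilbert--Schmidt form $\norm{A\psi^{1/2}}_2$, and you handle the one subtlety by taking the PSD square root of a possibly unnormalized $\psi$. The paper gives no proof of its own and imports these as standard facts from \cite{metger2024succinct}; your one-line verifications are the standard arguments behind them.
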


\begin{lemma}[Lemma 2.9 in \cite{metger2024succinct}]
\label{lem:state_replace}
For an observable $A$ on $\cH$ and two states $\psi, \psi'$ on $\cH$ with $\norm{\psi - \psi'}_1 \leq \eps$, we have that %
\begin{align*}
\norm{A}_{\psi}^2 \approx_{\norm{A}_{\infty}^2 \eps} \norm{A}_{\psi'}^2 \,.
\end{align*}
\end{lemma}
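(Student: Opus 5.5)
This is immediate from \cref{lem:state_dep_norm_props} together with the fact that $A$ is an observable, hence Hermitian. The plan is to rewrite both squared norms as expectations of a single operator, and then bound their difference with the trace-norm/operator-norm Hölder inequality. This is standard, so I expect no real obstacle.

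First, by \cref{def:state_dep_inner_product},
\[
\norm{A}_\psi^2 - \norm{A}_{\psi'}^2 = \Tr\left[A^\dagger A (\psi - \psi')\right],
\]
since the trace is linear in the state. This is just the linearity property of \cref{lem:state_dep_norm_props} applied to the difference $\psi - \psi'$.

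Next, apply Hölder's inequality, $|\Tr[XY]| \le \norm{X}_\infty \norm{Y}_1$, with $X = A^\dagger A$ and $Y = \psi - \psi'$. The $C^*$-identity gives $\norm{A^\dagger A}_\infty = \norm{A}_\infty^2$. Hence
\[
\abs{\norm{A}_\psi^2 - \norm{A}_{\psi'}^2} \le \norm{A}_\infty^2 \norm{\psi - \psi'}_1 \le \norm{A}_\infty^2 \eps,
\]
which is exactly the claimed approximation $\norm{A}_{\psi}^2 \approx_{\norm{A}_{\infty}^2 \eps} \norm{A}_{\psi'}^2$.

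The only point needing care is the Hölder step when $\psi - \psi'$ is not positive. I would justify it by writing $\psi - \psi' = P - Q$ with $P, Q \succeq 0$ of disjoint support, so that $\norm{\psi-\psi'}_1 = \Tr P + \Tr Q$. Because $A^\dagger A \succeq 0$, each of $\Tr[A^\dagger A P]$ and $\Tr[A^\dagger A Q]$ is at least $0$ and at most $\norm{A}_\infty^2$ times the corresponding trace. Their difference is therefore bounded in absolute value by $\norm{A}_\infty^2 \max(\Tr P, \Tr Q) \le \norm{A}_\infty^2 \norm{\psi-\psi'}_1$. This argument does not require the states to be normalized.
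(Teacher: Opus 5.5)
Your proof is correct and is essentially the standard argument: the paper does not reprove this lemma but imports it from \cite{metger2024succinct}, and the bound there comes from writing the difference as $\Tr[A^\dagger A(\psi-\psi')]$ and applying H\"older's inequality with $\norm{A^\dagger A}_\infty = \norm{A}_\infty^2$, as you do. Your Jordan-decomposition paragraph is not needed, since H\"older's inequality holds for non-positive $\psi-\psi'$ and you never use that $A$ is Hermitian. It is nonetheless valid and gives the slightly sharper factor $\max(\Tr P,\Tr Q)$.
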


\input{hamiltonian_prelims}

\input{blind_delegation_nonlocal}

\subsection{A Non-Local Game for $\QMA$ with Constant Gap}
\label{sec:nonsuccinct_qma_protocol}

We present a family of non-local games for verifying QMA resulting from minor modifications made to the protocol from \cite{metger2024succinct}. The protocol in \cite{metger2024succinct} is question-succinct, but we no longer require this property, since our compiler from Section~\ref{sec:compression_compiler} will restore succinctness. Therefore, we make the \cite{metger2024succinct} protocol information-theoretic by removing the use of PRG to sample the randomness used for challenges. In particular, we make the following changes from \cite{metger2024succinct}:
\begin{enumerate}
    \item Use uniformly random $a$,$b$ in Protocols \ref{prot:comgame} and \ref{prot:mixed-vs-pure}, rather than sampling them from a particular distribution with small support.
    \item In the Hamiltonian test (Protocol~\ref{prot:hamiltonian}), the verifier  does not need to use succinct randomness for a subsampled Hamiltonian as specified in \cite[Protocol 8]{metger2024succinct}. Instead, it simply uses the regular distribution $D$ over $\{\1,X,Z\}^{n}$ for testing the X/Z Hamiltonian problem. %
\end{enumerate}
\begin{longfbox}[breakable=false, padding=1em, padding-right=1.8em, padding-top=1.2em, margin-top=1em, margin-bottom=1em]
\begin{protocol} 
\label{prot:main-2-prover}
Non-local game for $\QMA$
\end{protocol}
\textbf{Inputs:} An instance $x \in \{0,1\}^*$ and an algorithm $C$ which is the verifier for a $\QMA$ promise problem $\cL = (\cL_\mathrm{yes}, \cL_\mathrm{no})$ such that $\cL \in \mathsf{QMA}$.

On input $x, C$, execute the reduction given in \cref{thm:qma-to-hamiltonian} to produce a Hamiltonian problem $(H, \alpha(n), \beta(n))$ (see \cref{def:hamiltonian-problem}) with $\beta(n) - \alpha(n) = 1 - \mathsf{negl}(n)$, where $H$ is an $n$-qubit operator ($n = \poly(|x|)$) with the following form:
\begin{align*}
H =  \1 - \underset{{w \in D}}{\E}\sum_{u \in Q(w)} \pi^w_u
\end{align*}
where
\begin{itemize}
    \item $w \in \{\1, X, Z\}^n$ is a Pauli string,
    \item $\{\pi^w_u\}_{u}$ is the projective measurement corresponding to measuring $n$ qubits in the natural way in the Pauli bases specified by $w$ (see \cref{eq:pauli-proj} for a formal definition),
    \item $Q(w) \subseteq \{0,1\}^n$ is a set for which membership can be decided in polynomial time in $n$ given $w$, and
    \item 
    a distribution over $\{\1,X,Z\}^{n}$ which can be efficiently sampled.
\end{itemize}
In words, if $H$ is of this form, then the quantity $1 - \Tr[ H \rho ]$ for any $n$-qubit state $\rho$ can be estimated by a QPT (in $n$) verifier who samples a $w$ from $D$, measures the $n$ qubits of $\rho$ in the Pauli bases specified by $w$, obtains outcomes $a$, and accepts iff $a \in Q(w)$.

\emph{Assume that we are considering a yes-instance $x \in \cL_{\mathrm{yes}}$. Honest Alice receives a witness that $x \in \cL_{\mathrm{yes}}$ as input, and converts it in polynomial time to a $n$-qubit witness $\rho$ that the ground energy of $H$ is $\leq \alpha(n)$. Honest Alice and Bob then share $n+1$ EPR pairs between themselves; most of the tests in the following protocol use only the last $n$ EPR pairs, but the first EPR pair will be used in Protocol~\ref{prot:acomgame}.}

The verifier executes each of the following tests with Alice and Bob with equal probability.
\begin{enumerate}
    \item \textbf{Pauli braiding test.} Described in Protocol~\ref{prot:pauli-braiding}; $n$ will be the number of qubits on which $H$ acts.
    \item \textbf{Mixed-versus-pure basis test.} Described in Protocol~\ref{prot:mixed-vs-pure}; the distribution $D$ will be the $D$ such that $H = \1 -  \underset{{w \sim D}}{\E}\sum_{u \in Q(w)} \pi^w_u$.
    \item \textbf{Hamiltonian test.} Described in Protocol~\ref{prot:hamiltonian}; the distribution $D$ and the sets $\{Q(w)\}_w$ will be those such that $H = \1 - \underset{{w \sim D}}{\E}\sum_{u \in Q(w)} \pi^w_u$.
\end{enumerate}
\end{longfbox}

\begin{longfbox}[breakable=false, padding=1em, padding-right=1.8em, padding-top=1.2em, margin-top=1em, margin-bottom=1em]
\begin{protocol} Pauli braiding test
\label{prot:pauli-braiding} \end{protocol}
\textbf{Input:} An integer $n$.

The verifier picks $a,b$ uniformly at random from $\{0,1\}^n$. 
\begin{enumerate}
\item \textbf{(Commutation):} If $a \cdot b = 0$, the verifier executes the commutation test (Protocol~\ref{prot:comgame}) with questions $a, b$.
\item \textbf{(Anticommutation):} If $a \cdot b = 1$, the verifier executes the anticommutation test (Protocol~\ref{prot:acomgame}) with questions $a, b$.
\end{enumerate}
\end{longfbox}

\begin{longfbox}[breakable=false, padding=1em, padding-right=1.8em, padding-top=1.2em, margin-top=1em, margin-bottom=1em]
\begin{protocol} Commutation test
\label{prot:comgame} \end{protocol}
\textbf{Input:} Questions $a$ and $b$ in $\{0,1\}^n$.
\begin{enumerate}
\item The verifier sends $(a, b)$ %
to Alice, and receives responses $(u_a, u_b)$ with $u_a, u_b \in \{0,1\}$. \emph{Honest Alice 
measures $\sigma_Z(a)$ and $\sigma_X(b)$ on her last $n$ qubits, and returns the results as $u_a$ and $u_b$ respectively.}
\item The verifier picks $W \in \{X,Z\}$ uniformly at random and sends $W$ to Bob. Bob responds with $v \in \{0,1\}^n$. Note that Bob is \emph{not} told that he is playing the commutation test: he receives only the question label $W$. \emph{Honest Bob measures his last $n$ qubits in the $W$ basis.}
\item If $W=Z$, the verifier accepts iff $\prod_{i: a_i = 1} v_{i} = u_a$. If $W=X$, the verifier accepts iff $\prod_{i: b_i = 1} v_{i} = u_b$.
\end{enumerate}
\end{longfbox}

\begin{longfbox}[breakable=false, padding=1em, padding-right=1.8em, padding-top=1.2em, margin-top=1em, margin-bottom=1em]
\begin{protocol} Anticommutation test 
\label{prot:acomgame} \end{protocol}
\textbf{Input:} Questions 
$a$ and $b$ where $a,b \in \{0,1\}^n$.

In this protocol, the verifier plays a version of the Mermin-Peres Magic Square game~\cite{aravind2002simple,mermin1990simple,peres1990incompatible} with the provers, in which Alice is asked to measure three observables forming a row or column of the square, and Bob is asked to measure an observable from a single cell of the square. Both provers are instructed to use the observables labeled by $a$ and $b$ for two specific cells in the square (the top centre and centre left cells), as indicated below.

\begin{enumerate}
\item The verifier chooses a cell index $j \in [9]$ uniformly at random; it then chooses uniformly at random a row or a column on a $3 \times 3$ grid which contains cell $j$. Suppose that the cell indices of the 3 cells in this row or column are $(i_1, i_2, i_3)$ (one of these will be equal to $j$).
\item The verifier sends $(\mathsf{MS}, a, b, i_1, i_2, i_3)$ to Alice, and receives responses $u_1, u_2, u_3 \in \{0,1\}$. \emph{Honest Alice gets $a$ and $b$. Then she measures the three $(n+1)$-qubit observables associated with cells $i_1, i_2, i_3$ in the following grid, and returns all three 1-bit outcomes to the verifier:}

\begin{tabular}{c|c|c}
$\sigma_Z \otimes \1$ & $\1 \otimes \sigma_Z(a)$ & $\sigma_Z \otimes \sigma_Z(a)$ \\
\hline
$\1 \otimes \sigma_X(b)$ & $\sigma_X \otimes \1$ & $\sigma_X \otimes \sigma_X(b)$ \\
\hline
$- \sigma_Z \otimes \sigma_X(b)$ & $- \sigma_X \otimes \sigma_Z(a)$ & $- \sigma_Z \sigma_X \otimes \sigma_Z(a) \sigma_X(b)$
\end{tabular}

\emph{Operators before the tensor product are understood always to act on the first qubit of Alice's halves of the shared EPR pairs, and operators after the tensor product on the last $n$ qubits.}
\item If $j = 2$ (i.e.~Bob's question indicates the top centre cell), the verifier sends $Z$ to Bob, receives $v \in \{0,1\}^n$ as an answer, and accepts iff $\prod_{i: a_i = 1} v_{i}$ is equal to $u_2$ (if Alice was asked a row question) or $u_1$ (if Alice was asked a column question). If $j = 4$ (i.e.~Bob's question indicates the centre left cell), the verifier sends $X$ to Bob, receives $v \in \{0,1\}^n$ as an answer, and accepts iff $\prod_{i: b_i = 1} v_{i}$ is equal to $u_1$ (if Alice was asked a row question) or $u_2$ (if Alice was asked a column question. In all other cases, the verifier sends $(\mathsf{MS}, a, b, j)$ to Bob, receives a single-bit answer $v \in \{0,1\}$, and accepts iff $v = u_k$ for the $k \in [3]$ such that $u_k = j$. \emph{Honest Bob measures all his qubits in the $W$ basis when he receives a single bit question $W$, and in all other cases uses the same strategy as honest Alice, except that he only measures a single cell instead of 3.}

\end{enumerate}
\end{longfbox}

\begin{longfbox}[breakable=false, padding=1em, padding-right=1.8em, padding-top=1.2em, margin-top=1em, margin-bottom=1em]
\begin{protocol} Mixed-versus-pure basis test 
\label{prot:mixed-vs-pure} \end{protocol}
\textbf{Input:} A distribution $D$ over $\{\1, X,Z\}^n$ which can be sampled efficiently. 

\begin{enumerate}
    \item The verifier selects a basis $W$ from $\{X,Z\}$ uniformly at random and sends $W$ to Alice. It receives answer $u \in \{0,1\}^n$. \emph{Honest Alice measures all of her qubits in the $W$ basis.}
    \item The verifier selects a uniformly random $b \gets \{0,1\}$.
    \begin{enumerate}
        \item If $b=0$, the verifier sends $W$ to Bob, and receives answer $v \in \{0,1\}^n$. It accepts iff $u=v$. \emph{Honest Bob measures all of his qubits in the $W$ basis.}
        \item If $b=1$, the verifier samples a Pauli string $w$ from $D$ and sends it 
        to Bob. It accepts iff for all $i$ where $w_i = W$, it is the case that $u_i = v_i$. \emph{Honest Bob measures his last $n$ qubits in the Pauli bases designated by $w$ and reports all the measurement results. For qubits where $w_i = \1$, he always reports the measurement result 0.}
    \end{enumerate}
\end{enumerate}
\end{longfbox}

\begin{longfbox}[breakable=false, padding=1em, padding-right=1.8em, padding-top=1.2em, margin-top=1em, margin-bottom=1em]
\begin{protocol} Hamiltonian test
\label{prot:hamiltonian} \end{protocol}
\textbf{Input:} a distribution $D$ over $\{\1, X,Z\}^n$ 
along with a collection of sets $\{Q(w) : w \in \{\1,X,Z\}^n\}$ such that $Q(w) \subseteq \{0,1\}^n$ (described in a way such that membership in $Q(w)$ can be efficiently decided given $w$).
\begin{enumerate}
    \item The verifier sends question $\mathsf{tele}$ to Alice, and receives in response two strings, $u_x, u_z \in \{0,1\}^n$. \emph{Honest Alice teleports her ground state to Bob through their last $n$ shared EPR pairs and reports the teleportation corrections.}
    \item The verifier samples a Pauli string $w$. 
    and sends it to Bob. 
    The verifier then receives measurement outcomes from Bob, corrects Bob's outcomes using Alice's reported teleportation corrections, and does the appropriate energy test. More specifically, the verifier receives answer $v$ from Bob, and computes for every $i$ such that $w_i \neq \1$
\begin{equation*}
s_i = \underbrace{v_i}_{\text{Bob's measurement}} \oplus \underbrace{[(u_x)_i]^{\1[w_i = Z]} \oplus [(u_z)_i]^{\1[w_i = X]}}_{\text{correction from Alice}}
\end{equation*}
Then the verifier sets $s_i = 0$ for every $i$ such that $w_i = \1$, and accepts iff $s \in Q(w)$. \emph{Honest Bob measures his last $n$ qubits in the Pauli bases designated by $w$ and reports all the measurement results. For qubits where $w_i = \1$, he always reports the measurement result 0.}
\end{enumerate}
\end{longfbox}

\subsection{Applying the Generalized KLVY Compiler
\label{sec:compiled_nonsuccinct_qma}}

We next compile Protocol~\ref{prot:main-2-prover} into a single-prover argument system for $\QMA$ (Protocol~\ref{prot:main-blind}), by applying the generalized KLVY compiler (\cref{def:generalized_KLVY-compiler}). Recall that in the compiled protocol, the verifier interacts with the prover first through a blind delegation protocol satisfying \cref{def:blind-computation}, where the client samples and delegates Alice's question $x$, and the server blindly runs Alice's operation on $x$ to obtain outcome $a$. Next, the verifier sends Bob's question to the prover (in the clear) and receives the response. The verifier will decide the outcome based on the questions $(x,y)$ and the prover's responses $(a,b)$.




\begin{longfbox}
\begin{protocol} Classically-verifiable argument system for $\QMA$ 
\label{prot:main-blind} \end{protocol}
\textbf{Inputs:} An instance $x \in \{0,1\}^*$ and an algorithm $C$ which is the verifier for a $\QMA$ promise problem $\cL = (\cL_\mathrm{yes}, \cL_\mathrm{no})$ such that $\cL \in \mathsf{QMA}$. A security parameter $\secp \in \N$.

The verifier executes Protocol~\ref{prot:main-2-prover} under the compiler given in Definition~\ref{def:generalized_KLVY-compiler}. 
\end{longfbox}

\subsection{Prover Modelling and Useful Lemmas} 
\label{sec:qma_useful_lemma}

\paragraph{Alice's circuit depth.} To ensure that protocol ~\ref{prot:main-blind} has bounded round complexity, we need to make sure that the circuit depth of the operation performed by Alice in ~\ref{prot:main-2-prover} has bounded dependence on the input (instance) size. Indeed, the blind delegation protocol between the client and the server (Alice part) has round complexity that depends polynomially on the quantum circuit depth for the delegated Alice computation.\footnote{We refer the reader to \cite[Section 6.3]{bartusek2025power} for how the blind delegation works.}

\begin{claim}   \label{claim:alice_low_circuit_depth}
The circuit performed by honest Alice in \cref{prot:main-blind} has depth $\poly(\log(|x|))$. %
\end{claim}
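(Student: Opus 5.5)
The plan is to check, question type by question type, that honest Alice's operation in Protocol~\ref{prot:main-2-prover} can be written as a circuit of depth $\poly(\log |x|)$ over a standard gate set with fan-in-bounded Clifford gates plus classical control. For the purpose of blind delegation, a circuit is specified by gates classically controlled on the (encrypted) question bits. Since $n = \poly(|x|)$, it suffices to show depth $\poly(\log n)$.

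Alice's action is a coherent case distinction on the question $q \in \cQ_A$, followed by a measurement of her answer. I will bound the depth of each branch separately and add a classically controlled dispatch layer. The dispatch layer only multiplexes over a constant number of question types, so it costs constant depth, up to $O(\log n)$ depth for fanning out control bits.

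\begin{itemize}
\item[(i)] \textbf{Question $X$ or $Z$.} Alice applies a layer of $H$ gates (controlled on the basis bit) followed by a standard-basis measurement of all $n$ qubits. This has depth $O(1)$.
\item[(ii)] \textbf{Question $w \in \{\1,X,Z\}^n$.} Alice applies $H^{[w_i = X]}$ on each qubit in parallel and measures. This also has depth $O(1)$.
\item[(iii)] \textbf{Question $(a,b)$.} Measuring $\sigma_Z(a)$ and $\sigma_X(b)$ amounts to computing two parities. Alice computes the parity of the selected qubits into an ancilla via a binary tree of CNOTs controlled on $a_i$. For $\sigma_X(b)$ she does the same after conjugating by $H$. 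The two observables commute (honest $a \cdot b = 0$), but I should check that doing them sequentially is correct. Alternatively, she can simply measure all qubits in the $Z$ basis and compute $\langle a, u\rangle$ classically. However, $X$ and $Z$ cannot both be obtained this way, so I will use the coherent parity trees, uncomputing between the two. Each tree has depth $O(\log n)$.
\item[(iv)] \textbf{Magic-square question.} The cells are products of at most two Pauli parities, one on the first qubit and one over the last $n$ qubits selected by $a$ or $b$. Each cell observable can therefore be measured by an $O(\log n)$-depth CNOT/$H$ parity circuit into an ancilla. The three cells in a row or column commute, so they can be measured one after another, giving total depth $O(\log n)$.
\item[(v)] \textbf{Question $\mathsf{tele}$.} Alice first converts the QMA witness into the Hamiltonian witness $\rho$ for the $H$ produced by \cref{thm:qma-to-hamiltonian}. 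She then performs Bell measurements between $\rho$ and her EPR halves, which is depth $O(1)$: a parallel layer of CNOTs and $H$ gates, then a measurement.
\end{itemize}

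The main obstacle is step (v), specifically the witness conversion. If the reduction of \cref{thm:qma-to-hamiltonian} builds a history/Feynman–Kitaev state, preparing $\rho$ from the original witness requires running the QMA verifier $C$, which has polynomial depth. I would resolve this by treating the converted witness $\rho$ itself as Alice's input. The claim is about Alice's operation on ``the quantum witness, her halves of EPR pairs, and her question'' (as in the Techniques section), and the honest prover can prepare $\rho$ offline before the protocol starts, outside the delegated computation. If the reduction's witness must be produced inside the delegation, I would fall back on the observation that the reduction can be arranged so that the clock encoding is log-depth (for example, a unary clock prepared in parallel). I would state this explicitly as the modelling assumption.

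Adding up the branches, the overall depth is $O(\log n) = O(\log |x|)$, which is within $\poly(\log |x|)$. This proves the claim.
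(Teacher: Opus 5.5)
Your proposal is correct and takes essentially the same route as the paper: a question-by-question check that honest Alice only performs Pauli measurements followed by an $O(\log n)$-depth parity (XOR) computation in the commutation and magic-square tests, and a constant-depth Bell measurement in the Hamiltonian test, with $\log n = O(\log |x|)$. You are more careful than the paper on two points. First, overlapping supports of $a$ and $b$ (and the third row/column of the magic square) force coherent parity circuits rather than independent per-qubit measurements. Second, your primary resolution of the witness-conversion issue is the right one, since the compiler delegates only the measurement $A_G$ applied to the already-prepared state $\ket{\psi_G}$. Your ``log-depth clock'' fallback is therefore unnecessary, and it would not work anyway, since preparing a history state requires running $C$.
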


\begin{proof}
Recall that the parameter $n$ in \cite{metger2024succinct} is such that $n := \poly(|x|)$. We will discuss the dependence in terms of $n$. We can obtain the claim by looking into the subprotocols of ~\ref{prot:main-2-prover}: the Pauli braiding test, mixed-versus-pure basis test, and Hamiltonian test. In particular, we look at Step 1 of ~\ref{prot:comgame}, Step 2 of ~\ref{prot:acomgame} and Step 1 of ~\ref{prot:hamiltonian}. The only operations Alice performs are measurements in the Pauli basis on an $O(n)$-qubit state, with each qubit measured \emph{in parallel} in the subtests ~\ref{prot:comgame} and ~\ref{prot:acomgame}, then returning an XOR of the $n$ measurement results. This can be done using a circuit of depth $\log(n) < \log^2(|x|)$. In ~\ref{prot:hamiltonian}, she performs a measurement
  in the Bell basis on an $O(n)$-qubit state (with each 2-qubit pair measured in parallel), and thus only needs a $O(1)$ circuit depth. %
\end{proof}

\begin{remark}
    Note that 
the communication size remains polynomially dependent on $n$, as we can see from the size of the challenger's messages in ~\ref{prot:pauli-braiding}, ~\ref{prot:mixed-vs-pure}, and ~\ref{prot:hamiltonian}. This is later made succinct using our communication compression compiler.
\end{remark}

\paragraph{Notations and modelling of the adversary.} We use a few notations from \cite{metger2024succinct} and \cite{bartusek2025power}: %

\begin{itemize}
    \item Let $\ket{\psi}$ be the initial internal state of the prover over the system $\cH_{\cA} \otimes \cH_{\cB}$.
    For simplcity, we denote $\psi$ for $\ket{\psi}\bra{\psi}$.
    
    \item Let $A_a^x$ be Alice's measurement operator corresponding to question $x$ and response $a$. We can write $A_{a}^x = \ket{a}\bra{a}U^x$, where $U^x$ is an arbitrary unitary acting on 
    $\cH_{A} \otimes \cH_{B}$. Note that the collection of operators $\{(A^x_a)^\dagger (A^x_a)\}_a$ forms a projective measurement.  Thus, the probability that Alice returns outcome $a$ in response to question $x$ is
    \[ \Pr[a] = \bra{\psi} (A^x_a)^\dagger (A^x_a) \ket{\psi}. \]

     \item The \emph{un-normalized post-measurement state} after receiving question $x$ and responding with answer $a$ is
    \begin{equation} \ket{\psi^{x}_{a}} = A^x_a \ket{\psi}. \label{eq:def-post-meas-states-chsh} \end{equation}
    Note that $\| \ket{\psi^x_a}\|^2 = \Pr[a]$.
    The post-measurement state marginalizing over outcomes for question $x$ is the mixed state
    \[ \psi^x = \sum_a \psi^x_a = \sum_a\ket{\psi^x_a}\bra{\psi^x_a}. \]
    \item The second measurement (the ``Bob measurement") is modeled by a POVM $\{B^y_b\}_y$ for each question $y$. While this measurement can always be taken to be projective without loss of generality, in the analysis it will sometimes be convenient to construct strategies where this measurement is a POVM.
\end{itemize}

Before going into the modified theorems/proofs for \cite{metger2024succinct}, we first give a simple lemma important for the proofs in \cite{metger2024succinct}, adapted to our setting. In \cite{metger2024succinct}, this follows from the security of the QFHE. In our scenario, it follows from the blindness property of the blind delegation.

\begin{lemma}
\label{lem:ind_alice_state}
For any two Alice questions $x_1,x_2 \in \cQ_A$,
the post-measurement states
$\psi^{x_1}$ and $\psi^{x_2}$ are computationally indistinguishable. 

\end{lemma}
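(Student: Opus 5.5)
The plan is a direct reduction to the blindness property of the classical-verifier blind delegation protocol (\cref{def:blind-computation}), which the generalized KLVY compiler (\cref{def:generalized_KLVY-compiler}) uses to run Alice's operation. The point is that in the compiled protocol, ``Alice's post-measurement state'' $\psi^{x}$ is precisely the prover's residual internal state after the delegation phase, with the classical outcome $a$ marginalized out. Before Bob's question is sent, the prover only ever sees the delegation transcript, and this transcript is generated by the client from the question $x$. So $\psi^{x_1}$ and $\psi^{x_2}$ are the server's views after a blind delegation of the fixed circuit (Alice's operation) on two different classical inputs $x_1, x_2$. Blindness states exactly that these views are computationally indistinguishable.

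Concretely, I would proceed in three steps.

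\begin{enumerate}
\item I would fix the modelling. The prover's state after the delegation phase, conditioned on question $x$, is the state obtained by running the adversarial server $\cA$, with initial state $\ket{\psi}$, against the delegation client on input $x$. The client's private decryption information and its decoded output $a$ are not part of $\psi^x$; summing over $a$ as in the definition $\psi^x = \sum_a \psi^x_a$ corresponds to tracing out the client's side.

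\item Suppose a QPT distinguisher $\cD$ tells $\psi^{x_1}$ apart from $\psi^{x_2}$ with non-negligible advantage. I would build an adversary $\cB$ against blindness. $\cB$ submits the challenge inputs $x_1, x_2$ and then plays the server role by running $\cA$ on the auxiliary state $\ket{\psi}$, which is allowed as non-uniform quantum advice. When the interaction ends, $\cB$ runs $\cD$ on $\cA$'s residual state and outputs $\cD$'s bit. Then $\cB$'s advantage equals $\cD$'s advantage, which contradicts blindness.

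\item I would check that $\cB$ is QPT. Because the questions in $\cQ_A$ have polynomial size, and by \cref{claim:alice_low_circuit_depth} the delegated circuit has polynomial depth, the delegation runs in polynomially many rounds. Hence $\cB$ is efficient.
\end{enumerate}

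The main obstacle is definitional, not technical. I need to make sure the blindness notion in \cref{def:blind-computation} applies to the server's \emph{entire} residual quantum state at the point where Bob's question is issued, and holds against malicious servers with quantum advice, rather than covering only the classical transcript. I also need to make sure the client sends nothing that depends on $x$ after the delegation ends and before Bob's question, such as revealing the decoded answer. If blindness is stated only for the transcript, I would strengthen the argument by noting that the server's state is an efficient function of its advice and the transcript. That makes the two statements equivalent, because $\cB$ can regenerate the state.
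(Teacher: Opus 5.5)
Your proposal is correct and is essentially the paper's argument: the prover, followed by the distinguisher applied to its residual state, is a QPT adversary against the security of the blind delegation protocol (\cref{def:blind-computation}). Your definitional worry is already settled, since that definition lets the adversary's output bit be an arbitrary QPT operation on its entire post-interaction state. One minor point: Step 3 does not need \cref{claim:alice_low_circuit_depth}, which concerns round complexity; the reduction is efficient simply because the prover and distinguisher are QPT and the challenger runs the client.
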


\begin{proof}
This follows directly from the security of the blind delegation protocol defined in ~\ref{def:blind-computation}. By the end of the interaction between the verifier and the ``Alice'' stage of the prover,  no QPT distinguisher (which is the prover itself plus outputing a bit at the end to indicate which input it thinks it receives) can distinguish between any two verifier (client)'s inputs (questions). This means that the prover state $\psi^x$ at this point must satisfy the property that $\psi^{x_1} \approx_{\negl} \psi^{x_2}$ for any $x_1, x_2$, otherwise it could break the security of blind delegation.
\end{proof}





\paragraph{Measuring closeness of strategies.}
In the analysis of the compiled non-local game in both our setting and that of \cite{metger2024succinct}, it happens that if we replace  Bob's measurements $\{B^y_b\}$ with new measurements $\{C^y_b\}$ that are close in the appropriate distance metric, then the winning probability of the strategy is approximately preserved. Specifically, we will often measure closeness in terms of the state-dependent norm on the post-measurement state, after Alice's measurement has been applied:
\[ \E_{x, y \sim D_G}  \sum_{a, y} \| B^y_b - C^y_b \|_{\psi^{x}_a}^2 \leq \eps, \]
where $D_G$ is the distribution over questions sampled in the game.


\begin{lemma}[Lemma 5.3 from \cite{metger2024succinct}]
  \label{lem:close_strats}
    Let $G$ be a nonlocal game, $\ket{\psi}$ a state, $\{A^x_a\}$ be a collection of compiled Alice measurements, and $\{B^y_b\}$ and $\{C^y_b\}$  be two collections of Bob measurements such that $B^y_b$ is projective, and 
    \[ \E_{x,y \sim D_G}  \sum_\alpha \sum_b  \| B^y_b - C^y_b \|_{\psi^{x}_a}^2 \leq \eps,\]
    where $D_G$ is the distribution over question pairs in the game.
    Then the success probabilities of the strategy using $(\ket{\psi}, \{A^x_a\}, \{B^y_b\})$ and the strategy using $(\ket{\psi}, \{A^c_\alpha\}, \{C^y_b\})$ are $O(\sqrt{\eps})$-close. 
\end{lemma}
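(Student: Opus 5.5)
The plan is to compare the winning probabilities term by term. Both strategies share the state $\ket{\psi}$ and Alice's compiled measurements $\{A^x_a\}$, so the only difference is Bob's measurement. For each question pair $(x,y)$ the winning probability can be written as
\[
\sum_{a,b} V(x,y,a,b)\,\Tr\bigl[(B^y_b)^\dagger B^y_b\, \psi^x_a\bigr]
\quad\text{vs.}\quad
\sum_{a,b} V(x,y,a,b)\,\Tr\bigl[(C^y_b)^\dagger C^y_b\, \psi^x_a\bigr]
\]
(for a POVM $C$ one uses $\Tr[C^y_b\psi^x_a]$ directly). Here $V\in\{0,1\}$ is the predicate and $\psi^x_a$ is the un-normalized post-measurement state. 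It therefore suffices to bound
\[
\E_{x,y}\sum_{a,b}\bigl|\Tr[B^y_b\psi^x_a]-\Tr[C^y_b\psi^x_a]\bigr|,
\]
which upper bounds the difference in winning probability, since $V$ only selects a subset of the terms. When $C$ is a POVM, I will first replace $C^y_b$ by a Naimark/square-root form, or simply work with $\Tr[C^y_b\psi]$ and use $B^y_b=(B^y_b)^\dagger B^y_b$ from projectivity.

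The key algebraic step is
\[
\Tr[B\psi]-\Tr[C\psi]=\Tr[(B-C)\psi].
\]
For projective $B$ I write this as $\Tr[B(B-C)\psi]+\Tr[(\mathsf{Id}-B)(B-C)\psi]$, or more cleanly bound $|\Tr[(B-C)\psi]|=|\langle \mathsf{Id}, B-C\rangle_\psi|$. I then apply Cauchy--Schwarz for the state-dependent inner product (Definition~\ref{def:state_dep_inner_product}) with a well-chosen left factor. Using $\Tr[(B-C)\psi]=\langle B, B-C\rangle_\psi+\langle \mathsf{Id}-B, B-C\rangle_\psi$ would not immediately help; instead I pair each term against $B^y_b$ and sum over $b$. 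Concretely, $\sum_b|\langle B^y_b,B^y_b-C^y_b\rangle_{\psi^x_a}|\le \bigl(\sum_b\|B^y_b\|^2_{\psi^x_a}\bigr)^{1/2}\bigl(\sum_b\|B^y_b-C^y_b\|^2_{\psi^x_a}\bigr)^{1/2}$, and the first factor is at most $\Tr[\psi^x_a]^{1/2}$ because the $B^y_b$ form a projective measurement. A symmetric term handles the remainder $\Tr[(\mathsf{Id}-B^y_b)C^y_b\psi]$-type pieces, pairing against $C^y_b$ and using $\sum_b C^y_b\le \mathsf{Id}$ so that those norms are also controlled by $\Tr[\psi^x_a]$.

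Finally, I sum over $a$ and average over $(x,y)$. A second Cauchy--Schwarz (Jensen) over the joint index $(x,y,a)$ gives total error at most $\bigl(\E_{x,y}\sum_a\Tr\psi^x_a\bigr)^{1/2}\bigl(\E_{x,y}\sum_{a,b}\|B^y_b-C^y_b\|^2_{\psi^x_a}\bigr)^{1/2}\le \sqrt{\eps}$, since $\sum_a\Tr\psi^x_a=1$. Up to constant factors from the two pieces this gives $O(\sqrt{\eps})$. The main obstacle is the decomposition step: writing $\Tr[(B-C)\psi]$ as a sum of inner products whose ``other'' factors have squared norms summing (over $b$) to at most $\Tr\psi$, which requires care when $C$ is only a POVM and not projective. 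My first attempt there will be the identity $B-C = B(B-C)+(B-C)B - (B-C)B$-style splitting, using $B^2=B$, and falling back on Lemma~\ref{lem:state_dep_norm_props}(\ref{item:submult_infty}) with $\|B\|_\infty,\|C\|_\infty\le1$.
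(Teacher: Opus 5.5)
The paper imports this lemma from \cite{metger2024succinct} without proof, so the question is whether your argument closes on its own. Your outer structure is right, and it works verbatim when $C$ is projective: reduce to $\sum_b|\Tr[(B^y_b-C^y_b)\psi^x_a]|$, bound this by $O(\sqrt{\Tr\psi^x_a})$ times $(\sum_b\|B^y_b-C^y_b\|^2_{\psi^x_a})^{1/2}$, and finish with Cauchy--Schwarz over $(x,y,a)$. But the step you defer as the ``main obstacle'' is a genuine gap when $C$ is only a POVM, and the paper explicitly allows Bob's measurement to be one.

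Write $\Delta_b=B_b-C_b$ and $\psi=\psi^x_a$. Your two pairings give exactly
\[
\Tr[\Delta_b\psi]=\langle B_b,\Delta_b\rangle_\psi+\langle \Delta_b,C_b\rangle_\psi-\Tr[(C_b-C_b^2)\psi].
\]
The last term is nonnegative and vanishes when $C_b$ is projective, but it is not an inner product against $\Delta_b$. It also cannot be controlled outcome by outcome. If $B_b\psi=0$ and $C_b=c\,\mathsf{Id}$, then $|\Tr[\Delta_b\psi]|=\|\Delta_b\|_\psi=c$. So any bound $|\Tr[\Delta_b\psi]|\le\|X_b\|_\psi\|\Delta_b\|_\psi$ needs $\|X_b\|_\psi\ge 1$ for an outcome carrying mass only $c$. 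With many such outcomes, $\sum_b\|X_b\|^2_\psi$ grows with the number $k$ of Bob's answers, and the final Cauchy--Schwarz only certifies $\sqrt{k\varepsilon}$; here $k$ can be $2^n$.

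Your operator-norm fallback has exactly this defect, since it amounts to pairing against $\mathsf{Id}-B_b$, whose squared $\psi$-norms sum to $k-1$. Naimark or square-root rewritings do not help either, because the hypothesis concerns $C_b$ itself. Separately, the identity you propose as a first attempt is false: $B(B-C)+(B-C)B-(B-C)B=B-BC$.

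The missing idea is to use normalization globally rather than termwise. Since $C_b^2\le C_b$, $\sum_b C_b\le\mathsf{Id}$, and $\sum_b B_b=\mathsf{Id}$ with each $B_b$ a projector,
\[
\sum_b\Tr[(C_b-C_b^2)\psi]\le \Tr\psi-\sum_b\|C_b\|^2_\psi=\sum_b\left(\|B_b\|^2_\psi-\|C_b\|^2_\psi\right)\le\sum_b\|\Delta_b\|_\psi\left(\|B_b\|_\psi+\|C_b\|_\psi\right)\le 2\sqrt{\Tr\psi}\,\Big(\sum_b\|\Delta_b\|^2_\psi\Big)^{1/2}.
\]
Combined with your two pairings, this gives $\sum_b|\Tr[\Delta_b\psi]|\le 4\sqrt{\Tr\psi}\,(\sum_b\|\Delta_b\|^2_\psi)^{1/2}$. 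Your final Cauchy--Schwarz over $(x,y,a)$ then yields $4\sqrt{\varepsilon}$.

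An equivalent route is to bound the positive parts of $\Tr[\Delta_b\psi]$ by $\|B_b\|^2_\psi-\|C_b\|^2_\psi$, and then use $\sum_b\Tr[\Delta_b\psi]\ge 0$ to control the negative parts. This is precisely where projectivity and completeness of $B$ are essential. For two non-projective POVMs the $O(\sqrt{\varepsilon})$ bound can fail by a $\sqrt{k}$ factor: take $B_b=\mathsf{Id}/k$ and $C_b=(1\pm s)\mathsf{Id}/k$.
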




\subsection{Analysis of the Compiled Protocol}
We now show that the soundness of protocol ~\ref{prot:main-blind} holds (according to protocol ~\ref{def:CVQC}), by adapting the theorems in \cite{metger2024succinct}.

\paragraph{Compiled Pauli-braiding test.} We first show that we can rephrase the proofs in \cite{metger2024succinct} on the rigidity of the QFHE compiled Pauli-braiding test into that for the generalized KLVY compiled protocol.

We restate the  following lemma from \cite{metger2024succinct} for our setting.
\begin{lemma}\label{lem:pauli-braiding-klvy}
Suppose that a QPT prover $P$ as modeled in ~\ref{sec:qma_useful_lemma} wins with probability $1 - \eps$ in the generalized KLVY compiled version of protocol ~\ref{prot:pauli-braiding}.
Then the prover's observables satisfy that for any Alice question $q \in \cQ_A$:
\begin{align}
\E_{a, b \in \bits^n} \norm{Z(a) X(b) - (-1)^{a \cdot b} X(b) Z(a)}_{\psi^q}^2 \leq O(\eps) + \negl(\lambda) \,. \label{eqn:unif_commutation}
\end{align}
\end{lemma}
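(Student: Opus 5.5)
The plan is to follow the rigidity analysis of the compiled Pauli braiding test in \cite{metger2024succinct}. The only change is where computational independence of Alice's question comes from. In \cite{metger2024succinct} it is supplied by QFHE security; here it is supplied by \cref{lem:ind_alice_state}, i.e.\ blindness of the delegation. We first split the winning probability $1-\eps$ across the two subtests. The commutation test (Protocol~\ref{prot:comgame}) is chosen when $a\cdot b=0$, and the anticommutation test (Protocol~\ref{prot:acomgame}) when $a \cdot b = 1$; each happens with probability about $1/2$ over uniform $a,b$. By Markov, each subtest is lost with probability at most $O(\eps)$, averaged over its own $(a,b)$.

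\textbf{Consistency step.} For a fixed Alice question $(a,b)$ or $(\mathsf{MS},a,b,\cdot)$ with outcome $u$, Bob's answer on $W\in\{X,Z\}$ must match Alice's bit. Let $\psi^{(a,b)}_{u}$ be the corresponding post-measurement state. The standard identity $\sum_u \|W(a) - (-1)^{u_a}\1\|^2_{\psi^{(a,b)}_u} = 4\Pr[\text{lose}]$ then shows that $Z(a)$ and $X(b)$ act on this state almost as the scalars $(-1)^{u_a}$ and $(-1)^{u_b}$. Write $\psi^{(a,b)}=\sum_u \psi^{(a,b)}_u$.

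\emph{Commutation case.} Using $\psi^{(a,b)}$ we move $Z(a)$ and $X(b)$ past each other at cost $O(\eps)$, since both act as commuting scalars on each branch. This gives $\E_{a\cdot b=0}\|Z(a)X(b)-X(b)Z(a)\|^2_{\psi^{(a,b)}} \le O(\eps)$.

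\emph{Anticommutation case.} We use the magic square consistency argument of \cite{metger2024succinct}. Bob's single-cell observables must agree with Alice's row and column answers. The row and column product constraints then force $Z(a)X(b) \approx -X(b)Z(a)$ on the average of the Alice post-measurement states $\psi^{(\mathsf{MS},a,b,i_1,i_2,i_3)}$. I would reuse their chain of state-dependent-norm estimates verbatim, via \cref{lem:state_dep_norm_props} and \cref{lem:close_strats}. That argument uses only Bob's operators and the fact that Alice's answers are classical outcomes of a measurement on the shared state, so it transfers directly.

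\textbf{Main step: replacing the state.} So far each bound holds on the state $\psi^{x}$ for the question $x$ that Alice actually received, and that question depends on $(a,b)$. We need the bound under $\psi^q$ for an arbitrary fixed $q\in\cQ_A$. For each $(a,b)$, the quantity $\|Z(a)X(b)-(-1)^{a\cdot b}X(b)Z(a)\|^2_{\rho}$ is the acceptance probability of an efficient test on $\rho$. Concretely, this test applies Bob's operators coherently, or equivalently runs a swap/Hadamard-style estimator built from Bob's projective measurements, and has bounded operator norm. By \cref{lem:ind_alice_state}, $\psi^{x}$ and $\psi^q$ are computationally indistinguishable, so each term changes by at most $\negl(\secp)$.

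The averaging over $(a,b)$ needs care. I would build a single QPT distinguisher that itself samples $(a,b)$ and the corresponding question, then estimates the norm. This reduces to blindness with the question chosen by the reduction, which \cref{def:blind-computation} allows. Because the distinguisher samples the question inside the reduction, only one negligible loss is incurred rather than one per $(a,b)$. Summing the two cases yields \eqref{eqn:unif_commutation} with error $O(\eps)+\negl(\secp)$.

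The main obstacle is showing that the state-dependent norm is efficiently estimable, so that indistinguishability applies. $\|A\|_\rho^2=\Tr(A^\dagger A\rho)$ with $A = Z(a)X(b)\mp X(b)Z(a)$ expands into $2 \mp 2\Re\Tr(Z(a)X(b)Z(a)X(b)\rho)$. The last term can be estimated by a Hadamard test that uses controlled versions of Bob's (efficient) unitaries. I would first try writing it this way, relying on Bob's measurements being implementable as efficient unitaries, as in \cite{metger2024succinct}.
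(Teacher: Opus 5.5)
Your skeleton matches the paper's: split by $a\cdot b$, get commutation from consistency, get anticommutation from the magic-square analysis of \cite{metger2024succinct}, replace QFHE security by \cref{lem:ind_alice_state}, then average. Your commutation case is fine. On each branch $\psi^{(a,b)}_u$ both $Z(a)$ and $X(b)$ are close to signs, so the commutator is small on $\psi^{(a,b)}$ without any computational assumption. The expression $2\mp 2\Re\Tr(Z(a)X(b)Z(a)X(b)\rho)$, estimated by a Hadamard test, then makes the transport to $\psi^q$ a valid reduction to blindness.

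The gap is in the anticommutation case. You assert that the magic-square argument is information-theoretic (``uses only Bob's operators and the fact that Alice's answers are classical outcomes''). You then claim it yields $Z(a)X(b)\approx -X(b)Z(a)$ on the actual states $\psi^{(\mathsf{MS},a,b,i_1,i_2,i_3)}$, so that indistinguishability is needed only in your final state replacement. That step cannot work.

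\begin{itemize}
\item A single row or column question only tells you that the three Bob observables of that row or column are close to signs on that question's own post-measurement state.
\item Anticommutation of cells 2 and 4 follows only once relations from several different rows and columns hold on one common state.
\item An information-theoretic argument can only assume the post-Alice state depends arbitrarily on Alice's question. Then a prover that records $(\mathsf{MS},a,b,i_1,i_2,i_3)$ together with a valid classical answer wins with probability $1$. Its Bob observables are classical and mutually commuting, so $\|\{Z(a),X(b)\}\|^2_{\psi}=4\Tr\psi$.
\end{itemize}

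So \cref{lem:ind_alice_state} has to be used inside the magic-square derivation, to move each row/column relation to a common state before doing the algebra. This is what \cite[Lemma 6.4]{metger2024succinct} does, and it is why the paper states \cref{thm:comp_nonlocal_acomgame} on $\psi^{x}$ for an arbitrary Alice question $x$. Once you invoke that lemma, your final transport step is unnecessary for this half. The proof then becomes the paper's: per-$(a,b)$ bounds $O(\eps_{a,b})+\eta_{a,b}(\secp)$ on $\psi^q$, averaged over $(a,b)$.

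A minor point: \cref{def:blind-computation} fixes $x_0,x_1$ in advance, so it does not directly let your reduction sample the question. Your single distinguisher that samples $(a,b)$ itself is justified by a non-uniform worst-case choice of $(a,b)$ for each $\secp$. That same argument is what makes the paper's average $2^{-2n}\sum_{a,b}\eta_{a,b}$ negligible, so your version does not actually save a negligible term.
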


To see this, we will first note the following, which is a rephrase of Lemma 6.4 from \cite{metger2024succinct}.
\begin{lemma}
\label{thm:comp_nonlocal_acomgame}
 Suppose that a QPT prover $P$ as modeled in \cref{sec:qma_useful_lemma} succeeds with probability $1-\eps$ in the generalized KLVY compiled anti-commutation test (protocol ~\ref{prot:acomgame}) with inputs $a, b$, and let $B^{a}, B^{b}$ be Bob's observables corresponding to the questions $a, b$. That is, $B^{a} \coloneqq B^a_0 - B^a_1, B^b \coloneqq B^b_0 - B^b_1$. Then for any Alice question $x \in \cQ_A$, there exists a negligible function $\eta(\lambda)$ (depending on $P$ and on $x, a, b$) such that
    \[ \| \{ B^{a}, B^{b}\} \|_{\psi^{x}}^2 \leq O(\eps) + \eta(\lambda)\,.\]  
\end{lemma}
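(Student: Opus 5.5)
The plan is to transfer the standard anticommutation argument for the compiled Magic Square game from \cite{metger2024succinct} (their Lemma 6.4, which relies only on QFHE security) to our setting. The only cryptographic fact it needs is that Alice's post-measurement states are computationally indistinguishable across questions, and that is exactly \cref{lem:ind_alice_state}.

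\textbf{Step 1: an inequality for each cell.} First I will apply the usual Magic Square rigidity argument to the compiled strategy. Let $j$ be a cell and $(i_1,i_2,i_3)$ a row or column containing it. For Alice's question $q=(\mathsf{MS},a,b,i_1,i_2,i_3)$ with answer $(u_1,u_2,u_3)$, Bob's observable for cell $j$ is checked against the corresponding bit $u_k$. Because each pair $(j,\text{row/col})$ occurs with constant probability, losing probability $\eps$ gives the following bound for every such pair:
\[
\sum_{u}\|B^{j}-(-1)^{u_k}\|^2_{\psi^q_u}=O(\eps).
\]
Alice's three answers are classical bits whose product is fixed by the row or column constraint, and each is consistent with Bob's cell observable. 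Combining these consistency relations with the triangle inequality (\cref{lem:state_dep_norm_props}) yields, for each row or column $L=(i_1,i_2,i_3)$ and question $q_L$,
\[
\|B^{i_1}B^{i_2}B^{i_3}-s_L\,\1\|^2_{\psi^{q_L}}=O(\eps),
\]
where $s_L$ is the sign constraint for that line. To move Bob's operators one at a time next to the classical outcome, I will repeatedly use $\|AB\|_\psi\le\|A\|_\infty\|B\|_\psi$, which holds because Bob's observables are binary.

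\textbf{Step 2: moving all relations to one state $\psi^x$.} This step is the main obstacle. In the nonlocal setting every relation lives on the same state, but here the relation for line $L$ holds on $\psi^{q_L}$, and the six lines correspond to six different Alice questions. The quantity $\|B^{i_1}B^{i_2}B^{i_3}-s_L\|^2_{\psi}$ is the expectation of a measurement that the QPT prover can carry out itself, since Bob's operators are efficient. A standard swap-test or Hadamard-test style estimation of $\mathrm{Re}\Tr(\cdot\,\psi)$ can therefore be implemented by a QPT distinguisher acting on $\psi$ alone. By \cref{lem:ind_alice_state}, the value on $\psi^{q_L}$ then differs from the value on $\psi^x$, for any $x\in\cQ_A$, by at most a negligible $\eta(\lambda)$. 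The care needed here is to write each squared norm as $2-2\mathrm{Re}\Tr(s_L B^{i_1}B^{i_2}B^{i_3}\psi)$ and estimate this efficiently; if necessary I will expand it as a polynomial of bounded degree in Bob's projectors.

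\textbf{Step 3: algebraic conclusion.} With all six line relations holding on the single state $\psi^x$ up to $O(\eps)+\eta$, I will follow the standard Magic Square derivation. The cells $(1,2)$ and $(2,1)$ are where the relevant Bob observables $B^a$ (the $Z(a)$ cell) and $B^b$ (the $X(b)$ cell) sit. Chaining the row and column identities, and each time peeling off unitary factors with $\|UA\|_\psi=\|A\|_\psi$ and the squared triangle inequality, gives $B^aB^b\approx -B^bB^a$ in $\|\cdot\|_{\psi^x}$. Because the chain has a constant number of steps, the final error is $O(\eps)+\eta(\lambda)$, which is exactly $\|\{B^a,B^b\}\|^2_{\psi^x}\le O(\eps)+\eta(\lambda)$. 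One further point needs checking: the multiplications in the chain keep the state fixed as $\psi^x$ (they are left multiplications only), so no further state replacement is required.
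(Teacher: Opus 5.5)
Your Steps 1 and 2 are fine. Step 1 tacitly assumes the verifier checks the parity of Alice's three bits, which the written protocol omits but the game needs. Step 3, however, fails as stated. The six line relations on a single state $\psi^x$ do not imply anticommutation when only left multiplication is allowed, and adding within-line commutation does not change this. Here is a counterexample. Take orthonormal $e_0,\dots,e_9$. For each cell $k$, let $B_k$ do three things: swap $e_0\leftrightarrow e_k$; for each of the two lines $L=\{i,j,k\}$ through $k$, map $e_i\mapsto s_L e_j$ and $e_j\mapsto s_L e_i$; and fix the remaining four basis vectors. Each $B_k$ is a real symmetric signed permutation squaring to the identity, so it is a binary observable. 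For every line, every ordering of $B_iB_jB_k$ sends $e_0$ to $s_Le_0$, and $B_iB_je_0=B_jB_ie_0=s_Le_k$. So all the relations you transfer in Step 2 hold exactly on $e_0$. Yet cells $2$ and $4$ share no line, so $B_2B_4e_0=e_4$ and $B_4B_2e_0=e_2$, giving $\|\{B_2,B_4\}\|^2_{e_0}=2$.

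The cause is that the Magic Square chain cannot be done with left multiplications only. It substitutes cell observables in non-rightmost positions, for example $B_7\approx -B_1B_4$ inside $B_7B_8$ and $B_3\approx B_1B_2$ inside $B_3B_6$. It then has to cancel $B_5$ from the right of $B_4B_2B_5\approx -B_2B_4B_5$. Since $\|XB_j\|_\psi=\|X\|_{B_j\psi B_j}$, each such step needs the relations on a conjugated state, not on $\psi^x$ itself. In a genuine two-prover game this comes for free, because consistency lets you move Bob's operators onto Alice's tensor factor. In the compiled setting there is no such factor, and supplying this step is the compiled-specific work behind the paper's one-line appeal to \cite{metger2024succinct}. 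One way to close the gap is as follows. For a cell $j$ and a line $L\ni j$, the consistency bound says $B_j$ acts as the scalar $(-1)^{u_k}$ on each branch $\psi^{q_L}_u$, up to error $O(\varepsilon)$ summed over branches. Hence, for any efficiently implementable bounded-degree polynomial $X$ in Bob's observables, $\|XB_j\|^2_{\psi^{q_L}}\le 2\|X\|^2_{\psi^{q_L}}+O(\varepsilon)\|X\|_\infty^2$. Both sides are efficiently estimable, so your own Step-2 transfer gives $\|XB_j\|^2_{\psi^x}\le 2\|X\|^2_{\psi^x}+O(\varepsilon)+\mathrm{negl}(\lambda)$ for every Alice question $x$. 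With this right-multiplication rule, each of the constantly many substitutions costs $O(\varepsilon)+\mathrm{negl}(\lambda)$. The final cancellation of $B_5$ uses the rule with $X=\{B_2,B_4\}B_5$. With that addition your chain goes through.
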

\begin{proof}
    This follows via essentially the same proof as of \cite{metger2024succinct} Lemma 6.4. In our case, Bob's observable takes as input the full $n$-bit strings $a,b$ instead of the sampling randomness $r_a, r_b$ used in \cite{metger2024succinct}, though this change has no affect on the proof.
\end{proof}



Next, we note the following, which is a rephrase of Lemma 6.5 from \cite{metger2024succinct}.

\begin{lemma}
\label{lem:com-test}
    Suppose that a QPT prover $P$ as modeled in \cref{sec:qma_useful_lemma} succeeds with probability $1-\eps$ in the generalized KLVY compiled commutation test (protocol ~\ref{prot:comgame}) with inputs $a, b$, and let $B^{a}, B^{b}$ be Bob's observables corresponding to the questions $a, b$. Then for any Alice question $x \in \cQ_A$, there exists a negligible function $\eta(\lambda)$ (depending on $P$ and on $x, a, b$) such that
    \begin{align*}
        \Vert [ B^{a}, B^{b}] \|_{\psi^{x}}^2 \leq O(\eps) + \eta(\lambda)\,.
    \end{align*}
\end{lemma}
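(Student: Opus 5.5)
The plan mirrors the proof of \cref{thm:comp_nonlocal_acomgame}, which in turn follows \cite[Lemma 6.5]{metger2024succinct}, with blindness of the delegation used in place of QFHE security. Fix the questions $a,b$ with $a\cdot b=0$. In the compiled commutation test the verifier sends $(a,b)$ under blind delegation, Alice's answers are $(u_a,u_b)$, and Bob receives $W\in\{X,Z\}$. Bob's $Z$ answer yields the observable $B^a=Z(a)$ and his $X$ answer yields $B^b=X(b)$. For $a\cdot b=0$ these correspond, on the honest side, to commuting Paulis.

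\textbf{Step 1 (consistency).} Winning with probability $1-\eps$ means that, on the post-measurement states $\psi^{(a,b)}_{u_a,u_b}$, the observable $B^a$ agrees with the sign $(-1)^{u_a}$ and $B^b$ agrees with $(-1)^{u_b}$. In state-dependent norm this reads
\[\sum_{u_a,u_b}\norm{B^a-(-1)^{u_a}}^2_{\psi^{(a,b)}_{u_a,u_b}}=O(\eps),\qquad \sum_{u_a,u_b}\norm{B^b-(-1)^{u_b}}^2_{\psi^{(a,b)}_{u_a,u_b}}=O(\eps).\]

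\textbf{Step 2 (commutator on the matched state).} Write $[B^a,B^b]$ as follows:
\[B^aB^b-B^bB^a=(B^a-(-1)^{u_a})B^b+(-1)^{u_a}(B^b-(-1)^{u_b})-B^b(B^a-(-1)^{u_a})-(-1)^{u_b}(B^b-(-1)^{u_b})\]
The second and fourth terms cancel only after reorganizing, so the clean route is the standard one. The last factor of each term acting on the state is small by Step 1. To handle the first term, $B^b$ is moved past using the approximate eigenstate property: replace $B^b\ket{\psi_{u}}$ by $(-1)^{u_b}\ket{\psi_u}$ up to $O(\eps)$. Using item~\ref{item:submult_infty} and item~\ref{item:triangle_ineq_sq} of \cref{lem:state_dep_norm_props}, together with the fact that the $\psi_{u}$ are orthogonal pieces summing linearly (item~\ref{item:linearity}), we obtain
\[\norm{[B^a,B^b]}^2_{\psi^{(a,b)}}=O(\eps).\]

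\textbf{Step 3 (transfer to arbitrary Alice question).} The quantity $\norm{[B^a,B^b]}^2_{\psi}$ is the expectation of an efficiently measurable observable. Its operator norm is bounded, and it can be estimated by a QPT procedure that applies Bob's measurement circuits and a swap/Hadamard-test construction; here $a,b$ are hardwired as nonuniform advice. By \cref{lem:ind_alice_state}, $\psi^{(a,b)}\approx_c\psi^x$ for any $x\in\cQ_A$. Hence the value changes by at most a negligible $\eta(\secp)$, which depends on $P,x,a,b$, and this gives the claim.

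The main obstacle is Step 3. We need an efficient estimator whose acceptance probability is affine in $\norm{[B^a,B^b]}^2_\psi$, so that computational indistinguishability applies. I would use the identity $\norm{[B^a,B^b]}^2_\psi=2-2\Re\Tr(B^aB^bB^aB^b\psi)$, valid for binary observables. The real part can be estimated by a Hadamard test with a control qubit applying the unitary $B^aB^bB^aB^b$. The unitary can be implemented coherently from the prover's measurement circuits, which is possible because these observables are efficient.
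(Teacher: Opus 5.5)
Your proposal is correct and follows the same route as the paper, which simply invokes \cite[Lemma 6.5]{metger2024succinct}: consistency of Bob's $Z(a)$ and $X(b)$ with Alice's answers makes the commutator small on the matched state $\psi^{(a,b)}$, and the efficiently estimable quantity $\norm{[B^a,B^b]}^2_\psi = 2-2\Re\Tr(B^aB^bB^aB^b\psi)$ is then carried over to any $\psi^x$ via \cref{lem:ind_alice_state}. One small fix: the identity displayed in your Step 2 does not sum to the commutator; use instead $[B^a,B^b]=B^a(B^b-(-1)^{u_b})+(-1)^{u_b}(B^a-(-1)^{u_a})-B^b(B^a-(-1)^{u_a})-(-1)^{u_a}(B^b-(-1)^{u_b})$, where each term ends in a factor that is small on $\psi^{(a,b)}_{u_a,u_b}$ and each leading factor has operator norm at most $1$.
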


\begin{proof}
    As in the above lemma, this follows directly from the original proof of \cite[Lemma 6.5]{metger2024succinct}.
\end{proof}

Now we can give the proof for \cref{lem:pauli-braiding-klvy}. 
\begin{proof}[Proof of \cref{lem:pauli-braiding-klvy}]
The proof mostly follows from Lemma 6.3 in \cite{metger2024succinct}. The only difference is that we do not have to invoke PRG security since we do not use the PRG for sampling the questions $a,b$ and instead send them directly.

We are given that $P$ succeeds with probability $1 - \eps$ in protocol ~\ref{prot:pauli-braiding}. This means that it holds that
\begin{align*}
    &\sum_{a,b \in \{0,1\}^n : a \cdot b = 0} \frac{1}{2^{2n}} \cdot (1 - \Pr[P\text{ passes commutation on }Z(a), X(b)])\nonumber \\
    &\qquad +  \sum_{a,b \in \{0,1\}^n : a \cdot b = 1} \frac{1}{2^{2n}} \cdot (1-\Pr[P\text{ passes anticommutation on }Z(a), X(b)]) \nonumber \\
    &\qquad \leq \eps.
\end{align*}
For any given $a,b$, suppose $a \cdot b = 0$, and let $1-\eps_{a,b} =\Pr[P\text{ passes commutation on }Z(a), X(b)]$. Then by \cref{lem:com-test}, it holds that
\[ \| [Z(a), X(b) ] \|_{\psi^{q}}^2 \leq O(\eps_{a,b}) + \eta_{a,b}(\lambda), \]
where $\eta_{a,b}(\cdot)$ is a negligible function.

Likewise, suppose $a \cdot b = 1$, and let $1-\eps_{a,b} =\Pr[P\text{ passes anticommutation on }Z(a), X(b)]$. Then by \cref{thm:comp_nonlocal_acomgame}, it holds that
\[ \| \{Z(a), X(b) \} \|_{\psi^{q}}^2 \leq O(\eps_{a,b}) + \eta_{a,b}(\lambda), \]
where $\eta_{a,b}(\cdot)$ is a negligible function.

Putting these relations together, we have
\[ \sum_{a,b \in \{0,1\}^n} \frac{1}{2^{2n}} \| Z(a) X(b) - (-1)^{a \cdot b} X(b)Z(a) \|_{\psi^{q}}^2 \leq O(\eps) + \eta(\lambda),\]
where \[\eta(\secp) \coloneqq \frac{1}{2^{2n}}\sum_{a,b \in \{0,1\}^n}\eta_{a,b}(\secp) = \negl(\secp).\] 


\end{proof}

\paragraph{Compiled mixed-vs-pure basis test.}
We next discuss the analysis of the generalized KLVY compiled protocol ~\ref{prot:mixed-vs-pure} and how we adapt the analysis from \cite{metger2024succinct} analysis to our case.
\begin{theorem} \label{thm:mixed_basis_final}
Suppose that a QPT prover $P$ wins with probability $1 - \eps$ in both protocol ~\ref{prot:pauli-braiding} and protocol ~\ref{prot:mixed-vs-pure}. Then there exists a Hilbert space $\cH' = \cH_{A} \otimes \cH_{B}$
(we can also represent the above Hilbert space as $ \C^{2^n} \otimes \cH_{\rm aux}$ where $n$ is the number of qubits an honest prover would use and $\cH_{\rm aux}$ is an arbitrarily polynomial size auxiliary register used by the malicious prover) and an isometry $V: \cH \to \cH'$ such that for every Alice question $q \in \cQ_A$
\begin{align*}
\E_{w \sim D} \E_{a \in \bits^n} \| O^w(a) - V^\dagger (\sigma_w(a) \otimes \1_{\rm aux}) V \|_{\psi^{q}}^2 \leq O(\sqrt\eps) + \negl(\lambda) \,.
\end{align*}
\end{theorem}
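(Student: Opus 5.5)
The approach follows \cite{metger2024succinct} (their rigidity theorem for the compiled mixed-versus-pure basis test), replacing each appeal to QFHE security with \cref{lem:ind_alice_state}. The argument has three stages.

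\textbf{Stage 1: Pauli-group isometry.} From the hypothesis that $P$ wins the compiled Pauli braiding test with probability $1-\eps$, \cref{lem:pauli-braiding-klvy} gives, for every $q \in \cQ_A$,
\[
\E_{a,b}\norm{Z(a)X(b) - (-1)^{a\cdot b}X(b)Z(a)}_{\psi^q}^2 \le O(\eps) + \negl(\lambda).
\]
Bob's pure-basis observables $Z(a)$ and $X(b)$ are defined from the projective measurements $\{P^W_u\}$. Consequently they are exactly linear in $a$ and $b$: $W(a)W(a') = W(a\oplus a')$. This linearity is what lets us apply the standard Pauli group stability argument (Gowers--Hatami / the \cite{natarajan2023bounding} version used in \cite{metger2024succinct}). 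That argument is run with respect to the state $\psi^q$ for a fixed $q$. It produces an isometry $V:\cH \to \C^{2^n}\otimes\cH_{\rm aux}$ with
\[
\E_a \norm{W(a) - V^\dagger(\sigma_W(a)\otimes \1)V}_{\psi^q}^2 \le O(\eps) + \negl(\lambda) \quad \text{for } W\in\{X,Z\}.
\]
\cref{lem:ind_alice_state} and \cref{lem:state_replace} let us swap $\psi^q$ for any other $\psi^{q'}$. For this we need a bounded-norm, efficiently estimable quantity, so that a gap of more than $\negl$ would yield a QPT distinguisher. Hence the same $V$ works for every Alice question up to $\negl$.

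\textbf{Stage 2: mixed versus pure consistency.} We use the $b=1$ branch of \cref{prot:mixed-vs-pure}. Alice's question $W$ is hidden inside the blind delegation, yet the test still forces $O^w_W(a)$, the restriction of the mixed measurement to positions with $w_i=W$, to agree with Bob's pure observable $W(a|_{w=W})$ on the state $\psi^{W}_u$. The tool is the standard consistency-to-closeness conversion in state-dependent norm, the one used in \cref{lem:close_strats}-style arguments. The $b=0$ branch is needed here as well. It certifies that Alice's answer $u$ matches Bob's pure measurement, and this lets us replace conditioning on Alice's outcome with Bob's own $P^W_u$, giving
\[
\E_{w,a}\norm{O^w_W(a) - W(a|_{w=W})}_{\psi^{q}}^2 \le O(\eps) + \negl(\lambda).
\]
This bound is first obtained for $q = W$ and then transferred to arbitrary $q$ by indistinguishability, as in Stage 1.

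\textbf{Stage 3: assembly.} We decompose $O^w(a) = O^w_X(a) \, O^w_Z(a)$; this holds because $M^w$ is projective and the $X$- and $Z$-positions are disjoint. Each factor is then replaced by its pure counterpart and conjugated by $V$ using Stage 1. The replacement costs a multiplicative error: the products incur cross terms that Cauchy--Schwarz bounds by $\sqrt{\eps}$, which explains the $O(\sqrt\eps)$ in the statement. We also need that $V^\dagger\sigma_X V$ and $V^\dagger \sigma_Z V$ approximately multiply to $V^\dagger \sigma_w(a) V$ on the relevant state. The main obstacle is Stage 2 together with this product step. Moving operators past each other under the state-dependent norm requires that each intermediate state remain close to some $\psi^{q}$. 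Every computational replacement must therefore be phrased as an efficiently testable quantity, so that \cref{lem:ind_alice_state} applies and the error stays additive-negligible.
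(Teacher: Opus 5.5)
Your proposal takes the same route as the paper. The paper defers to \cite{metger2024succinct} and notes that their only computational step, transferring bounds between Alice's post-measurement states $\psi^q$ for different questions (their Lemma 6.11, which is your Stage 2 transfer), is supplied by blindness via \cref{lem:ind_alice_state}; the rest is information-theoretic.

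Two small corrections. \cref{lem:state_replace} needs trace-norm closeness, so it cannot be paired with computational indistinguishability; the transfer has to go through an efficiently estimable quantity, as you go on to say. Also, Stage 1 needs no transfer at all, because \cref{lem:pauli-braiding-klvy} already holds for every $q \in \cQ_A$.
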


\cref{thm:mixed_basis_final} also implies that the prover's projective measurements $\{M^w_u\}$ in the mixed vs pure basis test must be close to the corresponding Pauli projectors, as shown in \cite{metger2024succinct}. We will omit the statement and proof of this corollary (Lemma 6.9 in \cite{metger2024succinct}) since it follows via minor notational changes: Replace $\psi^{\Enc(q)}$ with our post-Alice-measurement state $\psi^{q}$ and $\Dec(\alpha)$ with $a$ (Alice's answer in the clear) for both the statement and the proof. 

For the proof of \cref{thm:mixed_basis_final}, we can also follow the proof in \cite{metger2024succinct} except for the notational changes mentioned above. The reason is the following: the only computational property \cite{metger2024succinct} uses is the computational indistinguishability of Alice's post-measurement states under different questions (used in \cite{metger2024succinct} Lemma 6.11), which is also true for our Alice post-measurement states as stated in \cref{lem:ind_alice_state}. The rest of the analysis is information theoretic, which applies exactly the same to our setting and we omit it here.

\paragraph{Compiled Hamiltonian test.}
This part of the analysis for our protocol also follows directly from the proof of \cite[Lemma 6.22]{metger2024succinct}. We state the theorem here for the sake of a complete presentation.
Note that our protocol protocol ~\ref{prot:main-2-prover} (and thus protocol ~\ref{prot:hamiltonian}) has made the modification that the verifier does not use a ``subsampled'' Hamiltonian that requires only succinct-length randomness (as in \cite[Lemma 6.20]{metger2024succinct}) but rather ``full'' randomness as defined in \cref{thm:qma-to-hamiltonian}. This change does not result in changes in the proof of the following theorem (except for the notational changes same as above), since the Hamiltonian problem properties do not change. 

\begin{lemma}\label{lem:ham-test-analysis}
    Given an instance $x$, let $H(n)$ be the corresponding Hamiltonian generated in protocol ~\ref{prot:main-2-prover}, and let $P$ be a prover that succeeds in the compiled Pauli braiding and mixed-versus-pure tests (protocol ~\ref{prot:pauli-braiding} and protocol ~\ref{prot:mixed-vs-pure}) with probability $1-\delta$, and in the Hamiltonian test (protocol ~\ref{prot:hamiltonian}) with probability $p_{H}$. Then there exists a state $\rho$ such that
    \[ \Tr[\rho H(n)] \leq (1 - p_{H}) + O(\delta^{1/4}) + \negl(n).\]
\end{lemma}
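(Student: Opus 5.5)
The plan is to reduce the compiled Hamiltonian test to the two-prover Hamiltonian test played with an honest Pauli-measuring Bob, following \cite[Lemma 6.22]{metger2024succinct}. From the rigidity already established, \cref{thm:mixed_basis_final} and its corollary (the analogue of \cite[Lemma 6.9]{metger2024succinct}), there is an isometry $V:\cH\to\C^{2^n}\otimes\cH_{\rm aux}$ under which Bob's mixed-basis projectors $\{M^w_u\}$ are close to $V^\dagger(\pi^w_u\otimes\1_{\rm aux})V$. The closeness is measured in the state-dependent norm on $\psi^q$ and is averaged over $w\sim D$, with error $O(\sqrt\delta)+\negl$. The Hamiltonian test sends Bob $w\sim D$, which is exactly the distribution of the mixed-versus-pure test, so this error bound applies directly to Bob's answers in the Hamiltonian test.

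\textbf{Step 1.} Fix the Alice question $\mathsf{tele}$. By \cref{lem:ind_alice_state}, the post-measurement state $\psi^{\mathsf{tele}}$ is computationally indistinguishable from $\psi^q$ for the questions $q$ used in the braiding and mixed tests. Since the rigidity statements are stated for every $q\in\cQ_A$, they hold on the sub-normalized states $\psi^{\mathsf{tele}}_{u_x,u_z}$ summed over Alice's answers.

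\textbf{Step 2.} Replace each $M^w_u$ by $C^w_u:=V^\dagger(\pi^w_u\otimes\1)V$. By \cref{lem:close_strats}, Bob's success probability in the Hamiltonian test changes by at most $O(\sqrt{\sqrt\delta})=O(\delta^{1/4})$.

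\textbf{Step 3.} Analyze the modified strategy. Bob now measures the genuine Pauli observables $\sigma_w$ on the $n$-qubit system extracted by $V$. The verifier's correction $s_i=v_i\oplus(u_x)_i^{[w_i=Z]}\oplus(u_z)_i^{[w_i=X]}$ flips outcomes by a Pauli. Measuring $\pi^w_{v}$ and then flipping the outcome is the same as measuring $\pi^w_u$ on the conjugated state $\sigma_X(u_x)\sigma_Z(u_z)\,\cdot\,\sigma_Z(u_z)\sigma_X(u_x)$, up to sign conventions. Define
\[\rho:=\sum_{u_x,u_z}\Tr_{\rm aux}\Bigl[(\sigma_X(u_x)\sigma_Z(u_z)\otimes\1)\,V\psi^{\mathsf{tele}}_{u_x,u_z}V^\dagger\,(\sigma_Z(u_z)\sigma_X(u_x)\otimes\1)\Bigr],\]
which is a normalized $n$-qubit state. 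Then the acceptance probability of the modified strategy equals $\E_{w\sim D}\sum_{u\in Q(w)}\Tr[\pi^w_u\rho]=1-\Tr[H\rho]$. Combining with Step 2 gives $1-\Tr[H\rho]\ge p_H-O(\delta^{1/4})-\negl$, which rearranges to the claim. Note that $\rho$ need not be efficiently preparable, since only its existence is asserted.

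The main obstacle is Step 2. The corollary to \cref{thm:mixed_basis_final} controls $M^w_u$ against $\pi^w_u$ in the \emph{projector} (not observable) form, and only on average over $w$. I need the projector-level bound in a form summed over $u$, which is what \cref{lem:close_strats} requires. Converting the observable bound $\E_a\|O^w(a)-V^\dagger\sigma_w(a)V\|^2$ into this projector form loses no more than constants via Fourier expansion, using \cref{eq:pauli-proj} and Parseval. I would check this conversion carefully, together with the fact that \cref{lem:close_strats} applies with $\psi^{\mathsf{tele}}_{u_x,u_z}$ as the weighting states, and otherwise defer to the information-theoretic argument of \cite{metger2024succinct}.
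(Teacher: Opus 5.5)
Your proposal is correct and follows essentially the same route as the paper, which defers to the proof of \cite[Lemma 6.22]{metger2024succinct}. That route has three steps: projector-form rigidity of Bob's mixed-basis measurements on $\psi^{\mathsf{tele}}$; replacement of those measurements by $V^\dagger(\pi^w_u\otimes\1)V$ via \cref{lem:close_strats}, at cost $O(\delta^{1/4})$; and $\rho$ defined as the Pauli-corrected extracted state, which is independent of $w$, so the modified acceptance probability is exactly $1-\Tr[H\rho]$.

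Two small remarks. Your Step 1 is unnecessary, since \cref{thm:mixed_basis_final} is already stated for $q=\mathsf{tele}\in\cQ_A$. The conversion you flag as the main obstacle is an exact Parseval identity, $\sum_u\|M^w_u-V^\dagger(\pi^w_u\otimes\1)V\|^2_{\psi}=\E_a\|O^w(a)-V^\dagger(\sigma_w(a)\otimes\1)V\|^2_{\psi}$, so no constants are lost.
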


\paragraph{Putting it together.} Now, we put our arguments together to  conclude the following theorem for the compiled protocol under blind delegation. 


\begin{theorem}
\label{thm:round_efficient_classical_qma}
    Assuming OSP, there exists a constant $s$ such that protocol ~\ref{prot:main-blind} is a (non-succinct) classically-verifiable $(1-\negl(\secp),s)$ argument system for QMA (\cref{def:CVQC}). Moreover, the number of rounds in the protocol is bounded by $\poly(\secp)$ for some fixed polynomial $\poly$. %

    

\end{theorem}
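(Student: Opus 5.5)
The plan is to prove completeness, constant soundness, and the round bound separately. Each follows from results already in the paper together with the completeness and soundness of the \cite{metger2024succinct} game.

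\textbf{Completeness.} Honest Alice converts the QMA witness into a ground state $\rho$ of $H$ with $\Tr[\rho H]\le\alpha(n)$, shares $n+1$ EPR pairs with Bob, and follows the honest strategies written in Protocols~\ref{prot:pauli-braiding}, \ref{prot:mixed-vs-pure} and \ref{prot:hamiltonian}. The braiding and mixed-versus-pure tests then accept with probability $1$. The Hamiltonian test accepts with probability $1-\Tr[\rho H]\ge 1-\alpha(n)$, and since $\beta(n)-\alpha(n)=1-\negl(n)$ this is $1-\negl$. The correctness of the blind delegation protocol from OSP (\cref{def:blind-computation}) costs only a further $\negl(\secp)$. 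For this to give $1-\negl(\secp)$ I would choose $n\ge\secp$, or pad the instance, so that $\negl(n)$ is negligible in $\secp$.

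\textbf{Soundness.} Fix $x\in\cL_{\mathrm{no}}$ and a QPT prover $P$ with acceptance probability $1-\eps_0$ in Protocol~\ref{prot:main-blind}. Each of the three tests is run with probability $1/3$, so $P$ passes the compiled braiding and mixed-versus-pure tests with probability at least $1-3\eps_0$ each. Write $p_H$ for its success probability in the Hamiltonian test; then $p_H\ge 1-3\eps_0$. Apply \cref{lem:ham-test-analysis} with $\delta=3\eps_0$, which relies on \cref{lem:pauli-braiding-klvy}, \cref{thm:mixed_basis_final} and \cref{lem:ind_alice_state}. It yields a state $\rho$ with
\[\Tr[\rho H]\le 3\eps_0+O(\eps_0^{1/4})+\negl.\]
For a no-instance every state has $\Tr[\rho H]\ge\beta(n)\ge 1-\negl$. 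Hence $1-\negl\le O(\eps_0^{1/4})$, which forces $\eps_0\ge c$ for some absolute constant $c>0$. So $s:=1-c/2$ works. The main obstacle is checking that the hidden constants in \cref{lem:ham-test-analysis} and its supporting lemmas are absolute, i.e.\ independent of $n$ and $|x|$. This should hold because the \cite{metger2024succinct} rigidity bounds have the form $O(\eps^{1/4})$ with universal constants. Where the $\negl(\secp)$ terms in those lemmas depend on $P$, I would note that they are eventually smaller than $c/4$.

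\textbf{Round complexity.} The compiled protocol consists of one blind delegation of Alice's operation, then one classical round carrying Bob's question and answer. By \cite{bartusek2025power}, the blind delegation from OSP uses a number of rounds polynomial in $\secp$ and in the depth of the delegated circuit. By \cref{claim:alice_low_circuit_depth} that depth is $\poly(\log|x|)$, which is at most $\secp$ once $\secp\ge\log^{O(1)}|x|$ (a standard assumption, or else padding $\secp$). The total is therefore at most $\poly(\secp)$ rounds for a fixed polynomial, even though each message may be long.
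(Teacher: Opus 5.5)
Your proposal is correct and follows essentially the same route as the paper. Completeness comes from the correctness of the OSP-based blind delegation (\cref{thm:osp_imply_blind}) together with the honest \cite{metger2024succinct} strategy. Soundness comes from combining the compiled subtest lemmas, ending in \cref{lem:ham-test-analysis}, as in \cite{metger2024succinct}. The round bound comes from \cref{claim:blind_depend_depth} together with \cref{claim:alice_low_circuit_depth}. You in fact give more detail than the paper's sketch: the explicit $1-\Tr[\rho H]$ acceptance computation, the averaging step that yields $\delta = 3\varepsilon_0$ and a constant $s$, and an explicit treatment of how $|x|$, $n$ and $\lambda$ relate. The paper handles that last point by simply assuming $|x|$ is polynomially related to $\lambda$.
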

The OSP assumption is due to the use of blind delegation from OSP as stated in \cref{thm:osp_imply_blind}.

The completeness property follows from the correctness of the blind delegation (\cref{def:blind-computation}) as well as the completeness of the protocol from \cite{metger2024succinct}. The soundness proof follows that of \cite{metger2024succinct} by combining the above lemmas on each subtest. We do not need to make any changes and therefore omit the full proof here. We can conclude the round complexity by combining \cref{claim:blind_depend_depth} and \cref{claim:alice_low_circuit_depth}, as $|x|$ is always polynomially-related to the security parameter $\secp$. 


%% file: hamiltonian_prelims.tex


\subsubsection{Hamiltonian problems and amplification}
\label{sec:hamiltonian-subsamp}
We will provide definitions and theorems regarding the Hamiltonian problem used in \cref{prot:hamiltonian}.

\paragraph{Difference from \cite{metger2024succinct}:} The main difference (and simplification) of our case from \cite{metger2024succinct} is that we do not need the ``subsampled Hamiltonian'' sampled using a PRG, since we do not need the verifier to be question-succinct at this point. We therefore remove the need for \cite[Lemma 6.19]{metger2024succinct} and \cite[Lemma 6.20]{metger2024succinct} from this section.

\begin{definition}[Hamiltonian problem]
\label{def:hamiltonian-problem}
We refer to a tuple of the form $(H, \alpha, \beta)$, where $H$ is a Hermitian operator and $\alpha$ and $\beta$ are both real numbers, as a \emph{Hamiltonian problem}. We may refer to such a tuple where $H$ acts on $n$ qubits as an $n$-qubit Hamiltonian problem.
\end{definition}

\begin{definition}[Deciding a Hamiltonian problem]
Given a tuple of the form $(H, \alpha, \beta)$, where $H$ is a Hermitian operator and $\alpha$ and $\beta$ are both real numbers, we refer to the problem of deciding whether the ground energy of $H$ is $\leq \alpha$ (yes case) or $\geq \beta$ (no case) as the problem of \emph{deciding} $(H, \alpha, \beta)$.
\end{definition}

\begin{definition}[Family of Hamiltonians]
We will use the notation $\cH = \{\cH(n)\}_{n \in \mathbb{N}}$ to refer to a family of Hamiltonians, i.e., a collection of sets $\cH(n)$ (one for every value of $n \in \mathbb{N}$) such that the $n$th set $\cH(n)$ contains only $n$-qubit Hamiltonian problems. We assume that the length of the binary description of $(H, \alpha, \beta)$ for any $(H, \alpha, \beta) \in \cH(n)$ is $\poly(n)$.
\end{definition}

\begin{definition}[QMA-completeness of a family of Hamiltonians]
We say a family of Hamiltonians $\cH = \{\cH(n)\}_{n \in \mathbb{N}}$ is QMA-complete if, for every promise problem $\cL = (\cL_\mathrm{yes}, \cL_\mathrm{no})$ in $\mathsf{QMA}$, and every instance $x \in \{0,1\}^*$, the problem of deciding whether $x \in \cL_\mathrm{yes}$ or $x \in \cL_\mathrm{no}$ can be Karp reduced in polynomial time (in $|x|$, given as input $x$ and also the algorithm $C$ which characterizes the verifier for $\cL$) to the problem of deciding some element of $\cH(n)$ for $n = \poly(|x|)$.
\end{definition}

\begin{lemma}[\cite{metger2024succinct} Lemma 6.18]\label{thm:qma-to-hamiltonian}
Let $C$ be a $\QMA$ verification algorithm for a $\QMA$ promise problem $\cL = (\cL_\mathrm{yes}, \cL_\mathrm{no})$, and let $x$ be an input of length $n$. Then there is a Hamiltonian problem $(H, \alpha(n), \beta(n))$ that can be efficiently computed from the input $(x, C)$, such that if $x \in \cL_\mathrm{yes}$, then $H$ is a YES instance of the Hamiltonian problem, and if $x \in \cL_\mathrm{no}$, then $H$ is a NO instance, and $\beta(n) - \alpha(n) = 1 - \negl(n)$. Moreover, $H$ has the following form:
\begin{align*}
H = \1 - \underset{w \sim D}{\E}\sum_{a \in T(w)} \pi^w_a,
\end{align*}
where $D$ is a distribution over $\{\1,X,Z\}^{\poly(n)}$ which can be efficiently sampled, and $T(w) \subseteq \{0,1\}^{\poly(n)}$ is a set for which membership can be decided in time polynomial in $n$.
\end{lemma}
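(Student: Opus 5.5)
The plan is to obtain the statement in three steps: a standard circuit-to-Hamiltonian reduction into an XZ Hamiltonian, a reinterpretation of that Hamiltonian as a single-shot Pauli energy test with inverse-polynomial gap, and parallel amplification of the test by tensoring. The form $H = \1 - \E_{w \sim D}\sum_{a \in T(w)} \pi^w_a$ says exactly that $\1 - H$ is the acceptance operator of a test. The test samples a Pauli string $w$, measures every qubit in basis $w_i$, and accepts iff the outcome lies in $T(w)$. So it suffices to build such a test whose maximum acceptance probability is $1-\negl(n)$ on yes instances and $\negl(n)$ on no instances. We then set $\alpha(n) = \negl(n)$ and $\beta(n) = 1 - \negl(n)$.

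\textbf{Step 1 (base Hamiltonian).} Amplify $C$ so that its completeness and soundness errors are $2^{-\poly(n)}$; Marriott--Watrous in-place amplification keeps the witness length fixed. Apply the Feynman--Kitaev construction with a universal gate set, and then the Biamonte--Love reduction, to get an $n_0$-qubit XZ Hamiltonian $H_0 = \sum_\ell d_\ell S_\ell$. Each $S_\ell \in \{\1, X, Z\}^{n_0}$ is a Pauli string and $n_0 = \poly(|x|)$. After shifting and rescaling, the ground energy of $H_0$ is at most $a$ in the yes case and at least $b$ in the no case, with $b - a \geq 1/\poly(n)$.

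\textbf{Step 2 (single-shot test).} Let $Z_0 = \sum_\ell |d_\ell|$. The test samples $\ell$ with probability $|d_\ell|/Z_0$ and sets $w = S_\ell$. It measures the qubits in the bases given by $w$, obtaining outcome bits $u$. It accepts iff $(-1)^{\bigoplus_{i: w_i \neq \1} u_i} = -\mathrm{sgn}(d_\ell)$, so the sign of $d_\ell$ is absorbed into which parity is accepted. The acceptance operator is then $\tfrac12\1 - H_0/(2Z_0)$. This yields acceptance thresholds $q_{\mathrm{yes}}$ and $q_{\mathrm{no}}$ with $q_{\mathrm{yes}} - q_{\mathrm{no}} \geq 1/\poly(n)$. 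Both the sampling and the membership check are polynomial time.

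\textbf{Step 3 (amplification).} Take $k = \poly(n)$ blocks of $n_0$ qubits each, so $n = k n_0$. The distribution $D$ samples $k$ independent strings $w^{(1)}, \dots, w^{(k)}$ as in Step 2 and concatenates them. $T(w)$ is the set of outcomes for which the number of accepting blocks is at least $k(q_{\mathrm{yes}} + q_{\mathrm{no}})/2$; membership is clearly decidable in polynomial time.
\begin{itemize}
\item \emph{Completeness.} Use $\rho^{\otimes k}$, where $\rho$ is a low-energy state of $H_0$. The block outcomes are then independent, and a Chernoff bound gives acceptance $1 - 2^{-\Omega(k/\poly)}$.
\item \emph{Soundness.} This must hold against witnesses entangled across blocks, and I expect it to be the main obstacle. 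Let $P_j$ denote the single-block acceptance operator acting on block $j$. The $P_j$ act on disjoint registers, so they commute and can be diagonalized in a common product eigenbasis. The acceptance operator of the threshold test is $\sum_{S \,:\, |S| \geq \text{threshold}} \bigotimes_{j \in S} P_j \otimes \bigotimes_{j \notin S} (\1 - P_j)$. On a joint eigenvector with eigenvalues $p_1, \dots, p_k \leq q_{\mathrm{no}}$, its eigenvalue equals the probability that independent Bernoulli$(p_j)$ variables reach the threshold. A Chernoff bound makes this $2^{-\Omega(k/\poly)}$. Hence the operator norm, and so the acceptance probability of every witness, is negligible.
\end{itemize}

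With $k$ a large enough polynomial, $\1 - H$ is the operator above. The ground energy of $H$ is therefore $\leq \negl(n)$ for yes instances and $\geq 1 - \negl(n)$ for no instances, giving $\beta(n) - \alpha(n) = 1 - \negl(n)$.
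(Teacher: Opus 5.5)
Your proposal is correct. It follows the standard route behind this lemma, which the paper does not reprove but imports from \cite{metger2024succinct}: QMA-completeness of XZ Hamiltonians, a one-term Pauli energy test with inverse-polynomial gap, and threshold parallel repetition, where soundness against witnesses entangled across blocks comes from the commuting, simultaneously diagonalizable per-block acceptance operators. One detail to make explicit: first merge repeated Pauli strings in $H_0$, so that $\mathrm{sgn}(d_\ell)$, and hence $T(w)$, is a function of $w$ alone, as the required form $\sum_{a \in T(w)} \pi^w_a$ demands.
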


%% file: blind_delegation_nonlocal.tex
\subsubsection{Generalized KLVY compiler}
\label{sec:blind_delegation}

In this section, we state a few definitions and theorems from
\cite{bartusek2025power} regarding a compiler that utilizes classical-client blind delegation of quantum computation to compile (classical-verifier) two-prover non-local games to single-prover protocols.


The seminal work of \cite{kalai2023quantum} presented a method for converting two-prover protocols into single-prover protocols, using quantum FHE. They showed quantum correctness and classical soundness of the compiled protocol, as a new approach for proof of quantumness. A sequence of follow-ups, including \cite{natarajan2023bounding,metger2024succinct}, have demonstrated the quantum soundness of the compiler in various settings. \cite{bartusek2025power} further showed that the quantum FHE used in this compiler can in fact be replaced with blind delegation of quantum computation. 

\paragraph{Blind Classical Delegation of Quantum Computation.}
We first give a very general definition for blind delegation of quantum computation with a classical client. It allows the client to delegate the computation of an arbitrary publicly-known quantum operation that takes a quantum input (provided by the server, and potentially entangled with an auxiliary system held by the server) and a private classical input (chosen by the client). After interaction, the server is able to obtain the (potentially quantum) output up to a one-time pad with keys known to the client.


\begin{definition}[Blind Classical Delegation of Quantum Computation]\label{def:blind-computation}
Let $\cH_V,\cH_W$ be Hilbert spaces of arbitrary dimension, and let $Q: \{0,1\}^* \times \cH_V \to \cH_W$ be a polynomial-size quantum circuit that takes as input a classical string $x$ and a state on register $V$, and outputs a state on register $W$. A protocol for blind classical delegation of quantum computation consists of an interaction 

\[(W,(r,s)) \gets \langle S(1^\secp,Q,V),C(1^\secp,Q,x)\rangle\]

between 

\begin{itemize}
    \item a QPT server $S(1^\secp,Q,V)$ with input the security parameter $1^\secp$, circuit $Q$, and state on register $V$, and
    \item a PPT client $C(1^\secp,Q,x)$ with input the security parameter $1^\secp$, circuit $Q$, and classical string $x$.
\end{itemize}

At the end of the interaction, the server outputs a state on register $W$ and the client outputs classical strings $(r,s)$. The protocol must satisfy the following properties.
\begin{itemize}
    \item \textbf{Correctness.} For any circuit $Q$ and input $x$, let $\mathsf{IDEAL}[Q,x]$ be the map from $V \to W$ defined by $Q(x,\cdot)$, and let $\mathsf{REAL}[Q,x]_\secp$ be the map from $V \to W$ induced by running the protocol \[(W,(r,s)) \gets \langle S(1^\secp,Q,V),C(1^\secp,Q,x)\rangle\] and then applying $X^r Z^s$ to $W$. Then for any $Q$ and $x$,
    \[D_\diamond\left(\mathsf{REAL}[Q,x]_\secp, \mathsf{IDEAL}[Q,x]\right) = \negl(\secp).\]



    \item \textbf{Security.} For any circuit $Q$, QPT adversary $\{\Adv_\secp\}_{\secp \in \bbN}$, and two strings $x_0,x_1$, it holds that 
    \begin{align*}&\bigg| \Pr\left[b_\Adv = 0 :  (b_\Adv,(r,s)) \gets \langle \Adv_\secp,C(1^\secp,Q,x_0)\rangle\right] \\ &-\Pr\left[b_\Adv = 0 :  (b_\Adv,(r,s)) \gets \langle \Adv_\secp,C(1^\secp,Q,x_1)\rangle\right] \bigg| = \negl(\secp), 
    \end{align*}
    where $b_\Adv$ denotes a single bit output by $\Adv_\secp$ after the interaction (which could result from an arbitrary QPT operation applied to its state after the interaction).
\end{itemize}
\end{definition}

\begin{remark}\label{remark:classical-output}
    Note that the above definition implies that if $Q$ has a \emph{classical} output, then the client can obtain this output with one extra message from the server. That is, suppose $Q : \{0,1\}^* \times \cH_V \to \{0,1\}^*$. Then at the conclusion of the protocol defined above, the server has a classical output $y \oplus r$, and the client has the one-time pad key $r$ (note that $s$ is irrelevant once the output has been measured in the standard basis). Then, the server returns $y \oplus r$ to the client, who recovers the output $y$. In this case, we denote the protocol 
    \[y \gets \langle S(1^\secp,Q,V),C(1^\secp,Q,x)\rangle,\] where $y$ is the client's output.
\end{remark}

\paragraph{OSP implies Blind Classical Delegation.} We will use the following theorem from \cite{bartusek2025power}, along with the following claim.
\begin{theorem}
\label{thm:osp_imply_blind}
    OSP (\cref{def:OSP}) implies blind classical delegation of quantum computation (\cref{def:blind-computation}).
\end{theorem}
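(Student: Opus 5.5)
The statement to prove is \cref{thm:osp_imply_blind}: OSP implies blind classical delegation of quantum computation in the sense of \cref{def:blind-computation}. The plan is to go through an intermediate object, verifiable or oblivious preparation of many BB84 states, and then run the standard quantum one-time-pad delegation protocol. In that protocol the server holds the input register $V$ and a supply of ``magic'' single-qubit resource states whose preparation parameters are known only to the client.

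\textbf{Step 1: obtaining the resource states.} First, invoke OSP many times in parallel (or sequentially) with client-chosen basis bits $\theta_i$. This gives the server BB84 states $H^{\theta_i}\ket{x_i}$ for which the client knows $(x_i,\theta_i)$. By the security property of \cref{def:OSP} and a hybrid argument over the polynomially many invocations, the server's view is computationally independent of all the $\theta_i$.

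\textbf{Step 2: building the delegation from a universal gate set.} Write $Q(x,\cdot)$ as a circuit over Clifford gates plus $T$ gates, and hardwire the classical input $x$ through classically controlled gates. One option for encoding the choice of controlled gate blindly is to use a universal circuit and have the client's private bits select which gate is applied. Then:
\begin{itemize}
\item \emph{Input encryption.} The client one-time-pads the server's input register $V$ with keys $(r,s)$ by teleporting $V$ through BB84-derived resources. Equivalently, the client treats the Pauli frame as initially unknown and tracks keys symbolically.
\item \emph{Clifford gates.} These commute with the Pauli one-time pad up to a key update the client computes classically.
\item \emph{$T$ gates.} Each $T$ gate is implemented by gate teleportation with a resource state of the form $Z^aP^b\ket{+}$ or a $\ket{+_\theta}$-type state. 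The server measures and reports a classical outcome, and the client sends a correction bit that is one-time-padded by the hidden preparation bits. Hence each transcript message is uniformly distributed given the server's view.
\end{itemize}
Gates in the same layer are processed in parallel, so the number of rounds is proportional to the $T$-depth.

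\textbf{Step 3: correctness and security.} Correctness follows from the teleportation identities: the final register equals $X^rZ^s\,Q(x,V)$ with the client's tracked keys $(r,s)$, up to negligible error inherited from the correctness of OSP. For security, use a hybrid argument. In each hybrid, replace one OSP instance's hidden basis bit with an independent one. Because every client message is then masked by a uniformly random, unused pad bit, the transcript distribution becomes independent of $x$. Swapping $x_0$ for $x_1$ is therefore indistinguishable, and the total loss is a polynomial number of negligible OSP advantages.

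\textbf{Main obstacle.} The hardest part is that OSP only provides BB84 states, which carry a hidden basis bit. The $T$-gate gadget, however, needs states hiding a phase in $\{0,\pi/4,\dots\}$. I would first try to combine two BB84 states with a CNOT and measurement, in the style of the known BB84-to-$\ket{+_\theta}$ conversions. I would then verify that the resulting angle stays uniformly hidden under OSP indistinguishability, using only the single-bit indistinguishability of \cref{def:OSP} rather than any simulation-based guarantee.
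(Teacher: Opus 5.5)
The paper does not prove this statement; it imports it from \cite{bartusek2025power}, and reads the $T$-depth round count in \cref{claim:blind_depend_depth} off that construction. Your template is the natural one: a Pauli one-time pad, Clifford key updates, one gadget per $T$ gate, rounds proportional to $T$-depth, and hybrids over OSP calls. However, there is a genuine gap exactly at the step you call the main obstacle, and the fix you propose would fail. A CNOT followed by standard- or Hadamard-basis measurements on BB84 states is a Clifford circuit with Pauli measurements acting on stabilizer states. Every state it produces is therefore again a stabilizer state, and the only equatorial stabilizer states are $\ket{\pm},\ket{\pm i}$. No such combination yields a phase that is an odd multiple of $\pi/4$, let alone a uniformly hidden one in $\{k\pi/4\}$.

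The missing observation is that no $\pi/4$ phases are needed.
\begin{itemize}
\item \emph{The only secret-dependent step.} Applying $T$ to a padded qubit gives $T X^aZ^b\ket{\psi} \propto P^a X^aZ^b T\ket{\psi}$. So the only secret-dependent operation is a $P^{a}$ correction, after which the state is again Pauli-padded.
\item \emph{The gadget.} Teleport through $Z^{c}P^{a}\ket{+}$: apply a CNOT from the resource qubit to the data qubit, then measure the data qubit in the standard basis to get $m$. The resource qubit now holds $X^mZ^{c\oplus ma}P^a$ applied to the data.
\item \emph{Obtaining the resource from OSP.} This resource is a public Clifford image of an OSP output, since $HP\,H^{a}\ket{c} \propto Z^{c\oplus a}P^{a}\ket{+}$. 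The client can run OSP with sender input equal to the current $X$-key $a$, adaptively once per $T$-layer; this is where the $T$-depth round count comes from. Alternatively, it can use a pre-run OSP with random basis $\theta$ and send $a\oplus\theta$, as in your Step 2.
\end{itemize}

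Two smaller repairs are also needed.
\begin{itemize}
\item \emph{Hiding $x$.} Do not hide $x$ by letting private bits select gates of a universal circuit: hiding gate identities is the UBQC-style requirement that actually calls for $\pi/4$ angles. Instead, the client sends $x\oplus r$ and the server prepares $\ket{x\oplus r}=X^r\ket{x}$. Classical control by $x$ then becomes quantum control, compiled into Clifford$+T$.
\item \emph{The server's input.} Leave $V$ with the known keys $(0,0)$. Treating the initial Pauli frame as unknown is not an option, because the client must know it in order to track keys.
\item \emph{Order of the hybrids.} Each basis bit feeds later key updates, so swap OSP inputs starting from the last $T$-layer and working backwards. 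This way the reduction can simulate every later round without knowing the challenged bit.
\end{itemize}
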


\begin{claim}
\label{claim:blind_depend_depth}
    The round complexity of the blind classical delegation of quantum computation protocol from \cite{bartusek2025power} is $\poly(D, \lambda)$, where $D$ is the depth of the quantum circuit $Q$ (in fact, the $T$-depth) and $\lambda$ is the security parameter.
\end{claim}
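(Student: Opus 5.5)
This claim is about the specific blind delegation protocol of \cite{bartusek2025power}, so the plan is to unfold its construction and count rounds gate layer by gate layer. That protocol follows the standard quantum-one-time-pad template. The server holds the input register $V$ (and the classical input $x$, encoded into qubits) under a Pauli one-time pad with keys known only to the client. The encoding of $x$ costs a constant number of rounds, since the client can send a one-time padded classical string, or use OSP-prepared BB84 states run in parallel. The circuit $Q$ is then written over the Clifford$+T$ gate set and evaluated layer by layer.

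\textbf{Clifford gates.} These need no interaction. The client updates the Pauli keys locally using the Clifford conjugation rules, and the server applies the gate directly to the padded state. Any sequence of Clifford layers therefore adds zero rounds.

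\textbf{$T$-layers.} Applying $T$ to a state padded by $X^aZ^b$ produces an unwanted $P^a$ correction. The protocol corrects this with a gadget that consumes a \emph{magic} or BB84-type resource state, prepared obliviously via OSP (Definition~\ref{def:OSP}), and then needs one round in which the client sends a padded correction bit. Two facts keep the cost per layer fixed:
\begin{itemize}
\item all the OSP instances needed for one $T$-layer can be run in parallel, so they cost $\poly(\secp)$ rounds in total (a constant number of rounds per OSP instance);
\item the corrections for all $T$ gates in the same layer are independent, so they fit in a single message exchange.
\end{itemize}
Each $T$-layer therefore costs $\poly(\secp)$ rounds, independent of the width of $Q$.

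\textbf{Summing the bound.} Add the constant setup cost, $D$ times the per-layer cost, and the final message that releases the output (Remark~\ref{remark:classical-output}). This gives $D\cdot\poly(\secp)+O(1)=\poly(D,\secp)$ rounds. The total communication grows with the width of $Q$, but the round count does not, and that is all the claim asserts.

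\textbf{Main obstacle.} The step needing the most care is confirming that, in the \cite{bartusek2025power} instantiation, the resource states and corrections for an entire $T$-layer really can be batched. In particular, I need to check that OSP composes in parallel without breaking blindness, and that no gadget forces the gates within a layer to be processed sequentially.

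I would first look up their round accounting in \cite[Section 6.3]{bartusek2025power}. As a fallback, one could prepare all OSP resource states up front in one parallel batch, leaving only one correction round per $T$-layer. Security is already given by Theorem~\ref{thm:osp_imply_blind}, so only this counting needs to be verified.
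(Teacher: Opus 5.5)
Your proposal takes the same approach as the paper, whose entire justification is that the bound follows by inspecting the protocol of \cite[Section 6.3]{bartusek2025power}; your layer-by-layer count is what that inspection amounts to (Clifford layers free via Pauli-key updates, $\poly(\secp)$ rounds per $T$-layer with a layer's OSP instances and corrections batched). One caution: your fallback of moving all OSP executions up front changes the protocol, so its blindness would have to be re-derived rather than read off from Theorem~\ref{thm:osp_imply_blind}, though the claim as stated only needs the per-layer batching check you already flagged.
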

This is a simple observation that follows by inspecting the protocol from \cite[Section 6.3]{bartusek2025power}. Since we do not use the protocol in any non-black-box manner, we omit the details here.

\paragraph{Generalized KLVY Compiler.} Finally, we state some definitions, followed by the statement of the generalized KLVY compiler, where in place of a QFHE protocol we use any blind classical delegation of quantum computation protocol (\cref{def:blind-computation}).

\begin{definition}[Nonlocal game]
    A \emph{nonlocal game} $G = (Q,V)$ is specified by a distribution $Q$ over pairs $(x,y) \in \{0,1\}^{n_1} \times \{0,1\}^{n_2}$ and a verification predicate $V(x,y,a,b) \in \{0,1\}$, where $a \in \{0,1\}^{m_1}$ and $b \in \{0,1\}^{m_2}$. A \emph{family} of nonlocal games $\cG = \{\cG_\secp\}_{\secp \in \bbN}$ is a set of games parameterized by $\secp$, where each $\cG_\secp$ is itself a set of games $G \in \cG_\secp$. Each game $G \in \cG$ is defined by a distribution $Q_G$ over pairs $(x,y) \in \{0,1\}^{n_{1,G}} \times \{0,1\}^{n_{2,G}}$ and a verification predicate $V_G(x,y,a,b) \in \{0,1\}$, where $a \in \{0,1\}^{m_{1,G}}$ and $b \in \{0,1\}^{m_{2,G}}$. For any game $G \in \cG$, we define $\secp_G$ to be the choice of $\secp$ such that $G \in \cG_\secp$. We say that the family of games is \emph{efficient} if there exists a polynomial $p(\cdot)$ and a procedure that, for any $G \in \cG$, samples from $Q_G$ and computes $V_G$ in time $p(\secp_G)$.
\end{definition}

\begin{definition}[Non-local Strategy]
\label{def:nonlocal_strategy}
 A \emph{nonlocal strategy} $\mathscr{S}$ for game $G$ consists of the following.
    \begin{itemize}
        \item A state $\ket{\psi} \in \cH_\cA \otimes \cH_\cB$.
        \item For every $x \in \{0,1\}^{n_1}$, a projective measurement $\{A_a^x\}_{a}$ acting on $\cH_A$ with outcomes $a \in \{0,1\}^{m_1}$.
        \item For every $y \in \{0,1\}^{n_2}$, a projective measurement $\{B_b^y\}_{b}$ acting on $\cH_B$ with outcomes $b \in \{0,1\}^{m_2}$.
    \end{itemize}
    The \emph{value} of this strategy is given by 
    \[\omega(G,\mathscr{S}) \coloneqq \E_{(x,y) \gets Q}\sum_{a,b} V(x,y,a,b) \bra{\psi}A_a^x \otimes B_b^y\ket{\psi}.\]


    A strategy $\mathscr{S}$ for a \emph{family} of games $\cG$ consists of a strategy

    \[\mathscr{S}_G = \left(\ket{\psi_G},\{A^x_{a,G}\}_a,\{B^y_{b,G}\}_b\right)\]

    for each $G \in \cG$. We say that $\mathscr{S}$ is \emph{efficient} if $\ket{\psi}$ is QPT-preparable and $\{A_{a,G}^x\}_a$ and $\{B_{b,G}^y\}_b$ are QPT-implementable.
    
\end{definition}

\begin{definition}[Generalized KLVY Compiler]
\label{def:generalized_KLVY-compiler}
    Let $\cG$ be a family of nonlocal games, let $\Pi = \langle S,C \rangle$ be a blind classical delegation of quantum computation protocol, and let $\mathscr{S}$ be an efficient nonlocal strategy for $\cG$. For each $G \in \cG$, let $A_G : \{0,1\}^{n_{1,G}} \times \cH_{\cA} \to \{0,1\}^{m_{1,G}}$ be the QPT circuit that performs the Alice measurement of $\mathscr{S}_G$ and $B_G : \{0,1\}^{n_{2,G}} \times \cH_{\cB} \to \{0,1\}^{m_{2,G}}$ be the QPT circuit that performs the Bob measurement of $\mathscr{S}_G$. The \emph{KLVY-compiled} protocol $\KLVY[\cG,\Pi,\mathscr{S}]$ is an interaction between a QPT prover $\Prove$ and a PPT verifier $\Ver$ that is parameterized by a game $G \in \cG_\secp \subset \cG$, and operates as follows.
    \begin{enumerate}
        \item The verifier samples $(x,y) \gets Q_G$.
        \item Let $\ket{\psi_G} \in \cH_{\cA} \otimes \cH_{\cB}$ be the initial state used in $\mathscr{S}_G$. The prover and verifier engage in a blind classical delegation of quantum computation protocol (with classical output, see \cref{remark:classical-output}) \[a \gets \langle S(1^\secp,A_G,\ket{\psi_G}), C(1^\secp,A_G,x)\rangle,\] with the prover playing the role of the server $S$ and the verifier playing the role of the client $C$.
        \item The verifier sends $y$ to the prover.
        \item The prover runs $b \gets B_G(y,\cB)$ and sends $b$ to the verifier.
        \item The verifier outputs $V_G(x,y,a,b)$.
    \end{enumerate}
    This interaction is denoted \[b_\Ver \gets \langle \Prove(1^\secp,G),\Ver(1^\secp,G)\rangle,\] where $b_\Ver$ denotes the bit output by $\Ver$.

    The \emph{completeness} of $\KLVY[\cG,\Pi,\mathscr{S}]$ is defined by a function \[c(G) \coloneqq \Pr\left[b_\Ver = 1 : b_\Ver \gets \langle \Prove(1^\secp,G),\Ver(1^\secp,G)\rangle\right],\] and $\KLVY[\cG,\Pi,\mathscr{S}]$ has soundness $s(G)$ if for any QPT adversary $\{\Adv_\secp\}_{\secp \in \bbN}$
    \[\Pr\left[b_\Ver = 1 : b_\Ver \gets \langle \Adv_{\secp_G}(1^\secp,G),\Ver(1^\secp,G)\rangle\right] \leq s(G).\]
    
\end{definition}